\newtheorem{theorem}{Theorem}
\newtheorem{lemma}{Lemma}
\newtheorem{remark}{Remark}
\newenvironment{proof}[1][Proof]{\noindent \textbf{#1.} }{\  \rule{0.5em}{0.5em}}
\newcommand{\HRule}{\rule{\linewidth}{0.5mm}}
\newcommand{\BRule}{\rule{\linewidth}{0.25mm}}
\title{\HRule \\[0.1mm]
Model Checks in a Kernel Ridge Regression Framework \thanks{I acknowledge support from the Research Development Fund (RDF-23-02-022) of the Xi'an Jiaotong-Liverpool University. }
\\[0.1mm]
\BRule}
\author{Yuhao Li\thanks{yuhao.li@xjtlu.edu.cn} \\ \small Xi'an Jiaotong-Liverpool University}
\date{} 
\begin{document}

\maketitle

\begin{abstract}
We propose new reproducing kernel-based tests for model checking in conditional moment restriction models. By regressing estimated residuals on kernel functions via kernel ridge regression (KRR), we obtain a coefficient function in a reproducing kernel Hilbert space (RKHS) that is zero if and only if the model is correctly specified. We introduce two classes of test statistics: (i) projection-based tests, using RKHS inner products to capture global deviations, and (ii) random location tests, evaluating the KRR estimator at randomly chosen covariate points to detect local departures. The tests are consistent against fixed alternatives and sensitive to local alternatives at the $n^{-1/2}$ rate. When nuisance parameters are estimated, Neyman orthogonality projections ensure valid inference without repeated estimation in bootstrap samples. The random location tests are interpretable and can visualize model misspecification. Simulations show strong power and size control, especially in higher dimensions, outperforming existing methods. 
\end{abstract}

\vspace{1em} 
\noindent \textbf{Keywords:} Conditional moment restriction, Kernel ridge regression, Model checking,  Reproducing kernel Hilbert space

\newpage \pagenumbering{arabic} \setcounter{page}{1} \setcounter{footnote}{0}

\section{Introduction}
\label{sec:intro}
A wide range of models in the social sciences can be described by conditional moment restrictions of the form
\begin{align*}
    & Y = \mathcal{M}_{\theta_0}(X) + \varepsilon_0, \\ 
    & \mathbb{E}(\varepsilon_0 \mid X) = 0 \quad \text{almost surely (a.s.), for some unknown } \theta_0 \in \Theta,
\end{align*}
where the random vector $S = (Y, X)$ is assumed to be strictly stationary with probability measure $\mathbb{P}_S$, $Y \in \mathcal{Y} \subset \mathbb{R}^q$, and $X \in \mathcal{X} \subset \mathbb{R}^d$. The residual function $\varepsilon: \mathcal{S} \times \Theta \to \mathbb{R}^q$ is known up to the finite-dimensional parameter $\theta_0 \in \Theta$, with $\Theta \subset \mathbb{R}^d$. For clarity, we focus on the univariate case $q=1$; the extension to $q > 1$ is straightforward, as the relevant test statistics are simply summed over the components of $\varepsilon_0$ (see Remark~\ref{remark:qgeq1} in Section~\ref{sec:stat}). Throughout, we denote the residual for observation $i$ as $\varepsilon(s_i; \theta)$, or simply $\varepsilon_{\theta, i}$ when the context is clear.

This framework encompasses many important models, including linear and nonlinear conditional mean regression, quantile regression, treatment effect models, and instrumental variables regression, among others.

There has been longstanding interest in developing goodness-of-fit tests and model checks for such models in statistics and econometrics. The objective is to test
\[
    H_0: \mathbb{E}(\varepsilon_0 \mid X) = 0 \quad \text{vs.} \quad H_1: \mathbb{P}\left(\mathbb{E}(\varepsilon_0 \mid X) = 0\right) < 1, \quad \forall \theta \in \Theta.
\]
A straightforward approach to testing this hypothesis is to construct a chi-square statistic based on a finite set of moment conditions. For instance, one could regress the estimated residuals on the covariate $X$:
\[
    \varepsilon_{\hat \theta} = X^\top \omega + e
\]
and test the joint null hypothesis $\omega = \boldsymbol{0}$ versus the alternative $\omega \neq \boldsymbol{0}$. Here, $\hat \theta$ is a consistent estimator.

However, this approach has two major drawbacks: (i) the test is inconsistent, as it can only detect linear deviations represented by a finite set of moments; and (ii) the use of the estimated parameter $\hat \theta$ complicates the null distribution, making the resulting test statistic non-pivotal. As a result, modern goodness-of-fit testing techniques,which will be reviewed later, have moved beyond this simple idea, often introducing additional complexity that can make practical implementation and interpretation more challenging for practitioners.

In this paper, we build on the simple regression idea, but instead of relying on a finite set of moment conditions, we map the covariates into an infinite-dimensional reproducing kernel Hilbert space (RKHS) $\mathcal{H}_k$ using a reproducing kernel $k(\cdot,\cdot)$, i.e., $k: \mathcal{X} \to \mathcal{H}_k$. The corresponding regression model becomes:
\[
    \varepsilon_{\hat \theta} = \langle k(X,\cdot), w^*(\hat \theta) \rangle_{\mathcal{H}_k} + e,
\]
where $\mathbb{E}(e|X)=0$ by construction, $k(X,\cdot)$ is the kernel function evaluated at the observed data point, $w^*(\hat \theta) \in \mathcal{H}_k$ is the population coefficient function indexed by $\hat \theta$, and $\langle \cdot, \cdot \rangle_{\mathcal{H}_k}$ denotes the RKHS inner product. As shown in the proof of Lemma 1 (see Appendix~\ref{proof:lemma1}),
\[
    w^*(\hat \theta) = C^{-1} \mathbb{E}\left[\varepsilon_{\hat \theta} k(X,\cdot)\right],
\]
where $C = \mathbb{E}[k(X,\cdot) \otimes k(X,\cdot)]$ is the second moment operator, with $\otimes$ being the tensor product. However, due to the infinite-dimensionality, $C$ is generally not invertible, so we regularize it in the sense of Tikhonov regularization:
\[
    C_\lambda = C + \lambda \mathcal{I},
\]
where $\mathcal{I}$ is the identity operator on $\mathcal{H}_k$ and $\lambda>0$ is fixed.

The regularized population coefficient function $w^*_\lambda(\hat \theta)$ is defined as
\[
    w^*_\lambda(\hat \theta) = C_\lambda^{-1} \mathbb{E}\left[\varepsilon_{\hat \theta} k(X,\cdot)\right],
\]
This function coincides with the coefficient in kernel ridge regression (KRR) with fixed $\lambda$, a widely used method in machine learning for modeling complex relationships. Given data $\{(x_i, \varepsilon_{\hat \theta,i})\}_{i=1}^n$, the KRR estimator is
\[
    \hat w = \boldsymbol{\varepsilon}_{\hat \theta}^\top \left(\boldsymbol{K} + n \lambda \boldsymbol{I}\right)^{-1} \boldsymbol{\Phi},
\]
where $\boldsymbol{I}$ is the $n \times n$ identity matrix, $\boldsymbol{\varepsilon}_{\hat \theta} = (\varepsilon_{\hat \theta,1}, \ldots, \varepsilon_{\hat \theta,n})^\top$ is the vector of estimated residuals, $\boldsymbol{\Phi} = (k(x_1, \cdot), \ldots, k(x_n, \cdot))^\top$ is the vector of kernel functions evaluated at the observed data points, and $\boldsymbol{K} = (k(x_i, x_j))_{i,j=1}^n$ is the kernel matrix.

We will show in Section~\ref{sec:stat} that the null hypothesis holds if and only if $w^*_\lambda(\theta_0) = w^*_\lambda = \boldsymbol{0}$ in $\mathcal{H}_k$. The intuition is as follows: suppose the true relationship between the residual and the covariates is
\[
    \varepsilon_{0} = f^*(X) + e^*,
\]
where $f^* \in C(\mathcal{X})$ (the space of continuous functions on $\mathcal{X}$) and $e^*$ is a random error term. If the kernel is universal, the associated RKHS $\mathcal{H}_k$ is dense in $C(\mathcal{X})$~\citep{micchelli2006universal}, so $w^*_\lambda$ can approximate $f^*$ arbitrarily well. Under the null, $f^* = \boldsymbol{0}$, which implies $w^*_\lambda = \boldsymbol{0}$ in the RKHS. Conversely, if $w^*_\lambda \neq \boldsymbol{0}$, this signals model misspecification and a nonzero expected residual.

Focusing on $w^*_\lambda$ offers two main advantages: (i) as a natural infinite-dimensional analogue of the regression coefficient, it provides a powerful means to capture and test for deviations from the model; and (ii) it yields an interpretable and visualizable summary of model inadequacy, as the structure of $w^*_\lambda$ in the RKHS reflects the nature of the misspecification.

To illustrate the first point, we use Lemma 1 in Section~\ref{sec:stat}, which shows that $w^*_\lambda$ can be represented as
\[
    w^*_\lambda = \sum_{i=1}^\infty \frac{\mu_i}{\mu_i + \lambda} \mathbb{E}(\varepsilon_{0} \phi_i(X)) \phi_i,
\]
where $\{\phi_i\}_{i=1}^\infty$ and $\{\mu_i\}_{i=1}^\infty$ are the eigenfunctions and eigenvalues of the integral operator associated with the kernel. The eigenvalues are positive and decay to zero at a certain rate. The term $\mathbb{E}(\varepsilon_0 \phi_i(X))$ quantifies the deviation in the direction of the $i$-th eigenfunction. When deviations occur in high-frequency directions (i.e., those with small eigenvalues), the corresponding weights $\mu_i / (\mu_i + \lambda)$ can still be substantial if the regularization parameter $\lambda$ is chosen appropriately. This contrasts with the test statistics proposed by \cite{muandet2020kernel,escanciano2024gaussian}, where the weights for each directional deviation are simply the eigenvalues themselves (see Equation (18) in \cite{escanciano2024gaussian}). However, this does not imply that tests based on $w^*_\lambda$ are always superior. For example, if deviations occur in low-frequency directions (with large eigenvalues), the weights $\mu_i / (\mu_i + \lambda)$ will be less than one, whereas the tests of \cite{muandet2020kernel,escanciano2024gaussian} could assign larger weights to these directions. Thus, the relative performance depends on the nature of the deviation.

The second point, that $w^*_\lambda$ is interpretable and can be visualized, is particularly valuable for practitioners. While most existing goodness-of-fit tests yield only a binary decision (reject or not reject the null hypothesis), researchers are often interested in diagnosing the nature of model misspecification: specifically, identifying regions in the covariate space where the model deviates most or least from the data. Lemmas 2 and 3 in Section~\ref{sec:stat} show that if the kernel is analytic, one can construct test statistics by evaluating $w^*_\lambda$ at randomly chosen locations drawn from any distribution with a Lebesgue density. By sampling a sufficient number of such locations, one can generate visualizations that reveal the ``bumps'' or patterns of deviation in the covariate space, as illustrated in Figure~\ref{fig:random_location}. This approach is analogous to using residual plots to assess model fit, providing an intuitive and informative diagnostic tool.

Building on $w^*_\lambda$, we introduce two classes of model checking procedures: (i) projection-based tests, which assess the projection of $w^*_\lambda$ onto functions in $\mathcal{H}_k$ to capture global deviations from the null; and (ii) random location tests, which exploit the property that if $w^*_\lambda$ is analytic and equals zero in $\mathcal{H}_k$, then $w^*_\lambda(v) = 0$ almost surely for any $v \in \mathcal{X}$ drawn from a distribution with a Lebesgue density. Here, $w^*_\lambda(v) $ serves as a metric that captures local deviations from the null. 

For the projection-based tests, we consider two specific projections: (i) onto $w^*_\lambda$ itself: $\langle w^*_\lambda, w^*_\lambda\rangle_{\mathcal{H}_k} $ and (ii) onto the mean embedding of the residual $m^* = \mathbb{E}(\varepsilon_0 k(X,\cdot))$: $\langle w^*_\lambda, m^* \rangle_{\mathcal{H}_k}$. The resulting two test statistics share identical structure expect for the weights applied to the directional deviations. For the random location tests with $J$ location points, we propose two test statistics that are based on $\sum_{j=1}^{J}(w^*_\lambda(v_j))^2$ and $(\sum_{j=1}^{J}w^*_\lambda(v_j))^2$.

When the residuals are computed using the estimated parameter $\hat \theta$, the resulting test statistics would have additional complexity in their null distributions due to the estimation effect. To address this, we follow the Neyman orthogonality approach, as advocated by \cite{escanciano2014specification,escanciano2024gaussian,sant2019specification}. Specifically, we introduce a projection operator $\boldsymbol{\Pi}$ that acts on the residual function $\varepsilon_{\theta}$, and redefine the key functions $w^*_\lambda$ and $m^*$ as $w^*_{\lambda,\perp}$ and $m^*_{\perp}$. This modification ensures that the resulting functions are locally insensitive to small perturbations in the nuisance parameter $\theta$ around $\theta_0$, thereby eliminating the first-order impact of parameter estimation on the test statistics.

We conclude this section with a final remark regarding kernel choice. For reproducing kernel-based tests, the selection of the kernel is critical for finite-sample performance. In the machine learning literature, particularly for nonparametric two-sample testing, considerable effort has been devoted to identifying optimal kernels \citep{liu2021learningdeepkernelsnonparametric,sutherland2021generativemodelsmodelcriticism,gretton2012optimal}. These studies typically focus on maximizing the signal-to-noise ratio of the test statistic, leveraging the asymptotic properties of non-degenerate V- or U-statistics. While similar principles likely apply to model checking tests based on RKHS methods, the literature on optimal kernel selection in this context remains limited. Our KRR-based approach offers a practical alternative: kernel selection can be guided by minimizing the regression error using standard cross-validation procedures for KRR. 

The remainder of the paper is organized as follows. Section~\ref{sec:lit} reviews the relevant literature and situates our approach within the context of existing methods. In Section~\ref{sec:stat}, we present the main theoretical results, establishing the equivalence between the null hypothesis and the vanishing of $w^*_\lambda$, introducing the proposed test statistics, and deriving their asymptotic properties under the assumption that the nuisance parameter $\theta_0$ is known. Section~\ref{sec:estimation_effect} extends these results to the more realistic setting where $\theta_0$ is unknown and must be estimated, and formally introduces the projection operator to eliminate the first-order effects of parameter estimation. Section~\ref{sec:simulation} provides simulation evidence on the finite-sample performance of the proposed tests, demonstrating that they are competitive with existing methods in moderate dimensions and outperform them as the covariate dimension increases. This section also includes an empirical application to the well-known National Supported Work (NSW) dataset. Section~\ref{sec:conclusion} concludes. All proofs are provided in the Appendix.

\section{Literature Review}
\label{sec:lit}
\textit{Omnibus Tests for Model Checks:} Among practitioners, the most widely used specification tests are M-tests, as introduced by \cite{newey1985generalized,newey1985maximum,tauchen1985diagnostic,wooldridge1990unified}. These tests assess a finite set of unconditional moment restrictions implied by the conditional moment model. However, M-tests are inherently ``directional'': they may fail to detect certain types of misspecification, as they only probe a limited set of directions in the space of alternatives.

Omnibus tests, by contrast, are designed to be consistent against any form of misspecification and are particularly valuable when the nature of potential model failure is unknown. There are two main approaches to constructing omnibus tests. The first compares the fitted parametric model to a nonparametric regression estimate, typically using smoothing techniques. Representative works in this category include \cite{eubank1990testing,hardle1993comparing,hong1995consistent,zheng1996consistent,ellison2000simple}, among others. The second approach is based on integral transforms of the residuals, rather than the residuals themselves. Tests adopt this principle are often called the Integrated Conditional Moment (ICM) tests. Notable contributions in this direction include \cite{bierens1982consistent,bierens1997asymptotic,stute1997nonparametric,delgado2006consistent}. For a comprehensive review of these two strands of the literature, see \cite{gonzalez2013updated}.

Our work is closely related to the orthogonal series regression approach of \cite{eubank1990testing}, who proposed a test statistic based on regressing the estimated residuals $\varepsilon_{\hat \theta}$ on an orthogonal basis expansion of the covariates. This yields a nonparametric estimator of $\mathbb{E}(\varepsilon_0|X)$. Their test can be interpreted as a joint F-test on the coefficients of the first $p_n$ terms in the series, with $p_n$ growing with the sample size. A limitation of this method is that it can only detect local alternatives of order $O(p_n^{1/4} / n^{1/2})$, which is slower than the $n^{-1/2}$ rate detectable by our proposed tests. However, our approach comes at the cost of non-pivotal null distributions, necessitating the use of multiplier bootstrap procedures for inference.

Our methodology is also related to the integral transformation approach. In fact, the RKHS is isometrically isomorphic to the space of square-integrable functions, and the RKHS inner product corresponds to the $L_2$ integral \citep{carrasco2007linear}. For shift-invariant kernels, this connection is made explicit via Bochner's theorem (see Theorem 2.1 in \cite{muandet2020kernel} for details).

\textit{Tests based on RKHS Tools:} Classical ICM test statistics are formulated as V- or U-statistics using a Gaussian kernel with a fixed bandwidth (typically $0.5$). \cite{muandet2020kernel} extended the ICM framework by allowing for general kernel choices, but their approach does not account for the estimation effect of nuisance parameters. \cite{escanciano2024gaussian} addressed this limitation by employing a Gaussian process approach, deriving similar statistics but with a Neyman orthogonal kernel.

Our projection-based test statistics can be viewed as a modification of those proposed in these works. Specifically, existing kernel-based statistics assign weights to directional deviations according to the eigenvalues $\mu_i$ of the associated integral operator. In contrast, our approach assigns weights based on the ratios $\mu_i / (\mu_i + \lambda)$ or $\mu_i / (\mu_i + \lambda)^2$. This distinction is important, as it enables our tests to be more sensitive to deviations in high-frequency directions, potentially improving detection power in complex alternatives.

\textit{Tests Based on Random Locations:} Our random location test statistics are closely related to approaches developed for nonparametric two-sample testing \citep{chwialkowski2015fast,jitkrittum2016interpretable} and goodness-of-fit testing for distributional models \citep{jitkrittum2017linear}. A key advantage of these methods, which carries over to our framework, is their interpretability: random location tests not only provide a powerful means of detecting model misspecification, but also offer intuitive, visual diagnostics that highlight where and how the model fails to capture the underlying data structure.

\section{Test Statistics and Their Asymptotic Properties}
\label{sec:stat}
This section is organized into three parts. First, we establish the formal relationship between the population coefficient function $w^*_\lambda$ and the null hypothesis. Second, we analyze test statistics derived from projecting $w^*_\lambda$ onto functions in $\mathcal{H}_k$, examining two specific projections: (i) onto  $w^*_\lambda$ itself and (ii) onto the mean embedding of the residual $m^* = \mathbb{E}(\varepsilon_0 k(X,\cdot))$. Third, we investigate random location test statistics, which leverage the idea that when $w^*_\lambda=0$, then under mild regularity conditions, $w^*_\lambda(y)=0$ almost surely for any $y$ sampled from a distribution with a Lebesgue density function.

Throughout this section, we assume the true parameters $\theta_0$ are known, and denote the true residual vector as $\boldsymbol{\varepsilon_0} = \boldsymbol{Y} - \mathcal{M}_{\theta_0}(X) \in \mathbb{R}^n$. In the next section, we will introduce the Neyman orthogonality approach that eliminates the estimation effects.

\subsection{An Equivalence of the Null Hypothesis}
\label{sec:equivalence}
A central insight for model checking in the KRR framework is the equivalence between the null hypothesis $\mathbb{E}(\varepsilon_0 | X) = 0$ and the condition that the population coefficient function $w^*_\lambda = \boldsymbol{0}$ in $\mathcal{H}_k$. This equivalence underpins the use of its KRR estimator $\hat{w}$ as a diagnostic tool: if $\hat{w}$ is found to be significantly different from zero in the RKHS, it provides evidence against the adequacy of the specified model.

To formalize this, we introduce the following notation. Let $\boldsymbol{\Phi}$ and $\boldsymbol{K}$ be as defined in the Introduction. Denote by $\{\boldsymbol{u}_i\}_{i=1}^n$ and $\{\sigma^2_i\}_{i=1}^n$ the eigenvectors and eigenvalues of $\boldsymbol{K}$, respectively. We also define the integral operator $\boldsymbol{L}_k$ as
\[
    (\boldsymbol{L}_k f)(y)  = \int_{\mathcal{X}} k(x,y) f(x) \, \mathrm{d} \mathbb{P}_X(x)
\] 
where $\mathbb{P}_X$ is a probability measure on $\mathcal{X}$.  Let $\{\phi_i\}_{i=1}^n$ and $\{\mu_i\}_{i=1}^n$ be the eigenfunctions and eigenvalues of $\boldsymbol{L}_k$, respectively.  The eigenfunctions are orthonormal in $L_2(\mathbb{P}_X)$, the square-integral measurable function space with respect to the probability measure $\mathbb{P}_X$, and the eigenvalues are non-negative.

The vector $\boldsymbol{\Phi}$ admits the following singular value decomposition (SVD):
\[
    \boldsymbol{\Phi} = \sum_{i=1}^n   \boldsymbol{u}_i \sigma_i (\sqrt{\mu_i} \phi_i)
\]
Note that $\{\sqrt{\mu_i} \phi_i\}_{i \geq 1}$ are orthonormal bases in $\mathcal{H}_k$. This SVD result can be verified by a sanity check:
\begin{align*}
    \boldsymbol{K} & = \boldsymbol{\Phi} \boldsymbol{\Phi}^\top  = \sum_{i=1}^n \sigma_i^2 \mu_i \boldsymbol{u}_i \langle \phi_i, \phi_i \rangle_{\mathcal{H}_k} \boldsymbol{u}_i^\top \\
    & = \sum_{i=1}^n \sigma_i^2 \frac{\mu_i}{\mu_i} \boldsymbol{u}_i \boldsymbol{u}_i^\top  = \sum_{i=1}^n \sigma_i^2 \boldsymbol{u}_i \boldsymbol{u}_i^\top 
\end{align*}
here we use the RKHS inner product definition for $\phi_i$:
\[
    \langle \phi_i, \phi_i \rangle_{\mathcal{H}_k} =  \frac{1}{\mu_i}
\]

The KRR estimator $\hat{w}$ can be written as
\begin{align*}
    \hat w &= \boldsymbol{\varepsilon_0}^\top (\boldsymbol{\Phi} \boldsymbol{\Phi}^\top +n \lambda \boldsymbol{I})^{-1} \boldsymbol{\Phi} \\
    & = \boldsymbol{\varepsilon_0}^\top \sum_{i=1}^n (\sigma_i^2 + n \lambda)^{-1} \boldsymbol{u}_i \boldsymbol{u}_i^\top \sum_{j=1}^n \sigma_j \sqrt{\mu_j} \boldsymbol{u}_j  \phi_j \\
    & = \boldsymbol{\varepsilon_0}^\top \sum_{i=1}^n (\sigma_i^2 + n \lambda)^{-1}  \sigma_i \sqrt{\mu_i} \boldsymbol{u}_i  \phi_i \\
    &= \sum_{i=1}^n \left(\frac{\sigma_i^2}{n}+\lambda\right)^{-1} \frac{\sigma_i}{\sqrt{n}}\sqrt{\mu_i} \frac{1}{\sqrt{n}}  \boldsymbol{\varepsilon_0}^\top \boldsymbol{u}_i  \phi_i
\end{align*}

To analyze $w^*_\lambda$, we assume the following conditions hold:

\textbf{Assumption A.} (i) The reproducing kernel is universal and bounded in the sense that $\sup_{x \in \mathcal{X}} k(x,x) < \infty$; (ii)  the eigenvalues of $L_k$ satisfies $\sum_{i \geq 1}\mu_i <\infty$; (iii) $|\mathbb{E}(\varepsilon_0 \phi_i(X))| <M, \forall i \geq 1$; (iv) the kernel ridge regularization parameter satisfies $\lambda>0$ and is fixed. 

Condition (i) is a standard assumption in the kernel literature. Universality implies that the kernel can approximate any continuous function on $\mathcal{X}$ (i.e., the RKHS is dense in the space of continuous functions). Boundedness ensures that each eigenfunction of $L_k$ is also bounded. Condition (ii) is needed to ensure that the RKHS $\mathcal{H}_k$ is well-defined and that the eigenvalues of $L_k$ decay sufficiently fast. This condition is satisfied by most commonly used kernels, including the Gaussian kernel $k(x, y) = \exp(-\gamma\|x - y\|^2 )$ for some $\gamma > 0$. Condition (iii)  is a mild regularity condition on the residuals, ensuring that the projections onto the eigenfunctions are bounded. Condition (iv) would simplify our theoretical analysis. For the KRR estimator $\hat w$ to be consistent, the regularization parameter $\lambda$ should be shrinking to zero as the sample size $n$ increases at certain speed. However, we are not interested in the consistency of $\hat w$ in this paper, but rather in the ``flatness'' of this population coefficient function in the RKHS $\mathcal{H}_k$. Under the null, we expect $w^*_\lambda$ to be ``flat'' in the RKHS, i.e.,  $w^*_\lambda = \boldsymbol{0} \in \mathcal{H}_k$ (see Theorem 1). Fixing $\lambda$ simplifies the analysis and allows us to focus on the shape of the population coefficient function without the added complexity of a shrinking regularization parameter.

We now examine the structure of $w^*_\lambda$, which is provided in the following Lemma. 
\begin{lemma}
    Under Assumption A,  $w^*_\lambda$ can be written as, 
    \[
        w^*_\lambda = \sum_{i=1}^\infty \frac{\mu_i}{\mu_i + \lambda} \mathbb{E}(\varepsilon_0 \phi_i(X)) \phi_i,
    \]
    and
    \[
        || \hat w - w^*_\lambda||_{\mathcal{H}_k}\overset{p}{\longrightarrow} 0
    \] 
\end{lemma}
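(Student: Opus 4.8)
The plan is to establish the two displays separately: the series representation via a Mercer-type diagonalisation of the regularized second-moment operator, and the consistency statement via a resolvent-perturbation bound combined with a law of large numbers in the RKHS.

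\textbf{Series representation.} I would start from $w^*_\lambda=C_\lambda^{-1}m$ with $m=\mathbb{E}[\varepsilon_0 k(X,\cdot)]$, $C=\mathbb{E}[k(X,\cdot)\otimes k(X,\cdot)]$, $C_\lambda=C+\lambda\mathcal{I}$, and invoke Mercer's theorem (valid under Assumption~A(i)--(ii)): $k(x,\cdot)=\sum_{i\ge1}\sqrt{\mu_i}\,\phi_i(x)\,(\sqrt{\mu_i}\phi_i)$ with $\{\sqrt{\mu_i}\phi_i\}_{i\ge1}$ an orthonormal basis of $\mathcal{H}_k$. Reading off coordinates in that basis, the $(i,j)$ entry of $C$ is $\mathbb{E}[\sqrt{\mu_i}\phi_i(X)\sqrt{\mu_j}\phi_j(X)]=\mu_i\delta_{ij}$ by $L_2(\mathbb{P}_X)$-orthonormality of the $\phi_i$, so $C$ is diagonal with eigenvalues $\mu_i$ and $C_\lambda^{-1}$ with eigenvalues $(\mu_i+\lambda)^{-1}$. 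The $i$-th coordinate of $m$ is $\langle m,\sqrt{\mu_i}\phi_i\rangle_{\mathcal{H}_k}=\sqrt{\mu_i}\,\mathbb{E}[\varepsilon_0\phi_i(X)]$ by the reproducing property, so applying $C_\lambda^{-1}$ yields $w^*_\lambda=\sum_{i\ge1}\frac{\mu_i}{\mu_i+\lambda}\,\mathbb{E}[\varepsilon_0\phi_i(X)]\,\phi_i$. I would then confirm the series converges in $\mathcal{H}_k$ from
\[
\|w^*_\lambda\|_{\mathcal{H}_k}^2=\sum_{i\ge1}\frac{\mu_i}{(\mu_i+\lambda)^2}\bigl(\mathbb{E}[\varepsilon_0\phi_i(X)]\bigr)^2\le\frac{M^2}{\lambda^2}\sum_{i\ge1}\mu_i<\infty ,
\]
using Assumption~A(ii)--(iii).

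\textbf{Consistency.} First I would recast $\hat w$ in operator form: with $\hat C=\tfrac1n\sum_i k(x_i,\cdot)\otimes k(x_i,\cdot)$, $\hat C_\lambda=\hat C+\lambda\mathcal{I}$, $\hat m=\tfrac1n\sum_i\varepsilon_{0,i}k(x_i,\cdot)$, the push-through identity $(A^\top A+\lambda\mathcal{I})^{-1}A^\top=A^\top(AA^\top+\lambda I)^{-1}$ with $A=n^{-1/2}\boldsymbol{\Phi}$ gives $\hat w=\boldsymbol{\Phi}^\top(\boldsymbol{K}+n\lambda\boldsymbol{I})^{-1}\boldsymbol{\varepsilon_0}=\hat C_\lambda^{-1}\hat m$. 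The resolvent identity $\hat C_\lambda^{-1}-C_\lambda^{-1}=\hat C_\lambda^{-1}(C-\hat C)C_\lambda^{-1}$ then produces
\[
\hat w-w^*_\lambda=\hat C_\lambda^{-1}(\hat m-m)+\hat C_\lambda^{-1}(C-\hat C)\,w^*_\lambda ,
\]
and, since $\hat C,C$ are positive semidefinite so that $\|\hat C_\lambda^{-1}\|_{\mathrm{op}}\le1/\lambda$, I would bound
\[
\|\hat w-w^*_\lambda\|_{\mathcal{H}_k}\le\tfrac1\lambda\|\hat m-m\|_{\mathcal{H}_k}+\tfrac1\lambda\|C-\hat C\|_{\mathrm{op}}\,\|w^*_\lambda\|_{\mathcal{H}_k}.
\]
Finally I would show $\|\hat m-m\|_{\mathcal{H}_k}\overset{p}{\to}0$ and $\|\hat C-C\|_{\mathrm{op}}\overset{p}{\to}0$: kernel boundedness (Assumption~A(i)) gives $\|\varepsilon_{0,i}k(x_i,\cdot)\|_{\mathcal{H}_k}\le\sqrt{\kappa}\,|\varepsilon_{0,i}|$ with $\kappa=\sup_x k(x,x)$ and $\|k(x_i,\cdot)\otimes k(x_i,\cdot)\|_{\mathrm{HS}}=k(x_i,x_i)\le\kappa$, so $\varepsilon_0k(X,\cdot)$ and $k(X,\cdot)\otimes k(X,\cdot)$ are integrable random elements of $\mathcal{H}_k$ and of the Hilbert--Schmidt class; a law of large numbers for the strictly stationary (ergodic/mixing) sequence $\{s_i\}$ then gives convergence of $\hat m$ in $\mathcal{H}_k$ and of $\hat C$ in Hilbert--Schmidt norm, hence in operator norm and in probability. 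Plugging these in, together with $\|w^*_\lambda\|_{\mathcal{H}_k}<\infty$ from the first part, closes the argument.

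\textbf{Main obstacle.} The delicate step is the last one: invoking a law of large numbers for Hilbert-space- and Hilbert--Schmidt-operator-valued random elements under only a stationarity hypothesis (which implicitly requires ergodicity or a mixing rate), and ensuring the convergence of $\hat C$ holds in a norm strong enough to control $\|\hat C_\lambda^{-1}-C_\lambda^{-1}\|_{\mathrm{op}}$ through the resolvent identity. The spectral expansion and the operator algebra are routine once the $1/\lambda$ resolvent bounds and the push-through identity are in hand.
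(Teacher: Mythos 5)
Your proposal is correct, and the first display is obtained exactly as in the paper: diagonalise $C$ and $C_\lambda$ in the Mercer basis $\{\sqrt{\mu_i}\phi_i\}$, read off the coordinates of $m=\mathbb{E}[\varepsilon_0 k(X,\cdot)]$ via the reproducing property, and apply $C_\lambda^{-1}$. For the consistency claim, however, you take a genuinely different and in some respects cleaner route. The paper works coordinate-wise through the empirical SVD: it writes $\hat w=\sum_{i=1}^n a_{n,i}b_{n,i}\sqrt{\mu_i}\phi_i$ with $a_{n,i}=(\sigma_i^2/n+\lambda)^{-1}\sigma_i/\sqrt{n}$ and $b_{n,i}=n^{-1/2}\boldsymbol{\varepsilon_0}^\top\boldsymbol{u}_i$, shows $a_{n,i}\to a_i$ and $b_{n,i}\to b_i=\mathbb{E}(\varepsilon_0\phi_i(X))$ for each fixed $i$ (using the eigenvalue perturbation bound $|\sigma_i^2/n-\mu_i|=O_p(n^{-1/2})$ from its appendix), and then controls the sum over $i\le n$ plus the tail $\sum_{j>n}a_j^2$. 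Your argument instead stays at the operator level: the push-through identity gives $\hat w=\hat C_\lambda^{-1}\hat m$, the resolvent identity splits $\hat w-w^*_\lambda$ into $\hat C_\lambda^{-1}(\hat m-m)+\hat C_\lambda^{-1}(C-\hat C)w^*_\lambda$, and the uniform bound $\|\hat C_\lambda^{-1}\|_{\mathrm{op}}\le 1/\lambda$ (available precisely because $\lambda$ is fixed, Assumption~A(iv)) reduces everything to $\|\hat m-m\|_{\mathcal{H}_k}\overset{p}{\to}0$ and $\|\hat C-C\|_{\mathrm{HS}}\overset{p}{\to}0$. What your approach buys is that it sidesteps the uniformity-in-$i$ issue lurking in the paper's term-by-term bound (the $O_p(n^{-1/2})$ rates there are per fixed frequency, yet are summed over $i=1,\dots,n$) and it does not rely on the empirical eigenvectors $\boldsymbol{u}_i$ sharing eigenfunctions with the population operator; what the paper's route buys is the explicit limiting form of each coordinate $n^{-1/2}\boldsymbol{\varepsilon_0}^\top\boldsymbol{u}_i$, which is reused directly in the proofs of Theorems 2--5. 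The obstacle you flag, a Hilbert-space law of large numbers under bare stationarity, is real but is shared by the paper itself: its Hilbert--Schmidt concentration bound (Appendix, citing Shawe-Taylor et al.) is stated for i.i.d.\ samples, so both arguments implicitly require ergodicity or independence beyond Assumption~B(i); and both need $\mathbb{E}|\varepsilon_0|<\infty$ for the convergence of $\hat m$, which is implicit in Assumption~B(ii).
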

\begin{proof}
    See Appendix~\ref{proof:lemma1}.
\end{proof}

The parameters defining $w^*_\lambda$ are the expectations of the residuals projected onto the eigenfunctions $\phi_i$ of the operator $\boldsymbol{L}_k$. Each projection $\mathbb{E}(\varepsilon_0 \phi_i(X))$ is weighted by the factor $\mu_i / (\mu_i + \lambda)$, illustrating how the regularization parameter $\lambda$ controls the influence of each eigenfunction in the expansion of $w^*_\lambda$. 

The null hypothesis $\mathbb{E}(\varepsilon_0 | X) = 0$ implies that the expected projection values $\mathbb{E}(\varepsilon_0 \phi_i(X))$ vanish for all $i \geq 1$. This leads to the conclusion that $w^*_\lambda = \boldsymbol{0} \in \mathcal{H}_k$ almost surely. Conversely, if $w^*_\lambda = \boldsymbol{0} \in \mathcal{H}_k$, then the expected projection values must also be zero, confirming the null hypothesis. This intuition is formalized in the following theorem, which establishes the equivalence between the null hypothesis and the condition that $w^*_\lambda = \boldsymbol{0} \in \mathcal{H}_k$.

\begin{theorem}
    Let $k$ be an integrally strict positive definite (ISPD) kernel defined as:
    \[
        \int_{\mathcal{X}} \int_{\mathcal{X}} k(x,y) f(x) f(y) \, \mathrm{d} \mathbb{P}(x) \mathrm{d} \mathbb{P}(y) > 0,\quad \forall f \in L_2(\mathbb{P}), f \neq 0
    \]
    Then, the null hypothesis $\mathbb{E}(\varepsilon_0 | X) = 0$ is equivalent to $w^*_\lambda = \boldsymbol{0} \in \mathcal{H}_k$ almost surely.
\end{theorem}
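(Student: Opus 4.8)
The plan is to prove the two implications separately, using the spectral representation of $w^*_\lambda$ from Lemma~1 together with the facts that $\{\sqrt{\mu_i}\,\phi_i\}_{i\ge 1}$ is an orthonormal system in $\mathcal{H}_k$ while $\{\phi_i\}_{i\ge 1}$ is orthonormal in $L_2(\mathbb{P}_X)$. Since $w^*_\lambda$ is a deterministic population object, I would first note that the ``almost surely'' in the statement refers to the $\mathbb{P}_X$-a.s.\ qualifier attached to $\mathbb{E}(\varepsilon_0\mid X)=0$; no randomness enters the RKHS side of the equivalence.

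\textbf{Easy direction ($H_0\Rightarrow w^*_\lambda=\boldsymbol{0}$).} Assume $\mathbb{E}(\varepsilon_0\mid X)=0$ $\mathbb{P}_X$-a.s. By the law of iterated expectations, $\mathbb{E}(\varepsilon_0\phi_i(X))=\mathbb{E}[\mathbb{E}(\varepsilon_0\mid X)\phi_i(X)]=0$ for every $i\ge 1$; the product is integrable because each $\phi_i$ is bounded under Assumption~A(i) and $\varepsilon_0$ is square integrable. Substituting these zero coefficients into the Lemma~1 expansion $w^*_\lambda=\sum_{i\ge1}\tfrac{\mu_i}{\mu_i+\lambda}\mathbb{E}(\varepsilon_0\phi_i(X))\phi_i$, which is convergent in $\mathcal{H}_k$, yields $w^*_\lambda=\boldsymbol{0}$.

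\textbf{Converse ($w^*_\lambda=\boldsymbol{0}\Rightarrow H_0$).} Suppose $w^*_\lambda=\boldsymbol{0}$ in $\mathcal{H}_k$. Rewriting the Lemma~1 series in the orthonormal basis $\{\sqrt{\mu_i}\,\phi_i\}$ of (the closed span inside) $\mathcal{H}_k$, the coefficient attached to $\sqrt{\mu_i}\,\phi_i$ is $\tfrac{\sqrt{\mu_i}}{\mu_i+\lambda}\mathbb{E}(\varepsilon_0\phi_i(X))$. Because the kernel is ISPD, the integral operator $\boldsymbol{L}_k$ is injective, so $\mu_i>0$ for all $i$; together with $\lambda>0$ the prefactor $\tfrac{\sqrt{\mu_i}}{\mu_i+\lambda}$ is strictly positive, hence $w^*_\lambda=\boldsymbol{0}$ forces $\mathbb{E}(\varepsilon_0\phi_i(X))=0$ for all $i$. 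Now set $g(x):=\mathbb{E}(\varepsilon_0\mid X=x)$, which lies in $L_2(\mathbb{P}_X)$ since $\mathbb{E}[g(X)^2]\le\mathbb{E}[\varepsilon_0^2]<\infty$. Then $\langle g,\phi_i\rangle_{L_2(\mathbb{P}_X)}=\mathbb{E}[g(X)\phi_i(X)]=\mathbb{E}(\varepsilon_0\phi_i(X))=0$ for every $i$. Since $\boldsymbol{L}_k$ is a compact, self-adjoint, positive operator that is injective by the ISPD property, its eigenfunctions $\{\phi_i\}$ form a complete orthonormal system in $L_2(\mathbb{P}_X)$, so $g\equiv 0$ in $L_2(\mathbb{P}_X)$, i.e.\ $\mathbb{E}(\varepsilon_0\mid X)=0$ $\mathbb{P}_X$-a.s.

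\textbf{Main obstacle.} The crux is the completeness claim in the converse: the eigenfunctions of a compact self-adjoint operator only span $(\ker\boldsymbol{L}_k)^\perp$, so one must invoke ISPD precisely to exclude a nontrivial kernel of $\boldsymbol{L}_k$ (this is why the theorem strengthens the universality/boundedness of Assumption~A(i) to the explicit ISPD condition). I would state this as the standard Mercer/Hilbert--Schmidt spectral fact rather than reprove it. The only other points needing care are routine: measurability and integrability justifying the law of iterated expectations and the identity $\langle g,\phi_i\rangle_{L_2(\mathbb{P}_X)}=\mathbb{E}(\varepsilon_0\phi_i(X))$, both of which follow from boundedness of the $\phi_i$ (Assumption~A(i)--(ii)) and square-integrability of $\varepsilon_0$.
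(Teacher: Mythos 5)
Your proposal is correct and follows essentially the same route as the paper: the forward direction via iterated expectations, and the converse by showing $w^*_\lambda=\boldsymbol{0}$ forces $\mathbb{E}(\varepsilon_0\phi_i(X))=0$ for all $i$ and then expanding $\mathbb{E}(\varepsilon_0\mid X)$ in the eigenfunction basis. In fact you are somewhat more careful than the paper's own proof, which expands $\mathbb{E}(\varepsilon_0\mid X)=\sum_j a_j\phi_j(X)$ without explicitly noting that completeness of $\{\phi_i\}$ in $L_2(\mathbb{P}_X)$ (and the strict positivity of the weights $\mu_i/(\mu_i+\lambda)$) is exactly what the ISPD hypothesis delivers.
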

\begin{proof}
    See Appendix~\ref{proof:thm1}.
\end{proof}

\subsection{Projection based Test Statistics}
Under the null, we expect
\[
    \langle w^*_\lambda, f \rangle_{\mathcal{H}_k} =0, \quad \forall f \in \mathcal{H}_k
\]
while under the alternative, 
\[
    \langle w^*_\lambda, f \rangle_{\mathcal{H}_k} \neq 0, \quad \forall f \in \mathcal{H}_k, \text{ and } f \neq \boldsymbol{0}
\]
In this paper, we focus on two of these functions: (i) $f = w^*_\lambda$ and (ii) $f = m^* = \mathbb{E}(\varepsilon_0 k(X,\cdot))$. Two test statistics are then defined as
\begin{align*}
    & n \hat T_{\mathrm{proj}}^{(1)} = n \langle \hat w, \hat w \rangle_{\mathcal{H}_k} = n \boldsymbol{\varepsilon_0}^\top \left(\boldsymbol{K} + n \lambda \boldsymbol{I}\right)^{-1} \boldsymbol{K} \left(\boldsymbol{K} + n \lambda \boldsymbol{I}\right)^{-1} \boldsymbol{\varepsilon_0}  \\
    & n \hat T_{\mathrm{proj}}^{(2)} = n \langle \hat w, \hat m \rangle_{\mathcal{H}_k} =  \boldsymbol{\varepsilon_0}^\top \left(\boldsymbol{K} + n \lambda \boldsymbol{I}\right)^{-1} \boldsymbol{K} \boldsymbol{\varepsilon_0}
\end{align*}
where 
\[
    \hat m =  \frac{1}{n}\sum_{i=1}^n \varepsilon_{0,i} k(X_i,\cdot) =\frac{1}{n} \boldsymbol{\varepsilon_0}^\top \boldsymbol{\Phi} 
\]

\begin{remark}
    \label{remark:qgeq1}
    In case of $\varepsilon_{0} \in \mathbb{R}^q $ with $q \in \mathbb{N}$, we can rewrite the test statistics as
\begin{align*}
    & n \hat T_{\mathrm{proj}}^{(1)} = n \sum_{r=1}^q \langle \hat w_r, \hat w_r \rangle_{\mathcal{H}_k} = n \sum_{r=1}^q \boldsymbol{\varepsilon_{0,r}}^\top \left(\boldsymbol{K} + n \lambda \boldsymbol{I}\right)^{-1} \boldsymbol{K} \left(\boldsymbol{K} + n \lambda \boldsymbol{I}\right)^{-1} \boldsymbol{\varepsilon_{0,r}}  \\
    & n \hat T_{\mathrm{proj}}^{(2)} = n\sum_{r=1}^q  \langle \hat w_r, \hat m_r \rangle_{\mathcal{H}_k} = \sum_{r=1}^q  \boldsymbol{\varepsilon_{0,r}}^\top \left(\boldsymbol{K} + n \lambda \boldsymbol{I}\right)^{-1} \boldsymbol{K} \boldsymbol{\varepsilon_{0,r}}
\end{align*}
with $\boldsymbol{\varepsilon_{0,r}}$ being the $r$-th column of $\boldsymbol{\varepsilon_0}$.
\end{remark}

Using the SVD representation, we can rewrite the test statistics as
\begin{align*}
    & n \hat T_{\mathrm{proj}}^{(1)} =n  \sum_{i = 1}^n  \frac{\boldsymbol{\varepsilon_0}^\top \boldsymbol{u}_i}{\sqrt{n}} \frac{\sigma_i^2 / n}{(\sigma_i^2 / n + \lambda)^2} \frac{\boldsymbol{u}_i^\top \boldsymbol{\varepsilon_0}}{\sqrt{n}}  \\
    & n \hat T_{\mathrm{proj}}^{(2)} =n \sum_{i=1}^n \frac{\boldsymbol{\varepsilon_0}^\top \boldsymbol{u}_i}{\sqrt{n}} \frac{\sigma_i^2 / n}{\sigma_i^2 / n+  \lambda} \frac{\boldsymbol{u}_i^\top \boldsymbol{\varepsilon_0}}{\sqrt{n}}
\end{align*}

It is worthy noting that ICM test statistic of \cite{bierens1982consistent} and its kernel conditional moment (KCM) extension \citep{muandet2020kernel} take the form of 
\begin{align*}
    n \hat T_{KCM} & =n \langle \hat m, \hat m\rangle_{\mathcal{H}_k} = \frac{1}{n} \boldsymbol{\varepsilon_0}^\top \boldsymbol{K} \boldsymbol{\varepsilon_0}  = n \sum_{i=1}^{n}  \frac{\boldsymbol{\varepsilon_0}^\top \boldsymbol{u}_i}{\sqrt{n}} \frac{\sigma_i^2}{n} \frac{\boldsymbol{u}_i^\top \boldsymbol{\varepsilon_0}}{\sqrt{n}} \\
\end{align*}
The derivations of the SVD representations above are provided in the proof of Theorem 2.

While these three test statistics share a similar structure—differing primarily in the weights applied to the terms $\{(1 / \sqrt{n}) \boldsymbol{\varepsilon_0}^\top \boldsymbol{u}_i\}_{i=1}^n$, which estimate the deviation signals $\{\mathbb{E}(\varepsilon_0 \phi_i(X))\}_{i=1}^n$ (see the proof of Lemma 1 for details)—these seemingly minor differences can have a substantial impact on testing power. To illustrate, consider the case of a Gaussian kernel, and suppose the deviation occurs in a specific direction:
\[
    \tilde \varepsilon_0 = \varepsilon_0 + \phi_l(x)
\]
then these two projection functions $w^*_\lambda$ and $m^*$ become 
\begin{align*}
    & w^*_\lambda = \sum_{i \geq 1} \frac{\mu_i}{\mu_i+\lambda} \mathbb{E}(\varepsilon_0 \phi_i(X)) \phi_i +  \sum_{i\geq 1} \frac{\mu_i}{\mu_i+\lambda} \mathbb{E}(\phi_l(X) \phi_i(X)) \phi_i = \frac{\mu_l}{\mu_l+\lambda} \phi_l  \\
    & m^* = \sum_{i \geq 1} \mu_i \mathbb{E}(\varepsilon_0 \phi_i(X)) \phi_i +  \sum_{i\geq 1} \mu_i \mathbb{E}(\phi_l(X) \phi_i(X)) \phi_i = \mu_l \phi_l
\end{align*}

To compare the magnitude of deviations under different projections, consider the following ratios: 
\[
\frac{|| m^* ||_{\mathcal{H}_k}^2}{|| w^*_\lambda ||_{\mathcal{H}_k}^2} = (\mu_l+\lambda)^2, \qquad 
\frac{|| m^* ||_{\mathcal{H}_k}^2}{\langle w^*_\lambda, m^* \rangle_{\mathcal{H}_k}} = \frac{\langle w^*_\lambda, m^* \rangle_{\mathcal{H}_k}}{|| w^*_\lambda ||_{\mathcal{H}_k}^2} = \mu_l + \lambda.
\]
For a Gaussian kernel, it holds that $\sum_{i \geq 1} \mu_i = 1$\footnote{This follows from $1=\mathbb{E}(k(X,X)) = \sum_{i\geq 1}\mu_i \mathbb{E}(\phi_i(X)^2) =  \sum_{i\geq 1}\mu_i$, since $\mathbb{E}(\phi_i(X)^2) = 1$ by orthonormality.}, so each $\mu_i \in (0,1)$ for all $i \geq 1$. Therefore, in this example, $|| w^*_\lambda ||_{\mathcal{H}_k}$ will exceed $|| m^* ||_{\mathcal{H}_k}$ whenever $\lambda < 1$. For the other two ratios, the comparison depends on the specific values of the eigenvalues $\mu_l$ and the regularization parameter $\lambda$.

Another important observation is that the deviation signals corresponding to $|| m^* ||_{\mathcal{H}_k}$ and $\langle w^*_\lambda, m^* \rangle_{\mathcal{H}_k}$, which are proportional to $\mu_i$ and $\mu_i / (\mu_i+\lambda)$, respectively, tend to decrease as the frequency index $i$ increases. In contrast, the deviation signal associated with $|| w^*_\lambda ||_{\mathcal{H}_k}$, given by $\mu_i / (\mu_i+\lambda)^2$, does not necessarily decrease monotonically with $i$; its behavior depends on the interplay between $\lambda$ and $\mu_i$.

These observations suggest that the choice of projection can significantly influence the power of the test, particularly in cases where the deviations are concentrated in specific directions. The next theorem, under the Assumption B, establishes the asymptotic distributions of these test statistics under both the null and fixed alternative hypotheses.

\textbf{Assumption B.} (i) The random variables $S=(Y,X)$ forms a strictly stationary process with probability measure $\mathbb{P}_{S}$; (ii) \textit{Reularity Conditions}. (1) the residual function $\varepsilon: \mathcal{S} \times \Theta \longrightarrow \mathbb{R}$ is continuous on $\Theta$ for each $s \in \mathcal{S}$; (2) $\mathbb{E}(\varepsilon(S;\theta)|X=x)$ exists and is finite for every $\theta \in \Theta$ and $x \in \mathcal{X}$ for which $\mathbb{P}_X(x) >0$; (3) $\mathbb{E}(\varepsilon(S;\theta)|X=x)$ is continuous on $\Theta$ for all $x \in \mathcal{X}$ for which $\mathbb{P}_X(x) >0$.

These conditions are standard in the literature for conditional moment models and model checks, see \cite{muandet2020kernel,hall2003generalized} for example. 

\begin{theorem}
    Suppose Assumptions A and B hold, then under the null hypothesis, we have
    \begin{align*}
        n \hat T_{\mathrm{proj}}^{(1)} & \overset{d}{\longrightarrow}  \sum_{i = 1}^\infty  \frac{\mu_i}{(\mu_i+\lambda)^2} Z_i^2,\\
        n \hat T_{\mathrm{proj}}^{(2)} & \overset{d}{\longrightarrow}  \sum_{i = 1}^\infty  \frac{\mu_i}{\mu_i+\lambda} Z_i^2,
    \end{align*}
    where $Z_i \sim \mathcal{N}(0,S_i^2)$ are independent normal random variables with variance $S_i^2 =\mathrm{Var}(\varepsilon_0 \phi_i(X))$. Under the fixed alternative hypothesis, we have for any fixed $t > 0$,
    \begin{align*}
        & \mathbb{P}\left(n \hat T_{\mathrm{proj}}^{(1)} > t\right) \to 1, \\
        & \mathbb{P}\left(n \hat T_{\mathrm{proj}}^{(2)} > t\right) \to 1
    \end{align*}
\end{theorem}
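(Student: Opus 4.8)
The plan is to recast both statistics as smooth functionals of two RKHS-valued random elements and then combine a central limit theorem in $\mathcal{H}_k$ with the continuous mapping theorem. Write $\hat{C}=\tfrac1n\sum_{i=1}^n k(x_i,\cdot)\otimes k(x_i,\cdot)$ for the empirical second-moment operator, $\hat{C}_\lambda=\hat{C}+\lambda\mathcal{I}$, and $\hat{m}=\tfrac1n\boldsymbol{\varepsilon_0}^\top\boldsymbol{\Phi}$ for the empirical mean embedding. The standard KRR ``push-through'' identity gives $\hat{w}=\hat{C}_\lambda^{-1}\hat{m}$, so that $n\hat{T}_{\mathrm{proj}}^{(1)}=\|\sqrt{n}\,\hat{w}\|_{\mathcal{H}_k}^2$ and $n\hat{T}_{\mathrm{proj}}^{(2)}=\langle\sqrt{n}\,\hat{w},\sqrt{n}\,\hat{m}\rangle_{\mathcal{H}_k}$ with $\sqrt{n}\,\hat{w}=\hat{C}_\lambda^{-1}(\sqrt{n}\,\hat{m})$. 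I would first record the SVD representations displayed before the theorem; these follow by inserting $\boldsymbol{K}=\sum_i\sigma_i^2\boldsymbol{u}_i\boldsymbol{u}_i^\top$ and the SVD of $\boldsymbol{\Phi}$ into the expressions above and are a routine spectral computation.

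For the null hypothesis I would establish two facts. First, a Hilbert-space CLT: under $H_0$, $\mathbb{E}[\varepsilon_0 k(X,\cdot)\mid X]=0$ and $\mathbb{E}\|\varepsilon_0 k(X,\cdot)\|_{\mathcal{H}_k}^2=\mathbb{E}[\varepsilon_0^2 k(X,X)]<\infty$ by boundedness of $k$ (Assumption A(i)) and a second-moment condition on $\varepsilon_0$, so, using the weak-dependence / martingale-difference structure implicit in Assumption B, $\sqrt{n}\,\hat{m}\overset{d}{\to}\mathbb{G}$, a centered Gaussian element of $\mathcal{H}_k$ with covariance operator $\Sigma=\mathbb{E}[\varepsilon_0^2\,k(X,\cdot)\otimes k(X,\cdot)]$, which is trace-class since $\sup_x k(x,x)<\infty$. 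Second, operator consistency: $\hat{C}\to C$ in Hilbert--Schmidt norm by the law of large numbers for operator-valued variables, and since $t\mapsto(t+\lambda)^{-1}$ is Lipschitz on $[0,\infty)$ for the fixed $\lambda>0$ (Assumption A(iv)), $\|\hat{C}_\lambda^{-1}-C_\lambda^{-1}\|_{\mathrm{op}}\overset{p}{\to}0$. Decomposing $\sqrt{n}\,\hat{w}=C_\lambda^{-1}(\sqrt{n}\,\hat{m})+(\hat{C}_\lambda^{-1}-C_\lambda^{-1})(\sqrt{n}\,\hat{m})$, the second term is $o_p(1)\,O_p(1)=o_p(1)$, so $(\sqrt{n}\,\hat{w},\sqrt{n}\,\hat{m})\overset{d}{\to}(C_\lambda^{-1}\mathbb{G},\mathbb{G})$ jointly in $\mathcal{H}_k\times\mathcal{H}_k$; applying the continuous mapping theorem to $(u,v)\mapsto\|u\|_{\mathcal{H}_k}^2$ and $(u,v)\mapsto\langle u,v\rangle_{\mathcal{H}_k}$ yields $n\hat{T}_{\mathrm{proj}}^{(1)}\overset{d}{\to}\|C_\lambda^{-1}\mathbb{G}\|_{\mathcal{H}_k}^2$ and $n\hat{T}_{\mathrm{proj}}^{(2)}\overset{d}{\to}\langle C_\lambda^{-1}\mathbb{G},\mathbb{G}\rangle_{\mathcal{H}_k}$. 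Expanding in the RKHS-orthonormal eigenbasis $e_i=\sqrt{\mu_i}\phi_i$ of $C$ (so $Ce_i=\mu_i e_i$, $C_\lambda^{-1}e_i=(\mu_i+\lambda)^{-1}e_i$) and setting $Z_i:=\langle\mathbb{G},\phi_i\rangle_{\mathcal{H}_k}$, the weak limit of $n^{-1/2}\sum_l\varepsilon_{0,l}\phi_i(x_l)$ and hence $\mathcal{N}(0,S_i^2)$ with $S_i^2=\mathrm{Var}(\varepsilon_0\phi_i(X))$, one has $\langle\mathbb{G},e_i\rangle_{\mathcal{H}_k}=\sqrt{\mu_i}\,Z_i$ and therefore $\|C_\lambda^{-1}\mathbb{G}\|_{\mathcal{H}_k}^2=\sum_i\frac{\mu_i}{(\mu_i+\lambda)^2}Z_i^2$ and $\langle C_\lambda^{-1}\mathbb{G},\mathbb{G}\rangle_{\mathcal{H}_k}=\sum_i\frac{\mu_i}{\mu_i+\lambda}Z_i^2$. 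The $Z_i$ are jointly Gaussian and, under conditional homoskedasticity, mutually independent, since $\mathrm{Cov}(Z_i,Z_j)=\mathbb{E}[\mathbb{E}(\varepsilon_0^2\mid X)\phi_i(X)\phi_j(X)]$ reduces to $\sigma^2\delta_{ij}$ by the $L_2(\mathbb{P}_X)$-orthonormality of the $\phi_i$.

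For a fixed alternative the argument is short. Since $\mathbb{E}(\varepsilon_0\mid X)\neq 0$ on a set of positive $\mathbb{P}_X$-measure, Theorem 1 gives $w^*_\lambda\neq 0$ in $\mathcal{H}_k$; moreover $m^*=C_\lambda w^*_\lambda\neq 0$ and $\langle w^*_\lambda,m^*\rangle_{\mathcal{H}_k}=\langle w^*_\lambda,C_\lambda w^*_\lambda\rangle_{\mathcal{H}_k}\geq\lambda\|w^*_\lambda\|_{\mathcal{H}_k}^2>0$ because $C_\lambda\succeq\lambda\mathcal{I}$. By Lemma 1, $\|\hat{w}-w^*_\lambda\|_{\mathcal{H}_k}\overset{p}{\to}0$, and by the law of large numbers in $\mathcal{H}_k$, $\|\hat{m}-m^*\|_{\mathcal{H}_k}\overset{p}{\to}0$; hence $\langle\hat{w},\hat{w}\rangle_{\mathcal{H}_k}\overset{p}{\to}\|w^*_\lambda\|_{\mathcal{H}_k}^2>0$ and $\langle\hat{w},\hat{m}\rangle_{\mathcal{H}_k}\overset{p}{\to}\langle w^*_\lambda,m^*\rangle_{\mathcal{H}_k}>0$. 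Multiplying by $n$, both $n\hat{T}_{\mathrm{proj}}^{(1)}$ and $n\hat{T}_{\mathrm{proj}}^{(2)}$ diverge to $+\infty$ in probability, which is exactly $\mathbb{P}(n\hat{T}_{\mathrm{proj}}^{(j)}>t)\to 1$ for every fixed $t>0$.

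I expect the main obstacle to be the null analysis: verifying the conditions for the infinite-dimensional CLT $\sqrt{n}\,\hat{m}\overset{d}{\to}\mathbb{G}$ (tightness in $\mathcal{H}_k$ together with the moment and weak-dependence conditions) and the operator-norm consistency of the regularized inverse $\hat{C}_\lambda^{-1}$. This is where the fixed-$\lambda$ regularization is essential: Assumptions A(ii) and A(iv) ensure that $C$ and $\Sigma$ are trace-class and that the limiting weights $\mu_i/(\mu_i+\lambda)^2$ and $\mu_i/(\mu_i+\lambda)$ are summable, so the limiting series converge almost surely. An alternative route staying entirely within the SVD representation would instead need to show that the truncated sums $\sum_{i\le K}(\cdot)$ approximate $n\hat{T}_{\mathrm{proj}}^{(j)}$ uniformly in $n$ as $K\to\infty$, and the same summability controls that remainder. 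By comparison, the continuous-mapping steps and the entire fixed-alternative case are routine.
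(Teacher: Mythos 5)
Your proposal is correct, but it takes a genuinely different route from the paper. The paper works entirely in the finite-sample spectral coordinates of the kernel matrix: it expands $n\hat T_{\mathrm{proj}}^{(j)}$ over the eigenpairs $(\boldsymbol{u}_i,\sigma_i^2)$ of $\boldsymbol{K}$, shows termwise that $\sigma_i^2/n \overset{p}{\to} \mu_i$ (via a Hilbert--Schmidt perturbation bound on the empirical second-moment operator) and that $n^{-1/2}\boldsymbol{\varepsilon_0}^\top\boldsymbol{u}_i \overset{d}{\to} Z_i$ by the scalar CLT, and then sums the limits; under the fixed alternative it argues that the scalar sums diverge. You instead use the push-through identity $\hat w = \hat C_\lambda^{-1}\hat m$, a CLT in $\mathcal{H}_k$ for $\sqrt{n}\,\hat m$, operator-norm consistency of $\hat C_\lambda^{-1}$, and the continuous mapping theorem, only afterwards diagonalizing the limit in the eigenbasis. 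Your route buys two real improvements: (i) the passage from finitely many terms to the infinite series, which the paper's ``putting everything together'' step leaves implicit (the number of spectral terms grows with $n$, so uniform tail control is needed), is absorbed into the tightness of the Hilbert-space CLT and the bound $\lVert C_\lambda^{-1}\rVert_{\mathrm{op}} \le \lambda^{-1}$; and (ii) the fixed-alternative argument via $\langle w^*_\lambda, C_\lambda w^*_\lambda\rangle_{\mathcal{H}_k} \ge \lambda \lVert w^*_\lambda\rVert_{\mathcal{H}_k}^2 > 0$ together with Lemma~1 is cleaner than the paper's termwise divergence claim. The cost is that you must verify the infinite-dimensional CLT (trace-class covariance, weak dependence), which for i.i.d.\ or martingale-difference data follows from exactly the moment conditions you list. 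One further point in your favor: you correctly observe that the $Z_i$ are in general only jointly Gaussian with $\mathrm{Cov}(Z_i,Z_j)=\mathbb{E}[\varepsilon_0^2\phi_i(X)\phi_j(X)]$, so the independence asserted in the theorem statement requires conditional homoskedasticity (or an analogous diagonality condition); the paper's proof does not address this.
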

\begin{proof}
    See Appendix~\ref{proof:thm2}.
\end{proof}

We now discuss the asymptotic power property of the proposed test statistics. We consider the local (Pitman) alternatives:
\[
    H_{1,n}: Y - \mathcal{M}_{\theta_0}(X) = \boldsymbol{\varepsilon_0} + \frac{R(X)}{\sqrt{n}} a.s.,
\]
where $\mathbb{E}(\varepsilon_0 |X) =0$ and $R(X)$ is a nonzero square integrable measurable function of $X$ with respect to $\mathbb{P}_X$. The asymptotic result under $H_{1,n}$ is given by 
\begin{theorem}
    Under Assumptions A and B, we have
    \begin{align*}
        n \hat T_{\mathrm{proj}}^{(1)} & \overset{d}{\longrightarrow}  \sum_{i\geq 1}  \frac{\mu_i}{(\mu_i+\lambda)^2} \left(Z_i + \mathbb{E}(R(X) \phi_i(X))\right)^2 ,\\
        n \hat T_{\mathrm{proj}}^{(2)} & \overset{d}{\longrightarrow}  \sum_{i\geq 1}  \frac{\mu_i}{\mu_i+\lambda} \left(Z_i + \mathbb{E}(R(X) \phi_i(X))\right)^2 ,
    \end{align*}
    where $\{Z_i\}_{i\geq 1}$ are as defined in Theorem 2.
\end{theorem}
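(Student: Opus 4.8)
The plan is to express both statistics as continuous functionals of the rescaled kernel mean embedding of the residuals and of the empirical second-moment operator, and then pass to the limit. Write $\hat m = n^{-1}\sum_{i=1}^{n}\varepsilon_{0,i}\,k(x_i,\cdot)$, $\hat C = n^{-1}\sum_{i=1}^{n} k(x_i,\cdot)\otimes k(x_i,\cdot)$, and $\hat C_\lambda = \hat C + \lambda\mathcal{I}$; then the KRR estimator satisfies $\hat w = \hat C_\lambda^{-1}\hat m$, the empirical analogue of the identity $w^*_\lambda = C_\lambda^{-1}\mathbb{E}(\varepsilon_0 k(X,\cdot))$, so that
\[
    n\hat T_{\mathrm{proj}}^{(1)} = \| \hat C_\lambda^{-1}(\sqrt{n}\,\hat m)\|_{\mathcal{H}_k}^{2}, \qquad n\hat T_{\mathrm{proj}}^{(2)} = \langle \hat C_\lambda^{-1}(\sqrt{n}\,\hat m),\, \sqrt{n}\,\hat m\rangle_{\mathcal{H}_k}.
\]
Under $H_{1,n}$ the residual entering each statistic is $\varepsilon_{0,i} + R(x_i)/\sqrt{n}$, so $\sqrt{n}\,\hat m = n^{-1/2}\sum_{i=1}^{n}\varepsilon_{0,i}\,k(x_i,\cdot) + n^{-1}\sum_{i=1}^{n} R(x_i)\,k(x_i,\cdot)$.

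The first step is to establish $\sqrt{n}\,\hat m \overset{d}{\longrightarrow} G + m_R^{*}$ in $\mathcal{H}_k$, where $G$ is the centred Gaussian element of $\mathcal{H}_k$ already identified in the proof of Theorem~2 and $m_R^{*} := \mathbb{E}(R(X)k(X,\cdot)) \in \mathcal{H}_k$. For the first sum this is exactly the Hilbert-space central limit theorem used in Theorem~2, which applies since $\mathbb{E}(\varepsilon_0 k(X,\cdot)) = 0$ under $H_{1,n}$ (because $\mathbb{E}(\varepsilon_0\mid X)=0$) and the moment and boundedness conditions of Assumptions~A and B hold. For the second sum it is the law of large numbers for i.i.d.\ Bochner-integrable $\mathcal{H}_k$-valued summands, valid since $\mathbb{E}\|R(X)k(X,\cdot)\|_{\mathcal{H}_k} \le (\sup_x k(x,x))^{1/2}\,\mathbb{E}|R(X)| < \infty$ by Cauchy--Schwarz and $R\in L_2(\mathbb{P}_X)$; a Slutsky argument then yields the joint convergence. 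Separately, $\|\hat C - C\|_{\mathrm{op}} = o_p(1)$ by the law of large numbers in the Hilbert--Schmidt operators (again using boundedness of $k$), and since $\lambda>0$ is fixed with $\|\hat C_\lambda^{-1}\|_{\mathrm{op}} \le 1/\lambda$, the second resolvent identity gives $\|\hat C_\lambda^{-1} - C_\lambda^{-1}\|_{\mathrm{op}} \le \lambda^{-2}\|\hat C - C\|_{\mathrm{op}} = o_p(1)$.

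Combining these facts, a Slutsky-type argument together with the continuous mapping theorem (applied to the bounded operator $C_\lambda^{-1}$ and to the continuous norm and inner product) gives
\[
    n\hat T_{\mathrm{proj}}^{(1)} \overset{d}{\longrightarrow} \| C_\lambda^{-1}(G + m_R^{*})\|_{\mathcal{H}_k}^{2}, \qquad n\hat T_{\mathrm{proj}}^{(2)} \overset{d}{\longrightarrow} \langle C_\lambda^{-1}(G + m_R^{*}),\, G + m_R^{*}\rangle_{\mathcal{H}_k}.
\]
It then remains to expand these limits in the orthonormal eigenbasis $\{\sqrt{\mu_i}\phi_i\}_{i\ge 1}$ of $C$ (with eigenvalues $\mu_i$, so $C_\lambda^{-1}$ has eigenvalues $(\mu_i+\lambda)^{-1}$): the coordinate of $G$ along $\sqrt{\mu_i}\phi_i$ is $\sqrt{\mu_i}\,Z_i$ with $Z_i$ as in Theorem~2, and by the reproducing property the coordinate of $m_R^{*}$ is $\langle m_R^{*},\sqrt{\mu_i}\phi_i\rangle_{\mathcal{H}_k} = \sqrt{\mu_i}\,\mathbb{E}(R(X)\phi_i(X))$. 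Substituting and computing the squared norm and the pairing yields exactly the two series in the statement; each is almost surely finite because the weights $\mu_i/(\mu_i+\lambda)^2 \le 1/(4\lambda)$ and $\mu_i/(\mu_i+\lambda)\le 1$ are bounded, $\sum_i \mathbb{E}(R(X)\phi_i(X))^2 \le \|R\|_{L_2(\mathbb{P}_X)}^2 < \infty$ by Bessel's inequality, and $\sum_i \mu_i S_i^2 \le (\sup_x k(x,x))\,\mathbb{E}(\varepsilon_0^2) < \infty$ by Mercer's theorem, so the summands have summable expectations.

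I expect the main obstacle to be making the combination step fully rigorous, namely coupling the weak convergence of $\sqrt{n}\,\hat m$ in the infinite-dimensional space $\mathcal{H}_k$ with the in-probability operator convergence of $\hat C_\lambda^{-1}$: the relevant maps $(A,h)\mapsto Ah$ and $(A,h)\mapsto\langle Ah,h\rangle_{\mathcal{H}_k}$ are bilinear rather than linear, so Slutsky's lemma and the continuous mapping theorem cannot be applied verbatim, and one must instead establish joint convergence of the pair $(\hat C_\lambda^{-1},\sqrt{n}\,\hat m)$ on $\{A:\|A\|_{\mathrm{op}}\le 1/\lambda\}\times\mathcal{H}_k$ and invoke joint continuity of these maps on that region; the truncation-and-tail bookkeeping needed to pass from finite-dimensional projections to the full series is routine given the summability bounds above. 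A fully equivalent but more elementary route is to argue coordinate by coordinate in the SVD representation exactly as in the proof of Theorem~2, replacing $n^{-1/2}\boldsymbol{u}_i^\top\boldsymbol{\varepsilon_0}$ by $n^{-1/2}\boldsymbol{u}_i^\top(\boldsymbol{\varepsilon_0} + \boldsymbol{R}/\sqrt{n})$ with $\boldsymbol{R} = (R(x_1),\dots,R(x_n))^\top$, and noting that the extra term converges in probability to $\mathbb{E}(R(X)\phi_i(X))$ by the law of large numbers.
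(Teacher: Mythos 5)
Your argument is correct, and the limits it produces match the statement, but the route is genuinely different from the paper's. The paper substitutes $\boldsymbol{\tilde\varepsilon_0}=\boldsymbol{\varepsilon_0}+R(X)/\sqrt{n}$ directly into the quadratic form, expands it as $A_1+2A_2+A_3$ (null part, cross term, drift term), and then treats each piece coordinate-by-coordinate in the SVD, using $\sigma_i^2/n\overset{p}{\to}\mu_i$, $n^{-1/2}\boldsymbol{u}_i^\top\boldsymbol{R}\overset{p}{\to}\mathbb{E}(R(X)\phi_i(X))$ and $\boldsymbol{\varepsilon_0}^\top\boldsymbol{u}_i\overset{d}{\to}Z_i$ before recombining the three series into the completed square $\left(Z_i+\mathbb{E}(R(X)\phi_i(X))\right)^2$ --- essentially the ``more elementary route'' you sketch in your closing paragraph. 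Your main argument instead works at the level of the $\mathcal{H}_k$-valued statistic: $\hat w=\hat C_\lambda^{-1}\hat m$, a Hilbert-space CLT plus LLN giving $\sqrt{n}\,\hat m\overset{d}{\to}G+m_R^*$, a resolvent bound for $\hat C_\lambda^{-1}-C_\lambda^{-1}$, and an eigen-expansion of the limit. What your version buys is a cleaner treatment of exactly the point the paper glosses over: the joint convergence of $A_1$, $A_2$ and $A_3$ to limits driven by the \emph{same} Gaussian sequence $\{Z_i\}$, and the interchange of $n\to\infty$ with the infinite sum over eigen-directions, both of which are automatic once you have a single weak limit $G+m_R^*$ in $\mathcal{H}_k$. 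Also, the ``obstacle'' you flag is less serious than you suggest: since $\hat C_\lambda^{-1}$ converges in probability to the \emph{constant} operator $C_\lambda^{-1}$ and $\|\sqrt{n}\,\hat m\|_{\mathcal{H}_k}=O_p(1)$, the bound $\|\hat C_\lambda^{-1}(\sqrt{n}\,\hat m)-C_\lambda^{-1}(\sqrt{n}\,\hat m)\|_{\mathcal{H}_k}\le\|\hat C_\lambda^{-1}-C_\lambda^{-1}\|_{\mathrm{op}}\,\|\sqrt{n}\,\hat m\|_{\mathcal{H}_k}=o_p(1)$ reduces everything to the continuous mapping theorem for the fixed bounded operator $C_\lambda^{-1}$; no genuine joint-convergence machinery for the pair is needed. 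The price of your route is that it invokes a CLT in $\mathcal{H}_k$ (requiring $\mathbb{E}(\varepsilon_0^2k(X,X))<\infty$, which follows from the bounded kernel and finite residual variance but is a strictly stronger tool than the scalar CLTs the paper uses), whereas the paper stays entirely with scalar limit theorems at the cost of leaving the tail/truncation bookkeeping implicit.
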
 
\begin{proof}
    See Appendix~\ref{proof:thm3}.
\end{proof}

\subsection{Random Location Test Statistics}
The population coefficient function $w^*_\lambda$ acts as a ``witness'' to deviations from the null hypothesis, such that $|w^*_\lambda(v)|$ becomes large when there is a discrepancy in the region around some point $v \in \mathcal{X}$. The projection-based test statistics discussed earlier quantify the overall ``flatness'' of $w^*_\lambda$ using the RKHS inner product, providing a global measure of model adequacy. In contrast, the random location test statistics soon to be introduced are designed to assess the ``flatness'' of $w^*_\lambda$ in a more localized and interpretable way, enabling the detection and visualization of deviations at specific regions in the input space.

A central idea behind random location test statistics is to leverage the analytic properties of the kernel function $k$, such as those of the Gaussian kernel \citep{sun2008reproducing}, which ensure that all functions in the corresponding RKHS is analytic. Hence, $w^*_\lambda$ is a real analytic function. For real analytic functions, if $w^*_\lambda \neq \boldsymbol{0}$, the set of points where $w^*_\lambda(v) = 0$ has Lebesgue measure zero. This statement is formalized as the following lemma:
\begin{lemma}
    \label{lemma:analyticzero}
    Let $A(v) \in \mathcal{H}_k$ be a real analytic function on a connected open domain $\mathcal{X} \subset \mathbb{R}^d$. If $A$ is not identically zero, then its zero set $\{v \in \mathcal{X}: A(v) = 0\}$ has Lebesgue measure zero. 
\end{lemma}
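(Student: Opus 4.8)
The plan is to establish this classical fact by induction on the ambient dimension $d$, combining Fubini's theorem with the identity theorem for real analytic functions. First I would dispose of the base case $d = 1$: a real analytic function on a connected open interval that is not identically zero has only isolated zeros, because if some zero were an accumulation point of zeros then every Taylor coefficient there would vanish, forcing $A \equiv 0$ on the whole (connected) interval; an isolated subset of $\mathbb{R}$ is countable and hence Lebesgue-null.

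For $d \ge 2$ I would first reduce to the case where $\mathcal{X}$ is an open box $I_1 \times \cdots \times I_d$. Cover $\mathcal{X}$ by countably many such boxes contained in it; $A$ cannot vanish identically on any of them, since otherwise the identity theorem on the connected set $\mathcal{X}$ would give $A \equiv 0$, contradicting the hypothesis. Hence it is enough to bound the measure of the zero set on each box, and on a box one can invoke Fubini cleanly.

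For the inductive step, write $v = (v', v_d)$ with $v' \in \mathbb{R}^{d-1}$, put $\mathcal{X}' = I_1 \times \cdots \times I_{d-1}$, and let $G = \{v' \in \mathcal{X}' : A(v', \cdot) \equiv 0 \text{ on } I_d\}$. For $v' \notin G$ the slice $\{v_d \in I_d : A(v', v_d) = 0\}$ is the zero set of a non-trivial one-variable analytic function, hence null by the base case. The crux is to show $G$ is null in $\mathbb{R}^{d-1}$: fixing any $c \in I_d$, the functions $h_k(v') := \partial_{v_d}^k A(v', c)$ are real analytic on $\mathcal{X}'$ and $G \subseteq \bigcap_{k \ge 0} h_k^{-1}(0)$. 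If some $h_k \not\equiv 0$, the inductive hypothesis makes $h_k^{-1}(0)$ — and therefore $G$ — null; if instead every $h_k \equiv 0$, then for each $v'$ the Taylor series of $A(v', \cdot)$ at $c$ vanishes, so $A(v', \cdot) \equiv 0$ on $I_d$ for all $v'$, i.e. $A \equiv 0$, a contradiction. Finally, applying Fubini to the zero set (which is closed, hence measurable) and splitting the iterated integral over $G$ and its complement yields that its $d$-dimensional Lebesgue measure is zero.

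The only genuinely delicate point is the dichotomy establishing that $G$ is null: it requires combining the inductive hypothesis applied to the derivative functions $h_k$ with the identity theorem, and tracking carefully that vanishing of all $v_d$-derivatives at a single fixed $c$ forces the whole slice to vanish. The remaining ingredients — isolated zeros in dimension one, measurability of the zero set, and the Fubini bookkeeping — are routine.
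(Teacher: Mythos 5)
Your proof is correct and complete. The paper itself does not supply an argument for this lemma: it simply cites Proposition~0 of Mityagin (2015), ``The zero set of a real analytic function.'' What you have written is, in essence, a self-contained reconstruction of the standard proof that the cited reference contains: induction on the dimension, with the base case handled by isolated zeros of one-variable analytic functions, the inductive step handled by slicing and Fubini, and the exceptional set $G$ of slices on which $A(v',\cdot)$ vanishes identically controlled via the auxiliary analytic functions $h_k(v') = \partial_{v_d}^k A(v',c)$ together with the identity theorem. All the steps check out: the reduction to countably many boxes via the identity theorem on the connected domain $\mathcal{X}$ is the right way to justify applying Fubini on product sets; the dichotomy (some $h_k \not\equiv 0$ versus all $h_k \equiv 0$) correctly rules out $G$ having positive measure; and measurability of the zero set (closed in $\mathcal{X}$) and of $G$ (a countable intersection of closed slice conditions, by continuity) is available, though as you implicitly use, it suffices that the integrand $\lambda_1(Z_{v'})$ vanishes off a null set. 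The only practical difference between your write-up and the paper is that the paper outsources the entire argument to the literature, whereas you have made it self-contained; if the authors wanted the paper to be self-contained on this point, your argument could be inserted essentially verbatim.
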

\begin{proof}
    This lemma is the Proposition 0 in \cite{mityagin2015zero}
\end{proof}

To formally ensure that $w^*_\lambda$ is a real analytic function in $\mathcal{H}_k$, we require the following assumption:

\textbf{Assumption C.} (i) The kernel $k$ is shift-invariant and Mercer; (ii) $k(x, y)$ is real analytic in both $x$ and $y$.

A Mercer kernel is continuous, symmetric, and yields a positive semidefinite kernel matrix for any finite set of points. A kernel is shift-invariant if $k(x, y) = k(x - y)$ for all $x, y \in \mathcal{X}$. The Gaussian kernel is a prototypical example that satisfies both conditions in Assumption C.

The following Lemma implies that $w^*_\lambda$ is real analytic in $\mathcal{H}_k$ under Assumption C.

\begin{lemma}
    \label{lemma:analyticfunction}
    Let $D = max_{x,y\in \mathcal{X}}||x-y||^2$ and $\psi$ be a real analytic function on $[0,D]$ with convergence radius $r>0$. If $k(x,y) = \psi (||x-y||^2)$ is a Mercer kernel on $\mathcal{X}$, then each function in $\mathcal{H}_k$ is real analytic on $\mathcal{X}$.
\end{lemma}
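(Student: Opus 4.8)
For every $f \in \mathcal{H}_k$ the plan is to produce constants $C,\rho>0$, depending only on $f$ and $\psi$, such that
\[
    \sup_{x \in \mathcal{X}} \big|\partial^\alpha f(x)\big| \;\le\; C\,|\alpha|!\,\rho^{-|\alpha|} \qquad \text{for every multi-index } \alpha ,
\]
since this factorial growth of the derivatives is the classical sufficient condition for $f$ to be real analytic on $\mathcal{X}$ (the Taylor series of $f$ at any point then converges to $f$ on a ball of radius $\rho$, by the usual remainder estimate). Two ingredients feed into this: the derivative reproducing property of an RKHS with a smooth kernel, which rewrites $\partial^\alpha f(x)$ as an RKHS inner product and hence, by Cauchy--Schwarz, bounds it by $\|f\|_{\mathcal{H}_k}$ times a diagonal mixed partial of $k$; and the analyticity of $\psi$, which controls the growth of those mixed partials of $k(x,y)=\psi(\|x-y\|^2)$ as $|\alpha|\to\infty$.

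First I would observe that $\psi$, being real analytic on $[0,D]$, is $C^\infty$ there, and that $g(x,y):=\|x-y\|^2$ is a polynomial, so $k=\psi\circ g$ is $C^\infty$ on $\mathcal{X}\times\mathcal{X}$. Standard results on smooth reproducing kernels (see \cite{sun2008reproducing} and the references therein) then give, for every multi-index $\alpha$, that the function $(\partial^\alpha_{(1)}k)(x,\cdot)$ — the $\alpha$-th partial derivative of $k$ in its first argument — belongs to $\mathcal{H}_k$, that $\partial^\alpha f(x)=\langle f,(\partial^\alpha_{(1)}k)(x,\cdot)\rangle_{\mathcal{H}_k}$, and that
\[
    \big\|(\partial^\alpha_{(1)}k)(x,\cdot)\big\|_{\mathcal{H}_k}^2 \;=\; \big(\partial^\alpha_{(1)}\partial^\alpha_{(2)}k\big)(x,x),
\]
the mixed partial of order $\alpha$ in each argument, evaluated on the diagonal. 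Cauchy--Schwarz yields $|\partial^\alpha f(x)|\le \|f\|_{\mathcal{H}_k}\big((\partial^\alpha_{(1)}\partial^\alpha_{(2)}k)(x,x)\big)^{1/2}$, so everything reduces to bounding this diagonal quantity uniformly in $x$.

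The core step is to estimate $(\partial^\alpha_{(1)}\partial^\alpha_{(2)}k)(x,x)$ via the multivariate Fa\`a di Bruno formula for $\psi\circ g$, viewed as a function of the $2d$ variables $(x,y)$ differentiated by the multi-index $(\alpha,\alpha)$ of total order $2|\alpha|$. The quadratic $g$ makes this tractable: all derivatives of $g$ of order $\ge 3$ vanish identically, and its first-order derivatives $\partial_{x_i}g=2(x_i-y_i)$, $\partial_{y_i}g=-2(x_i-y_i)$ vanish on the diagonal. Hence, after setting $y=x$, the only surviving Fa\`a di Bruno terms are those in which every block of the underlying partition of the $2|\alpha|$ differentiations has size exactly two; each such term equals $\psi^{(|\alpha|)}(0)$ times a product of $|\alpha|$ second derivatives of $g$, each equal to $\pm 2$, and the number of such terms is at most the number $(2|\alpha|-1)!!$ of perfect matchings of $2|\alpha|$ objects, so
\[
    \big|(\partial^\alpha_{(1)}\partial^\alpha_{(2)}k)(x,x)\big| \;\le\; 2^{|\alpha|}\,(2|\alpha|-1)!!\;\big|\psi^{(|\alpha|)}(0)\big| \;=\; \frac{(2|\alpha|)!}{|\alpha|!}\,\big|\psi^{(|\alpha|)}(0)\big| .
\]
Real analyticity of $\psi$ at $0$ with convergence radius at least $r$ provides a constant $A$ with $|\psi^{(p)}(0)|\le A\,p!/r^{p}$ for all $p$; substituting and using $(2n)!\le 4^{n}(n!)^2$ gives $\big((\partial^\alpha_{(1)}\partial^\alpha_{(2)}k)(x,x)\big)^{1/2}\le A^{1/2}\,|\alpha|!\,(2/\sqrt{r})^{|\alpha|}$, uniformly in $x\in\mathcal{X}$.

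Combining this with the Cauchy--Schwarz bound gives the target estimate with $C=A^{1/2}\|f\|_{\mathcal{H}_k}$ and $\rho=\sqrt{r}/2>0$, and the Taylor-remainder argument then finishes the proof. I expect the main obstacle to be the Fa\`a di Bruno bookkeeping: one must argue carefully that exactly the perfectly paired partitions survive on the diagonal — so the $\psi$-derivative that appears has order precisely $|\alpha|$ — and that the combinatorial multiplicity is no larger than $O((2|\alpha|-1)!!)$, since a cruder count would inflate the growth rate beyond $|\alpha|!$ and void the analyticity criterion. Secondary points are justifying the derivative reproducing property at arbitrary order (supplied by $k\in C^\infty$) and, when $\mathcal{X}$ is not open, reading ``analytic on $\mathcal{X}$'' as analyticity on an open neighborhood, which is legitimate because $\psi$ is analytic on an open set containing $[0,D]$. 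An essentially equivalent route, which some may find cleaner, is to show directly that $x\mapsto k(x,\cdot)$ is a real-analytic $\mathcal{H}_k$-valued map — its $\mathcal{H}_k$-valued Taylor coefficients have exactly the norms bounded above — and then compose with the bounded linear functional $\langle f,\cdot\rangle_{\mathcal{H}_k}$.
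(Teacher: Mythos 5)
Your proposal is correct, but it takes a genuinely different route from the paper, which offers no argument of its own and simply cites Theorem 1 of \cite{sun2008reproducing}. You instead give a self-contained proof: you reduce analyticity of every $f\in\mathcal{H}_k$ to a factorial bound on $\sup_x|\partial^\alpha f(x)|$ via the derivative reproducing property and Cauchy--Schwarz, and then control the diagonal mixed partials $(\partial^\alpha_{(1)}\partial^\alpha_{(2)}k)(x,x)$ by Fa\`a di Bruno, exploiting the fact that $g(x,y)=\|x-y\|^2$ is quadratic with first derivatives vanishing on the diagonal, so only perfectly paired partitions survive and the relevant $\psi$-derivative has order exactly $|\alpha|$. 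The combinatorics check out: $2^{|\alpha|}(2|\alpha|-1)!!=(2|\alpha|)!/|\alpha|!$, and with $(2n)!\le 4^n(n!)^2$ you land on a uniform radius of order $\sqrt{r}/2$. What your route buys is transparency and an explicit, quantitative radius of analyticity depending only on $r$; what the citation buys the paper is brevity and the avoidance of two standard-but-nontrivial ingredients you invoke without proof, namely the order-$\alpha$ derivative reproducing property with the norm identity $\|(\partial^\alpha_{(1)}k)(x,\cdot)\|^2_{\mathcal{H}_k}=(\partial^\alpha_{(1)}\partial^\alpha_{(2)}k)(x,x)$ for $C^\infty$ kernels, and the multivariate analyticity criterion from factorial derivative growth. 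Two cosmetic points: the coefficient bound should read $|\psi^{(p)}(0)|\le A\,p!/\rho^{p}$ for some $\rho<r$ rather than for $r$ itself (this only perturbs the final radius), and the criterion is usually stated with $\alpha!$ rather than $|\alpha|!$, but since $|\alpha|!\le d^{|\alpha|}\alpha!$ the discrepancy is absorbed into the geometric factor. Neither affects the validity of the argument.
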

\begin{proof}
    See Theorem 1 in \cite{sun2008reproducing}. 
\end{proof}

Lemmas \ref{lemma:analyticzero} and \ref{lemma:analyticfunction} imply that, by evaluating $w^*_\lambda$ at a finite set of location points $\boldsymbol{V} = \{v_1, \ldots, v_J\}$ sampled i.i.d. from a distribution with a Lebesgue density $\eta$, we can almost surely detect any nonzero deviation: if $w^*_\lambda \neq \boldsymbol{0}$, then $\{w^*_\lambda(v_j)\}_{j=1}^J$ will be nonzero with probability one. 

The following two test statistics are developped based on this simple idea:
\begin{align*}
    & n \hat T_{\mathrm{rand}}^{(1)} = n \sum_{j=1}^{J} \boldsymbol{\varepsilon_0}^\top \left(\boldsymbol{K} + n \lambda \boldsymbol{I}\right)^{-1} \boldsymbol{k}(v_j) \boldsymbol{k}(v_j)^\top \left(\boldsymbol{K} + n \lambda \boldsymbol{I}\right)^{-1} \boldsymbol{\varepsilon_0}\\
    &n \hat T_{\mathrm{rand}}^{(2)} = n \left(\sum_{j=1}^{J} \boldsymbol{\varepsilon_0}^\top \left(\boldsymbol{K} + n \lambda \boldsymbol{I}\right)^{-1} \boldsymbol{k}(v_j)\right)^2
\end{align*}
where $$\boldsymbol{k}(v_j) = \left(k(x_1,v_j), \ldots, k(x_n,v_j)\right)^\top$$ is the vector of kernel functions evaluated at $\{(x_i,v_j)\}_{i=1}^n$.

We now establish the asymptotic distributions of the random location test statistics under both the null and fixed alternative hypotheses.

\begin{theorem}
    Under Assumptions A, B and C, we have under the null,
    \[
        \sqrt{n}  \boldsymbol{\varepsilon_0}^\top \left(\boldsymbol{K} + n \lambda \boldsymbol{I}\right)^{-1} \boldsymbol{k}(v_j) \overset{d}{\longrightarrow}  \sum_{i\geq 1} \frac{\mu_i\phi_i(v_j)}{\mu_i+\lambda} Z_i 
    \]
    Consequently, under the null hypothesis,
    \begin{align*}
        & n \hat T_{\mathrm{rand}}^{(1)}  \overset{d}{\longrightarrow}  \sum_{j=1}^{J} \left(\sum_{i\geq 1} \frac{\mu_i\phi_i(v_j)}{\mu_i+\lambda} Z_i\right)^2,\\
        & n \hat T_{\mathrm{rand}}^{(2)} \overset{d}{\longrightarrow}  \left(\sum_{j=1}^{J}\sum_{i\geq 1} \frac{\mu_i\phi_i(v_j)}{\mu_i+\lambda} Z_i  \right)^2,
    \end{align*}
    while under the fixed alternative hypothesis, for any fixed $t > 0$,
    \begin{align*}
        & \mathbb{P}\left(n \hat T_{\mathrm{rand}}^{(1)} > t\right) \to 1, \\
        & \mathbb{P}\left(n \hat T_{\mathrm{rand}}^{(2)} > t\right) \to 1,        
    \end{align*}
    where $\{Z_i\}_{i\geq 1}$ are as defined in Theorem 2.
\end{theorem}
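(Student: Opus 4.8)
The plan is to reduce both statistics to the pointwise KRR estimator $\hat w(v_j)$ and then transfer the Hilbert-space central limit theorem already used in the proof of Theorem~2. By the reproducing property,
\[
    \sqrt{n}\,\boldsymbol{\varepsilon_0}^\top\big(\boldsymbol{K}+n\lambda\boldsymbol{I}\big)^{-1}\boldsymbol{k}(v_j)=\sqrt{n}\,\langle\hat w,k(v_j,\cdot)\rangle_{\mathcal{H}_k}=\sqrt{n}\,\hat w(v_j),
\]
so that $n\hat T^{(1)}_{\mathrm{rand}}=\sum_{j=1}^{J}\big(\sqrt{n}\,\hat w(v_j)\big)^{2}$ and $n\hat T^{(2)}_{\mathrm{rand}}=\big(\sum_{j=1}^{J}\sqrt{n}\,\hat w(v_j)\big)^{2}$; both are fixed continuous functions of the vector $\big(\sqrt{n}\,\hat w(v_1),\dots,\sqrt{n}\,\hat w(v_J)\big)\in\mathbb{R}^{J}$, where throughout the asymptotics are taken in $n$ with $\boldsymbol{V}=\{v_1,\dots,v_J\}$ held fixed and the conclusions are understood to hold for $\eta^{J}$-almost every $\boldsymbol{V}$. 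Using the standard KRR operator identity $\hat w=(\hat C+\lambda\mathcal{I})^{-1}\hat m$ with $\hat C=n^{-1}\sum_i k(x_i,\cdot)\otimes k(x_i,\cdot)$ and $\hat m=n^{-1}\sum_i\varepsilon_{0,i}k(x_i,\cdot)$, I would rewrite $\sqrt{n}\,\hat w(v_j)=\big\langle\sqrt{n}\,\hat m,\,(\hat C+\lambda\mathcal{I})^{-1}k(v_j,\cdot)\big\rangle_{\mathcal{H}_k}$.

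For the null limits I would invoke the two ingredients contained in the proof of Theorem~2: (a) $\hat C\overset{p}{\longrightarrow}C$ in Hilbert--Schmidt norm, hence, because $\lambda>0$ is fixed and $A\mapsto(A+\lambda\mathcal{I})^{-1}$ is Lipschitz on positive operators, $(\hat C+\lambda\mathcal{I})^{-1}k(v_j,\cdot)\overset{p}{\longrightarrow}C_\lambda^{-1}k(v_j,\cdot)$ in $\mathcal{H}_k$ (using $\|k(v_j,\cdot)\|^{2}_{\mathcal{H}_k}=k(v_j,v_j)\le\sup_x k(x,x)<\infty$); and (b) under $H_0$, $m^{*}=\mathbb{E}[\varepsilon_0 k(X,\cdot)]=\boldsymbol{0}$, so the Hilbert-space CLT gives $\sqrt{n}\,\hat m\overset{d}{\longrightarrow}G$, a centered Gaussian element of $\mathcal{H}_k$ with covariance operator $\Sigma=\mathbb{E}[\varepsilon_0^{2}\,k(X,\cdot)\otimes k(X,\cdot)]$. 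Joint convergence $(\sqrt{n}\,\hat m,\hat C)\overset{d}{\longrightarrow}(G,C)$ follows by Slutsky, and $(h,A)\mapsto\big(\langle h,(A+\lambda\mathcal{I})^{-1}k(v_j,\cdot)\rangle_{\mathcal{H}_k}\big)_{j=1}^{J}$ is continuous at $(G,C)$, so the continuous mapping theorem yields
\[
    \big(\sqrt{n}\,\hat w(v_j)\big)_{j=1}^{J}\overset{d}{\longrightarrow}\big(\langle G,C_\lambda^{-1}k(v_j,\cdot)\rangle_{\mathcal{H}_k}\big)_{j=1}^{J}.
\]
To put the limit in the stated form I would use Mercer's theorem: $C$ has eigenpairs $(\mu_i,\sqrt{\mu_i}\phi_i)$ with $\{\sqrt{\mu_i}\phi_i\}$ orthonormal in $\mathcal{H}_k$, so $C_\lambda^{-1}k(v_j,\cdot)=\sum_i\frac{\sqrt{\mu_i}\phi_i(v_j)}{\mu_i+\lambda}\big(\sqrt{\mu_i}\phi_i\big)$ and therefore $\langle G,C_\lambda^{-1}k(v_j,\cdot)\rangle_{\mathcal{H}_k}=\sum_i\frac{\sqrt{\mu_i}\phi_i(v_j)}{\mu_i+\lambda}\langle G,\sqrt{\mu_i}\phi_i\rangle_{\mathcal{H}_k}$. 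Writing $\langle G,\sqrt{\mu_i}\phi_i\rangle_{\mathcal{H}_k}=\sqrt{\mu_i}\,Z_i$ with $Z_i=\langle G,\phi_i\rangle_{\mathcal{H}_k}$, whose variance is $\langle\Sigma\phi_i,\phi_i\rangle_{\mathcal{H}_k}=\mathbb{E}[\varepsilon_0^{2}\phi_i(X)^{2}]=\mathrm{Var}(\varepsilon_0\phi_i(X))=S_i^{2}$ — so that the $Z_i$ are precisely the Gaussian coordinates of Theorem~2, since $G$ is the same limit — gives $\langle G,C_\lambda^{-1}k(v_j,\cdot)\rangle_{\mathcal{H}_k}=\sum_i\frac{\mu_i\phi_i(v_j)}{\mu_i+\lambda}Z_i$. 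Applying the continuous maps $(t_1,\dots,t_J)\mapsto\sum_j t_j^{2}$ and $(t_1,\dots,t_J)\mapsto(\sum_j t_j)^{2}$ then delivers the two null limits for $n\hat T^{(1)}_{\mathrm{rand}}$ and $n\hat T^{(2)}_{\mathrm{rand}}$.

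Under a fixed alternative, Lemma~1 gives $\|\hat w-w^{*}_\lambda\|_{\mathcal{H}_k}\overset{p}{\longrightarrow}0$, so $\hat w(v_j)=\langle\hat w,k(v_j,\cdot)\rangle_{\mathcal{H}_k}\overset{p}{\longrightarrow}w^{*}_\lambda(v_j)$ for each $j$ by Cauchy--Schwarz and boundedness of $k$. Since $m^{*}\ne\boldsymbol{0}$ under the alternative, $w^{*}_\lambda=C_\lambda^{-1}m^{*}\ne\boldsymbol{0}$ (Theorem~1, or injectivity of $C_\lambda^{-1}$); by Assumption~C together with Lemmas~\ref{lemma:analyticzero} and \ref{lemma:analyticfunction}, $w^{*}_\lambda$ is real analytic and not identically zero, hence non-constant, so every level set $\{v:w^{*}_\lambda(v)=c\}$ is Lebesgue-null. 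Because the $v_j$ are i.i.d.\ from a density $\eta$, this forces $w^{*}_\lambda(v_j)\ne0$ for all $j$ almost surely, and a Fubini argument over the level set $\{w^{*}_\lambda(v_J)=-\sum_{j<J}w^{*}_\lambda(v_j)\}$ gives $\mathbb{P}\big(\sum_{j}w^{*}_\lambda(v_j)=0\big)=0$. Consequently $\hat T^{(1)}_{\mathrm{rand}}\overset{p}{\longrightarrow}\sum_j w^{*}_\lambda(v_j)^{2}>0$ and $\hat T^{(2)}_{\mathrm{rand}}\overset{p}{\longrightarrow}\big(\sum_j w^{*}_\lambda(v_j)\big)^{2}>0$, so $n\hat T^{(\ell)}_{\mathrm{rand}}\overset{p}{\longrightarrow}\infty$ and $\mathbb{P}\big(n\hat T^{(\ell)}_{\mathrm{rand}}>t\big)\to1$ for every fixed $t>0$ and $\ell\in\{1,2\}$.

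I expect the main obstacle to be the transfer step in the null case: ensuring that the Hilbert-space CLT for $\sqrt{n}\,\hat m$ interacts cleanly with the data-dependent smoothing operator $(\hat C+\lambda\mathcal{I})^{-1}$ through the evaluation functionals, i.e.\ verifying the joint weak convergence and the continuity of $(h,A)\mapsto\langle h,(A+\lambda\mathcal{I})^{-1}k(v_j,\cdot)\rangle_{\mathcal{H}_k}$ at $(G,C)$; this is exactly where fixing $\lambda>0$ is indispensable, and where one must check that the limiting coordinates coincide with the $Z_i$ appearing in Theorems~2 and 3. A secondary delicate point is ruling out the cancellation $\sum_j w^{*}_\lambda(v_j)=0$ needed for the consistency of $n\hat T^{(2)}_{\mathrm{rand}}$, which again rests on the real-analyticity of $w^{*}_\lambda$.
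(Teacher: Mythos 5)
Your proposal is correct, but it takes a genuinely different route from the paper. The paper works at the level of the finite-dimensional spectral decomposition: it expands $\boldsymbol{k}(v_j)=\sum_{i=1}^n\sigma_i\sqrt{\mu_i}\phi_i(v_j)\boldsymbol{u}_i$, writes $\sqrt{n}\,\boldsymbol{\varepsilon_0}^\top(\boldsymbol{K}+n\lambda\boldsymbol{I})^{-1}\boldsymbol{k}(v_j)$ as a sum over the $n$ empirical eigendirections, and passes to the limit coordinate-by-coordinate using $\sigma_i^2/n\overset{p}{\to}\mu_i$ and the scalar CLT for $n^{-1/2}\sum_l\varepsilon_{0,l}\phi_i(x_l)$. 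You instead lift everything to the operator level, writing the statistic as $\langle\sqrt{n}\,\hat m,(\hat C+\lambda\mathcal{I})^{-1}k(v_j,\cdot)\rangle_{\mathcal{H}_k}$ and combining a Hilbert-space CLT for $\sqrt{n}\,\hat m$ with Hilbert--Schmidt consistency of $\hat C$ and the continuous mapping theorem, only afterwards diagonalizing in the Mercer basis. Your route handles in one clean step the interchange of limits (sum over $n$ terms, each converging) that the paper's coordinate-wise argument treats informally, at the mild cost of invoking the RKHS-valued CLT, which implicitly needs $\mathbb{E}[\varepsilon_0^2]<\infty$ (not literally listed in Assumption A, though clearly intended since $S_i^2$ must be finite). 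Your treatment of the fixed alternative is also more careful than the paper's: the paper simply asserts $\boldsymbol{\varepsilon_0}^\top\boldsymbol{u}_i\phi_i(v_j)\to\infty$, whereas you correctly route the argument through $\hat w(v_j)\overset{p}{\to}w^*_\lambda(v_j)\neq 0$ a.s.\ via analyticity, and you explicitly rule out the cancellation $\sum_j w^*_\lambda(v_j)=0$ for $\hat T^{(2)}_{\mathrm{rand}}$, which the paper's proof does not address. One small wrinkle: your claim that $w^*_\lambda\not\equiv 0$ implies $w^*_\lambda$ is non-constant is not justified in general; however, if $w^*_\lambda$ were a nonzero constant $c_0$ the sum would equal $Jc_0\neq 0$ directly, so the conclusion survives either way.
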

\begin{proof}
    See Appendix~\ref{proof:thm4}.
\end{proof}

The asymptotic properties of the random location test statistics under the local alternative $H_{1,n}$ mirror those of the projection-based statistics. The following theorem summarizes these results.

\begin{theorem}
    Under the local alternative hypothesis $H_{1,n}$, and assuming Assumptions A, B and C hold, we have 
    \begin{align*}
        & n \hat T_{\mathrm{rand}}^{(1)}  \overset{d}{\longrightarrow}  \sum_{j=1}^{J} \left(\sum_{i\geq 1} \frac{\mu_i\phi_i(v_j)}{\mu_i+\lambda} \left(Z_i + \mathbb{E}(R(X) \phi_i(X))\right)\right)^2,\\
        & n \hat T_{\mathrm{rand}}^{(2)} \overset{d}{\longrightarrow}  \left(\sum_{j=1}^{J}\sum_{i\geq 1} \frac{\mu_i\phi_i(v_j)}{\mu_i+\lambda} \left(Z_i + \mathbb{E}(R(X) \phi_i(X))\right)\right)^2,
    \end{align*}
    where $\{Z_i\}_{i\geq 1}$ are as defined in Theorem 2.
\end{theorem}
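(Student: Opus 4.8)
## Proof Proposal for Theorem 6

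The plan is to reduce this to the local-alternative analysis already carried out for the projection-based statistics (Theorem 3) combined with the null-case limit theory for the random location statistics (Theorem 4). Under $H_{1,n}$, the observed residual vector decomposes as $\boldsymbol{Y} - \mathcal{M}_{\theta_0}(X) = \boldsymbol{\varepsilon_0} + R(X)/\sqrt{n}$, so the building block of both statistics, $\sqrt{n}\,\boldsymbol{\varepsilon_0}^\top(\boldsymbol{K}+n\lambda\boldsymbol{I})^{-1}\boldsymbol{k}(v_j)$, gets replaced by
\[
    \sqrt{n}\left(\boldsymbol{\varepsilon_0} + \tfrac{1}{\sqrt{n}}\boldsymbol{R}\right)^\top\!\left(\boldsymbol{K}+n\lambda\boldsymbol{I}\right)^{-1}\boldsymbol{k}(v_j)
    = \sqrt{n}\,\boldsymbol{\varepsilon_0}^\top\!\left(\boldsymbol{K}+n\lambda\boldsymbol{I}\right)^{-1}\boldsymbol{k}(v_j)
      + \boldsymbol{R}^\top\!\left(\boldsymbol{K}+n\lambda\boldsymbol{I}\right)^{-1}\boldsymbol{k}(v_j),
\]
where $\boldsymbol{R} = (R(x_1),\dots,R(x_n))^\top$. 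The first term converges in distribution to $\sum_{i\ge 1}\frac{\mu_i\phi_i(v_j)}{\mu_i+\lambda}Z_i$ by Theorem 4. The task is therefore to show the deterministic-signal term $\boldsymbol{R}^\top(\boldsymbol{K}+n\lambda\boldsymbol{I})^{-1}\boldsymbol{k}(v_j)$ converges in probability to $\sum_{i\ge 1}\frac{\mu_i\phi_i(v_j)}{\mu_i+\lambda}\mathbb{E}(R(X)\phi_i(X))$, and then to assemble the joint convergence over $j=1,\dots,J$ and apply the continuous mapping theorem.

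First I would establish the limit of the signal term. Using the SVD $\boldsymbol{\Phi} = \sum_i \boldsymbol{u}_i\sigma_i(\sqrt{\mu_i}\phi_i)$ and the identity $\boldsymbol{k}(v_j) = \boldsymbol{\Phi}(\cdot)(v_j)$ evaluated pointwise — equivalently writing $\boldsymbol{k}(v_j) = \sum_i \boldsymbol{u}_i\sigma_i\sqrt{\mu_i}\phi_i(v_j)$ — the same eigen-decomposition manipulation used in the proofs of Lemma 1 and Theorem 4 gives
\[
    \boldsymbol{R}^\top\!\left(\boldsymbol{K}+n\lambda\boldsymbol{I}\right)^{-1}\boldsymbol{k}(v_j)
    = \sum_{i=1}^n \frac{\sigma_i^2/n}{\sigma_i^2/n+\lambda}\,\frac{\phi_i(v_j)}{\sigma_i}\,\frac{1}{\sqrt{n}}\boldsymbol{R}^\top\boldsymbol{u}_i
    \cdot\sqrt{\mu_i}\,\frac{\sqrt{n}}{\sigma_i}\cdot(\ldots),
\]
which after collecting terms is structurally the empirical counterpart of $\sum_i \frac{\mu_i}{\mu_i+\lambda}\phi_i(v_j)\,\widehat{\mathbb{E}}(R(X)\phi_i(X))$. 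The key facts to invoke here are the standard kernel concentration results already used in Theorem 4's proof: $\sigma_i^2/n \overset{p}{\to}\mu_i$, the empirical eigenfunctions evaluated at fixed $v_j$ converge to $\phi_i(v_j)$, and $\frac{1}{n}\boldsymbol{R}^\top\boldsymbol{\Phi} \overset{p}{\to} \mathbb{E}(R(X)k(X,\cdot))$ in $\mathcal{H}_k$ (a law of large numbers for Hilbert-space-valued averages, valid since $R\in L_2(\mathbb{P}_X)$ and $k$ is bounded by Assumption A(i)). Boundedness of $\mu_i/(\mu_i+\lambda)\le 1$, the summability $\sum\mu_i<\infty$, and analyticity of $\phi_i(v_j)$ (Assumption C, Lemma~\ref{lemma:analyticfunction}) control the tail of the series uniformly so the truncation-plus-limit argument goes through exactly as in Theorem 4.

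Next I would upgrade to joint convergence. Stack the $J$ building blocks into a vector in $\mathbb{R}^J$; since the randomness in the stochastic part comes from the common sequence $\{Z_i\}$ (equivalently the common empirical process $\frac{1}{\sqrt n}\sum\varepsilon_{0,\ell}k(x_\ell,\cdot)$), the joint limit is $\big(\sum_i \frac{\mu_i\phi_i(v_j)}{\mu_i+\lambda}(Z_i+\mathbb{E}(R(X)\phi_i(X)))\big)_{j=1}^J$ — the signal shift is added coordinatewise to the same limiting object from Theorem 4. Then $n\hat T_{\mathrm{rand}}^{(1)}$ is the squared Euclidean norm of this vector and $n\hat T_{\mathrm{rand}}^{(2)}$ is the square of the sum of its coordinates; both are continuous functionals, so the continuous mapping theorem delivers the stated limits.

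The main obstacle is the rigorous justification that the pointwise evaluations $\boldsymbol{R}^\top(\boldsymbol{K}+n\lambda\boldsymbol{I})^{-1}\boldsymbol{k}(v_j)$ converge to the infinite series with its $\phi_i(v_j)$ factors: unlike the RKHS-inner-product functionals in Theorems 2–3, this involves evaluating (empirical) eigenfunctions at a fixed point, so one needs the $\phi_i(v_j)$ to be well-behaved and the series $\sum_i \frac{\mu_i}{\mu_i+\lambda}\phi_i(v_j)c_i$ to converge — this is precisely where Assumption C (analyticity/Mercer) and the boundedness of the $\phi_i$ under Assumption A(i) are essential, and it is the step already dispatched in the proof of Theorem 4, so here it reduces to checking that adding the extra deterministic term $\frac{1}{\sqrt n}\boldsymbol{R}$ does not disturb those arguments — which it does not, since $\frac{1}{n}\|\boldsymbol{R}\|^2 \overset{p}{\to}\mathbb{E}(R(X)^2)<\infty$. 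The remainder is bookkeeping: combine, take the continuous map, done.
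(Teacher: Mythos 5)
Your proposal follows essentially the same route as the paper's proof: decompose the local-alternative residual as $\boldsymbol{\varepsilon_0} + \boldsymbol{R}/\sqrt{n}$ so that $\sqrt{n}\,\boldsymbol{\tilde\varepsilon_0}^\top(\boldsymbol{K}+n\lambda\boldsymbol{I})^{-1}\boldsymbol{k}(v_j)$ splits into the Theorem~4 stochastic term plus the drift term $\boldsymbol{R}^\top(\boldsymbol{K}+n\lambda\boldsymbol{I})^{-1}\boldsymbol{k}(v_j)$, show via the same SVD expansion and the convergences $\sigma_i^2/n\overset{p}{\to}\mu_i$, $\tfrac{1}{n}\sum_l R(x_l)\phi_i(x_l)\overset{p}{\to}\mathbb{E}(R(X)\phi_i(X))$ that the drift tends to $\sum_{i\ge 1}\tfrac{\mu_i\phi_i(v_j)}{\mu_i+\lambda}\mathbb{E}(R(X)\phi_i(X))$, and conclude by joint convergence and the continuous mapping theorem. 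Your additional remarks on series tail control and explicit joint convergence over $j=1,\dots,J$ only make explicit steps the paper leaves implicit, so the argument is correct and matches the paper.
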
 
\begin{proof}
    See Appendix~\ref{proof:thm5}.
\end{proof}

To demonstrate the interpretability and visualization capabilities of the coefficient function (and the random location test statistics), consider the following three data-generating processes (DGPs): 
\begin{align*}
    & \text{DGP0:} \quad Y = X^\top \beta + e \\
    & \text{DGP1:} \quad Y = X^\top \beta + 4.5  (X^\top \beta)^2 + e \\
    & \text{DGP2:} \quad Y = X^\top \beta + 4.5 \exp\left(- (X^\top \beta)^2 \right) + e
\end{align*}
where $\beta$ is a $2 \times 1$ vector of ones, $X$ is a $2 \times 1$ vector with entries independently drawn from the standard normal distribution, and $e$ is an independent standard normal random variable. DGP0 represents the null model, while DGP1 and DGP2 correspond to alternative models.

For these examples, we use the true residuals: $\varepsilon_{0} = e$ for DGP0, $\varepsilon_{0} = 4.5 (X^\top \beta)^2 + e$ for DGP1, and $\varepsilon_{0} = 4.5 \exp\left(- (X^\top \beta)^2 \right) + e$ for DGP2. The Gaussian kernel is employed, with both the kernel parameter $\gamma$ and the regularization parameter $\lambda$ selected via cross-validation. Figure~\ref{fig:random_location} displays the estimated coefficient function for these three DGPs.
\begin{figure}[H]
    \centering
    \begin{subfigure}[b]{0.32\textwidth}
        \includegraphics[width=\textwidth]{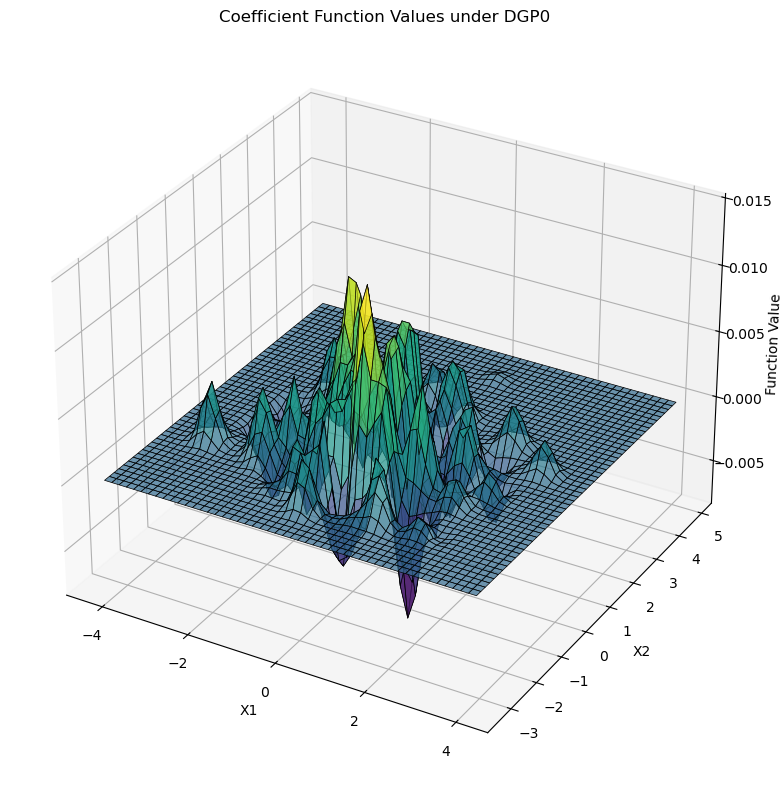}
        \caption{DGP0 (Null)}
    \end{subfigure}
    \hfill
    \begin{subfigure}[b]{0.32\textwidth}
        \includegraphics[width=\textwidth]{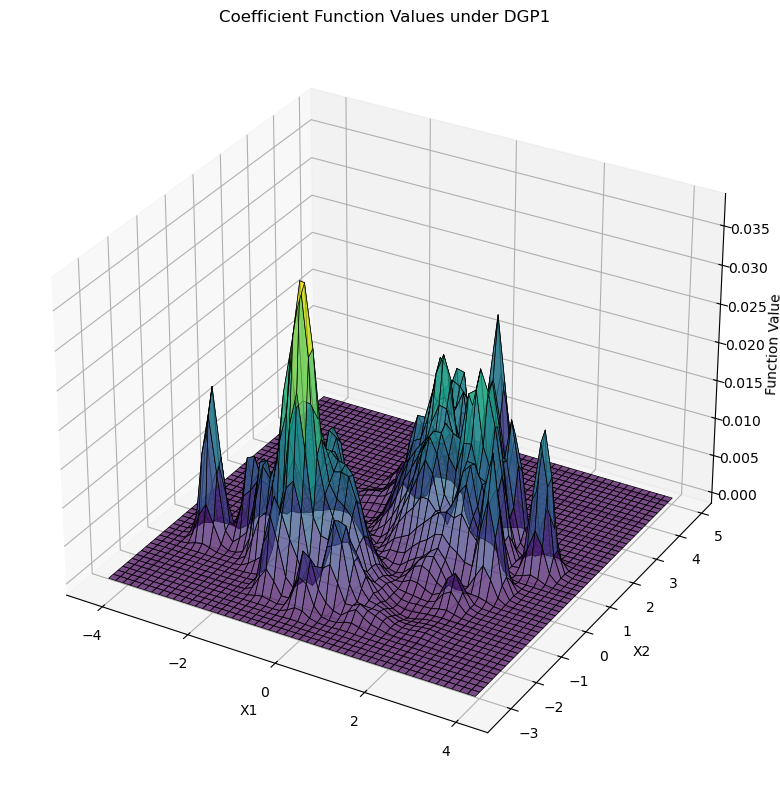}
        \caption{DGP1 (Alternative 1)}
    \end{subfigure}
    \hfill
    \begin{subfigure}[b]{0.32\textwidth}
        \includegraphics[width=\textwidth]{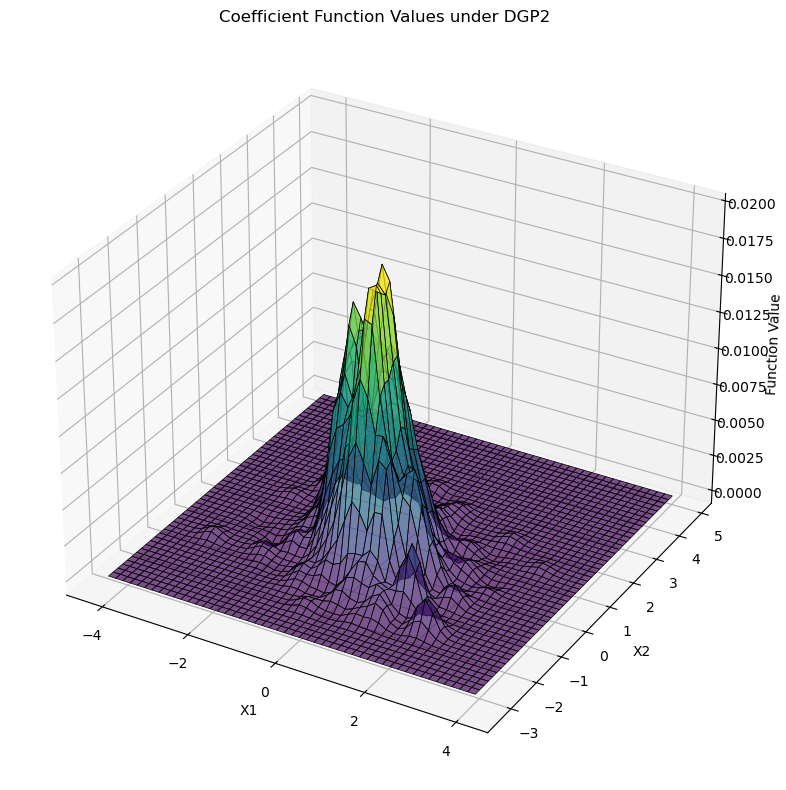}
        \caption{DGP2 (Alternative 2)}
    \end{subfigure}
    \caption{Values of the Estimated Coefficient Function $\hat w$ for the three DGPs.}
    \label{fig:random_location}
\end{figure}

Under the null model (DGP0), values of the coefficient function are close to zero, indicating no evidence against the model. In contrast, under the alternative models (DGP1 and DGP2), the coefficient function values are substantially larger, reflecting the presence of model misspecification. Furthermore, the spatial patterns of the coefficient function align with the structure of the underlying alternatives: for DGP1, larger values are observed in regions where both components of $X$ have large magnitudes and the same sign, while for DGP2, the coefficient function peaks near regions where $X^\top \beta$ is close to zero. This demonstrates that the coefficient function not only detect deviations from the null but also provide interpretable insights into where the model fails.

\section{Estimation Effects}
\label{sec:estimation_effect}
So far, we have assumed that the value of $\theta_0$ is known. In practice, $\theta_0$ is estimated by a consistent estimator $\hat{\theta}$. In this section, we discuss how to deal with the estimation effect when $\theta_0$ is estimated. 

\subsection{Eliminating Estimating Effects via Projection}
We adopt the approach advocated by \cite{escanciano2014specification,escanciano2024gaussian,sant2019specification}, which employs Neyman orthogonality projections to remove the impact of parameter estimation. This idea has been further extended by \cite{sancetta2022testing}, who consider subspace restrictions with high-dimensional nuisance parameters estimated via penalized methods in an RKHS. In contrast to the conventional wild bootstrap methods used in, for example, \cite{delgado2006consistent}, this projection-based approach does not depend on a linear Bahadur representation for the estimator of $\theta_0$, nor does it require re-estimating parameters within each bootstrap sample.

The following assumptions are required for this purpose. 

\textbf{Assumption D.} (i) The parameter space $\Theta$ is a compact subset of $\mathbb{R}^d$; (ii) the true parameter $\theta_0$ is an interior point of $\Theta$; and (iii) the consistent estimator $\hat{\theta}$ satisfies $\lVert \hat{\theta} - \theta_0 \rVert = O_p (n^{-\alpha})$, with $\alpha > 1/4$.

Assumption D is weaker than related conditions in the literature. We only impose $\hat{\theta}$ converges in probability at a slower rate than usual. In addition, we also do not require it to admit an asymptotically linear representation. This could be useful in the context of non-standard estimation procedures, such as the LASSO. 

Additional regular conditions on the smoothness of the residual function is also required.

\textbf{Assumption E.} (i) The residual $\varepsilon(s;\theta)$ is twice continuously differentiable with respect to $\theta$, with its first derivative $g_{\theta}(x) = \mathbb{E}(\nabla_{\theta} \varepsilon(s;\theta)|X=x)$ satisfying $\mathbb{E}\left(\sup_{\theta \in \Theta} \lVert g_{\theta}(X) \rVert\right) < \infty$ and its second derivative satisfying $\mathbb{E}\left(\sup_{\theta \in \Theta} \lVert \nabla g_{\theta}(X) \rVert\right) < \infty$; (ii) the matrix $\Gamma_{\theta} = \mathbb{E}\left[g_{\theta}(X) g_{\theta}(X) ^\top\right]$ is nonsigular in a neighborhood of $\theta_0$.

Under Assumptions D and E, we now introduce a projection operator $\boldsymbol{\Pi}$, acting on a random variable (or random vector) $\omega(S)$ and its realization $\omega(s)$, defined as follows: 
\[
    \boldsymbol{\Pi} \omega(s) = \omega(s) - \mathbb{E}(\omega(S) g_{\theta_0}^\top(X)) \Gamma_{\theta_0}^{-1} g_{\theta_0}(x) 
\]
Applying this projection operator to the residual function leads to modified versions of $w^*_\lambda$ and $m^*$:
\begin{align*}
    & w^*_{\lambda,\perp} = \sum_{i=1}^\infty \frac{\mu_i}{\mu_i+\lambda} \mathbb{E}(\boldsymbol{\Pi} \varepsilon_0 \phi_i(X)) \phi_i \\
    & m^*_{\perp} = \mathbb{E}(\boldsymbol{\Pi} \varepsilon_0 K(X,\cdot))
\end{align*}
To analyze the local behavior of these functions in a neighborhood of $\theta_0$, consider their derivatives with respect to $\theta$ evaluated at $\theta_0$:
\begin{align*}
    \frac{\partial}{\partial \theta} w^*_{\lambda,\perp}(\theta)\bigg|_{\theta = \theta_0} &= \sum_{i=1}^\infty \frac{\mu_i}{\mu_i+\lambda} \mathbb{E}\left(\boldsymbol{\Pi} g_{\theta_0}(X) \phi_i(X)\right) \phi_i \\
    &= \boldsymbol{0},
\end{align*}
and
\begin{align*}
    \frac{\partial}{\partial \theta} m^*_{\perp}(\theta)\bigg|_{\theta = \theta_0} &= \mathbb{E}\left(\boldsymbol{\Pi} g_{\theta_0}(X) K(X,\cdot)\right) \\
    &= \boldsymbol{0}.
\end{align*}
This establishes that the model checks are locally robust to small perturbations in $\theta$ around $\theta_0$. The vanishing derivatives follow from the dominated convergence theorem and the fact that
\begin{align*}
    \frac{\partial}{\partial \theta} \boldsymbol{\Pi}\varepsilon(s;\theta)\bigg|_{\theta = \theta_0} = \boldsymbol{\Pi} g_{\theta_0}(x) = 0.
\end{align*}

The matrix estimator of this projection operator is given by 
\[
    \hat{\boldsymbol{\Pi}} = \boldsymbol{I}_n -  \mathbb{G}\left(\mathbb{G}^\top \mathbb{G}\right)^{-1} \mathbb{G}^\top
\]
where $\mathbb{G}$ is a $n \times d$ matrix of scores whose $i$th row is given by $\hat g_{i}^\top = (\nabla_{\theta} \varepsilon(s_i;\theta)|_{\theta = \hat \theta})^\top$, and $\boldsymbol{I}_n$ is the $n \times n$ identity matrix.

The following lemma states how this projection operator eliminates the estimation effect in finite sample. 

\begin{lemma}
    Suppose Assumption D holds, then 
    \[
        \frac{1}{\sqrt{n}}  (\hat{\boldsymbol{\Pi}} \boldsymbol{\varepsilon}_{\hat \theta})^\top \boldsymbol{u}_i  = \frac{1}{\sqrt{n}} (\hat{\boldsymbol{\Pi}} \boldsymbol{\varepsilon_0})^\top \boldsymbol{u}_i + O_p(n^{-2 \alpha})
    \]
    and 
    \[
        \frac{1}{n} (\hat{\boldsymbol{\Pi}} \boldsymbol{\varepsilon}_{\hat \theta})^\top \boldsymbol{\Phi} = \frac{1}{n} (\hat{\boldsymbol{\Pi}} \boldsymbol{\varepsilon_0})^\top \boldsymbol{\Phi} + O_p(n^{-2 \alpha})
    \]
\end{lemma}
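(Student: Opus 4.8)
The plan is to Taylor-expand the estimated residual $\varepsilon_{\hat\theta,i}$ around $\theta_0$ and track how the projection $\hat{\boldsymbol{\Pi}}$ annihilates the leading (linear-in-$\hat\theta-\theta_0$) term, leaving only a second-order remainder. Writing $\boldsymbol{\varepsilon}_{\hat\theta} = \boldsymbol{\varepsilon_0} + \mathbb{G}_0(\hat\theta-\theta_0) + \boldsymbol{r}_n$, where $\mathbb{G}_0$ is the $n\times d$ matrix of true scores $\nabla_\theta\varepsilon(s_i;\theta_0)^\top$ and $\boldsymbol{r}_n$ collects the quadratic remainder, Assumption E(i) (bounded second derivative in expectation) together with Assumption D(iii) gives $\|\boldsymbol{r}_n\|/\sqrt{n} = O_p(\|\hat\theta-\theta_0\|^2) = O_p(n^{-2\alpha})$ componentwise in the relevant quadratic forms. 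The key algebraic fact is that $\hat{\boldsymbol{\Pi}}$, being the residual-maker matrix for the column space of $\mathbb{G}$, satisfies $\hat{\boldsymbol{\Pi}}\mathbb{G} = \boldsymbol{0}$ exactly — but here I need $\hat{\boldsymbol{\Pi}}$ to nearly kill $\mathbb{G}_0$, not $\mathbb{G}$. So the first step is to show $\mathbb{G} = \mathbb{G}_0 + O_p(n^{-\alpha})$ entrywise (using twice-differentiability of $\varepsilon$ and consistency of $\hat\theta$), hence $\hat{\boldsymbol{\Pi}}\mathbb{G}_0 = \hat{\boldsymbol{\Pi}}(\mathbb{G}_0 - \mathbb{G}) = O_p(n^{-\alpha})$ in an appropriate norm, since $\hat{\boldsymbol{\Pi}}$ is a projection (operator norm $1$).

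Next I would assemble the pieces for the first claim. We have
\[
    \frac{1}{\sqrt n}(\hat{\boldsymbol{\Pi}}\boldsymbol{\varepsilon}_{\hat\theta})^\top\boldsymbol{u}_i
    = \frac{1}{\sqrt n}(\hat{\boldsymbol{\Pi}}\boldsymbol{\varepsilon_0})^\top\boldsymbol{u}_i
    + \frac{1}{\sqrt n}\big(\hat{\boldsymbol{\Pi}}\mathbb{G}_0(\hat\theta-\theta_0)\big)^\top\boldsymbol{u}_i
    + \frac{1}{\sqrt n}(\hat{\boldsymbol{\Pi}}\boldsymbol{r}_n)^\top\boldsymbol{u}_i .
\]
The third term is $O_p(n^{-2\alpha})$ by the remainder bound above and $\|\boldsymbol{u}_i\|=1$, $\|\hat{\boldsymbol{\Pi}}\|\le 1$. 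For the second term, write it as $\frac{1}{\sqrt n}(\hat\theta-\theta_0)^\top\mathbb{G}_0^\top\hat{\boldsymbol{\Pi}}\boldsymbol{u}_i$; using $\hat{\boldsymbol{\Pi}}\mathbb{G}_0 = \hat{\boldsymbol{\Pi}}(\mathbb{G}_0-\mathbb{G})$ this is bounded by $\|\hat\theta-\theta_0\|\cdot\frac{1}{\sqrt n}\|\mathbb{G}_0-\mathbb{G}\|_{\mathrm{op}}$, and since $\|\mathbb{G}_0-\mathbb{G}\|_{\mathrm{op}} = O_p(\sqrt n\cdot n^{-\alpha})$ (each of the $n$ rows is $O_p(n^{-\alpha})$, so the Frobenius norm is $O_p(\sqrt n\, n^{-\alpha})$, dominating the operator norm), this term is $O_p(n^{-\alpha}\cdot n^{-\alpha}) = O_p(n^{-2\alpha})$. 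Combining the two remainder terms gives the first display. The second display, for $\frac{1}{n}(\hat{\boldsymbol{\Pi}}\boldsymbol{\varepsilon}_{\hat\theta})^\top\boldsymbol{\Phi}$, follows by the identical decomposition, now using boundedness of the kernel (Assumption A(i)) so that $\frac{1}{n}\|\boldsymbol{\Phi}\|_{\mathcal{H}_k}$-type quantities are $O_p(1)$, and the same two bounds on $\hat{\boldsymbol{\Pi}}\mathbb{G}_0(\hat\theta-\theta_0)$ and $\hat{\boldsymbol{\Pi}}\boldsymbol{r}_n$ carry through; the normalization by $1/n$ rather than $1/\sqrt n$ only helps.

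The main obstacle I anticipate is controlling $\|\mathbb{G}_0 - \mathbb{G}\|$ uniformly and sharply enough: one needs a uniform-in-$\theta$ bound on the second derivative of $\varepsilon$ to convert $\|\hat\theta-\theta_0\| = O_p(n^{-\alpha})$ into the entrywise $O_p(n^{-\alpha})$ for the score differences, and then one must be careful whether the accumulated matrix norm is $O_p(\sqrt n\, n^{-\alpha})$ or worse. This is exactly where Assumption E(i) — the integrable supremum of $\|\nabla g_\theta(X)\|$ — does the work, via a mean-value argument combined with a law of large numbers (or a crude moment bound) to control the average squared score difference; the condition $\alpha > 1/4$ is then precisely what makes $n^{-2\alpha} = o(1)$, which is the rate needed downstream to kill the estimation effect in the test statistics. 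A secondary technical point is that $\boldsymbol{r}_n$ itself involves $\hat{\boldsymbol\theta}$-dependent second derivatives, so its bound also relies on the uniform (sup over $\Theta$) integrability in Assumption E(i) rather than just pointwise smoothness at $\theta_0$.
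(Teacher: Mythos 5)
Your proposal is correct and takes essentially the same route as the paper: both arguments rest on a Taylor/mean-value expansion of $\boldsymbol{\varepsilon}_{\hat\theta}$ around $\theta_0$, the exact annihilation $\hat{\boldsymbol{\Pi}}\mathbb{G} = \boldsymbol{0}$, and the second-derivative smoothness in Assumption E to show that shifting the score matrix (from $\mathbb{G}_0$ to $\mathbb{G}$ in your version, from the mean-value point $\bar\theta$ to $\hat\theta$ in the paper's) costs only $O_p(n^{-2\alpha})$. The paper additionally rewrites $\boldsymbol{u}_i$ as $(\sqrt{\mu_i}/\sigma_i)\boldsymbol{\Phi}\phi_i$ before expanding, but that difference is cosmetic.
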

\begin{proof}
    See Appendix~\ref{proof:lem4}.
\end{proof}

As long as the convergence speed satisfies $\alpha>1/4$, we have 
\begin{align*}
    \sqrt{n} \hat w_{\perp}  & =\sqrt{n} (\hat{\boldsymbol{\Pi}} \boldsymbol{\varepsilon}_{\hat \theta})^\top \left(\boldsymbol{K} + n \lambda \boldsymbol{I}\right)^{-1} \boldsymbol{\Phi} \\
    & = \sqrt{n} \sum_{i=1}^{n} \left(\frac{\sigma_i^2}{n}+\lambda\right)^{-1} \frac{\sigma_i}{\sqrt{n}} \sqrt{\mu_i} \frac{1}{\sqrt{n}} (\hat{\boldsymbol{\Pi}} \boldsymbol{\varepsilon}_{\hat \theta})^\top \boldsymbol{u}_i \phi_i \\
    & = \sqrt{n} \sum_{i=1}^{n} \left(\frac{\sigma_i^2}{n}+\lambda\right)^{-1} \frac{\sigma_i}{\sqrt{n}} \sqrt{\mu_i}  \left( \frac{1}{\sqrt{n}} (\hat{\boldsymbol{\Pi}} \boldsymbol{\varepsilon_0})^\top \boldsymbol{u}_i + O_p(n^{-2 \alpha})  \right) \phi_i \\
    & = \sqrt{n} \sum_{i=1}^{n} \left(\frac{\sigma_i^2}{n}+\lambda\right)^{-1} \frac{\sigma_i}{\sqrt{n}} \sqrt{\mu_i} \frac{1}{\sqrt{n}} (\hat{\boldsymbol{\Pi}} \boldsymbol{\varepsilon_0})^\top \boldsymbol{u}_i \phi_i +\boldsymbol{o_p(1)}\\
    & = \sqrt{n}  (\hat{\boldsymbol{\Pi}} \boldsymbol{\varepsilon_0})^\top \left(\boldsymbol{K} + n \lambda \boldsymbol{I}\right)^{-1} \boldsymbol{\Phi} +\boldsymbol{o_p(1)}
\end{align*}
Here $\boldsymbol{o_p(1)}$ is a function in $\mathcal{H}_k$ whose RKHS norm converges to zero in probability. 

Similarly, 
\begin{align*}
    \sqrt{n} \hat m_{\perp} & = \frac{\sqrt{n}}{n} (\hat{\boldsymbol{\Pi}} \boldsymbol{\varepsilon}_{\hat \theta})^\top  \boldsymbol{\Phi} \\
    & = \frac{1}{\sqrt{n}} (\hat{\boldsymbol{\Pi}} \boldsymbol{\varepsilon_0})^\top  \boldsymbol{\Phi}+\boldsymbol{o_p(1)}
\end{align*}

Thus, we can replace $\boldsymbol{\varepsilon}_{\hat \theta}$ with $\boldsymbol{\varepsilon_0}$ in the test statistics, leading to 
\begin{align*}
    n \hat T_{proj,\perp}^{(1)} = n \langle \hat w_{\perp}, \hat w_{\perp} \rangle_{\mathcal{H}_k}&  = n (\hat{\boldsymbol{\Pi}} \boldsymbol{\varepsilon}_{\hat \theta})^\top \left(\boldsymbol{K} + n \lambda \boldsymbol{I}\right)^{-1} \boldsymbol{K} \left(\boldsymbol{K} + n \lambda \boldsymbol{I}\right)^{-1}(\hat{\boldsymbol{\Pi}} \boldsymbol{\varepsilon}_{\hat \theta})\\
    & =n (\hat{\boldsymbol{\Pi}} \boldsymbol{\varepsilon_0})^\top \left(\boldsymbol{K} + n \lambda \boldsymbol{I}\right)^{-1} \boldsymbol{K} \left(\boldsymbol{K} + n \lambda \boldsymbol{I}\right)^{-1} (\hat{\boldsymbol{\Pi}} \boldsymbol{\varepsilon_0}) + o_p(1)\\ 
\end{align*}

\begin{align*}
    n \hat T_{proj,\perp}^{(2)} = n \langle \hat w_{\perp}, \hat m_{\perp} \rangle_{\mathcal{H}_k} & =  (\hat{\boldsymbol{\Pi}} \boldsymbol{\varepsilon}_{\hat \theta})^\top \left(\boldsymbol{K} + n \lambda \boldsymbol{I}\right)^{-1} \boldsymbol{K}  (\hat{\boldsymbol{\Pi}} \boldsymbol{\varepsilon}_{\hat \theta})\\
    & =(\hat{\boldsymbol{\Pi}} \boldsymbol{\varepsilon_0})^\top \left(\boldsymbol{K} + n \lambda \boldsymbol{I}\right)^{-1}   \boldsymbol{K} (\hat{\boldsymbol{\Pi}} \boldsymbol{\varepsilon_0}) + o_p(1)\\
\end{align*}

The same projection can be applied to the random location test statistics:
\begin{align*}
    n \hat T_{rand,\perp}^{(1)} & = n \sum_{j=1}^{J} (\hat{\boldsymbol{\Pi}} \boldsymbol{\varepsilon}_{\hat \theta})^\top \left(\boldsymbol{K} + n \lambda \boldsymbol{I}\right)^{-1} \boldsymbol{k}(v_j) \boldsymbol{k}(v_j)^\top \left(\boldsymbol{K} + n \lambda \boldsymbol{I}\right)^{-1} (\hat{\boldsymbol{\Pi}} \boldsymbol{\varepsilon}_{\hat \theta})\\
    & =n \sum_{j=1}^{J} (\hat{\boldsymbol{\Pi}} \boldsymbol{\varepsilon_0})^\top \left(\boldsymbol{K} + n \lambda \boldsymbol{I}\right)^{-1}  \boldsymbol{k}(v_j)  \boldsymbol{k}(v_j)^\top  \left(\boldsymbol{K} + n \lambda \boldsymbol{I}\right)^{-1} (\hat{\boldsymbol{\Pi}} \boldsymbol{\varepsilon_0}) + o_p(1)\\
\end{align*}
and 
\begin{align*}
    n \hat T_{rand,\perp}^{(2)} & = n \left(\sum_{j=1}^{J} (\hat{\boldsymbol{\Pi}} \boldsymbol{\varepsilon}_{\hat \theta})^\top \left(\boldsymbol{K} + n \lambda \boldsymbol{I}\right)^{-1} \boldsymbol{k}(v_j)\right)^2\\
    & = n \left(\sum_{j=1}^{J} (\hat{\boldsymbol{\Pi}} \boldsymbol{\varepsilon_0})^\top \left(\boldsymbol{K} + n \lambda \boldsymbol{I}\right)^{-1}  \boldsymbol{k}(v_j)\right)^2 + o_p(1)\\
\end{align*}

As a result, all asymptotic results established in the previous sections continue to hold after applying the projection, with the following modifications: in the null and local alternative distributions, $Z_i$ is replaced by $Z_{i,\perp}$, and $\mathbb{E}(R(X)\phi_i(X))$ is replaced by $\mathbb{E}(\boldsymbol{\Pi} R(X)\phi_i(X))$ in the local alternative case. Here, $Z_{i,\perp} \sim \mathcal{N}(0, S_{i,\perp}^2)$ are independent normal random variables with variance $S_{i,\perp}^2 = \mathrm{Var}((\boldsymbol{\Pi} \varepsilon_0) \phi_i(X))$.

\subsection{Booststrapping the Null Distributions}
We employ the multiplier bootstrap to approximate the null distributions of the test statistics. For theoretical justification, we use the notion of almost surely (a.s.) consistency, denoted by $\overset{d^*}{\longrightarrow}$; see Chapter 2.9 in \cite{vaart1997weak}. The general idea is to generate a sequence of i.i.d. random variables $\{v_i\}_{i=1}^n$ with zero mean, unit variance, bounded support, and independent of the data $\{s_i\}_{i=1}^n$. These are used to construct the bootstrap sample $\{\varepsilon(s_i;\hat \theta) v_i\}_{i=1}^n$. A classical choice for such random variables is the Mammen two-point distribution:
\[
    \mathbb{P}(V_i = 0.5(1-\sqrt{5}))=b, \quad \mathbb{P}(V_i = 0.5(1+\sqrt{5}))=1-b,
\]
where $b = (1+\sqrt{5})/(2\sqrt{5})$; see \cite{mammen1993bootstrap} for details.

Let $\mathbf{a} \odot \mathbf{b}$ denote the element-wise (Hadamard) product. The bootstrap analogues of our test statistics are then given by
\begin{align*}
    & n \tilde T_{proj, \perp}^{(1)} = n (\hat{\boldsymbol{\Pi}} (\boldsymbol{\varepsilon}_{\hat \theta} \odot \boldsymbol{V}))^\top \left(\boldsymbol{K} + n \lambda \boldsymbol{I}\right)^{-1} \boldsymbol{K} \left(\boldsymbol{K} + n \lambda \boldsymbol{I}\right)^{-1} (\hat{\boldsymbol{\Pi}} (\boldsymbol{\varepsilon}_{\hat \theta} \odot \boldsymbol{V})), \\
    & n \tilde T_{proj, \perp}^{(2)} = (\hat{\boldsymbol{\Pi}} (\boldsymbol{\varepsilon}_{\hat \theta} \odot \boldsymbol{V}))^\top \left(\boldsymbol{K} + n \lambda \boldsymbol{I}\right)^{-1} \boldsymbol{K} (\hat{\boldsymbol{\Pi}} (\boldsymbol{\varepsilon}_{\hat \theta} \odot \boldsymbol{V})), \\
    & n \tilde T_{rand,\perp}^{(1)} = n \sum_{j=1}^{J} (\hat{\boldsymbol{\Pi}} (\boldsymbol{\varepsilon}_{\hat \theta} \odot \boldsymbol{V}))^\top \left(\boldsymbol{K} + n \lambda \boldsymbol{I}\right)^{-1} \boldsymbol{k}(v_j) \boldsymbol{k}(v_j)^\top \left(\boldsymbol{K} + n \lambda \boldsymbol{I}\right)^{-1} (\hat{\boldsymbol{\Pi}} (\boldsymbol{\varepsilon}_{\hat \theta} \odot \boldsymbol{V})), \\
    & n \tilde T_{rand,\perp}^{(2)} = n \left(\sum_{j=1}^{J} (\hat{\boldsymbol{\Pi}} (\boldsymbol{\varepsilon}_{\hat \theta} \odot \boldsymbol{V}))^\top \left(\boldsymbol{K} + n \lambda \boldsymbol{I}\right)^{-1} \boldsymbol{k}(v_j)\right)^2,
\end{align*}
where $\boldsymbol{V} = (v_1, \ldots, v_n)^\top$.

\begin{theorem}
    Under Assumptions D and E, the multiplier bootstrap test statistics satisfy the following a.s. consistency results:
    \begin{align*}
        & n \tilde T_{proj, \perp}^{(1)} \overset{d^*}{\longrightarrow} \sum_{i\geq 1}  \frac{\mu_i}{(\mu_i+\lambda)^2} Z_{i,\perp}^2, \\
        & n \tilde T_{proj, \perp}^{(2)} \overset{d^*}{\longrightarrow} \sum_{i\geq 1}  \frac{\mu_i}{\mu_i+\lambda} Z_{i,\perp}^2, \\
        & n \tilde T_{rand,\perp}^{(1)} \overset{d^*}{\longrightarrow} \sum_{j=1}^{J} \left(\sum_{i\geq 1}  \frac{\mu_i \phi_i(v_j)}{\mu_i+\lambda}Z_{i,\perp}\right)^2, \\
        & n \tilde T_{rand,\perp}^{(2)} \overset{d^*}{\longrightarrow} \left(\sum_{j=1}^{J}\sum_{i\geq 1}  \frac{\mu_i \phi_i(v_j)}{\mu_i+\lambda}Z_{i,\perp}\right)^2,
    \end{align*}
    where $Z_{i,\perp} \sim \mathcal{N}(0, S_{i,\perp}^2)$ are independent normal random variables with variances $S_{i,\perp}^2 = \mathrm{Var}((\boldsymbol{\Pi} \varepsilon_0) \phi_i(X))$.
\end{theorem}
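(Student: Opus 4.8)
The plan is to reduce all four bootstrap statistics, via the singular value decomposition of $\boldsymbol{\Phi}$ from Section~\ref{sec:stat}, to continuous functionals of a single conditionally-random coordinate sequence, and then to invoke a conditional central limit theorem. Writing $\hat\mu_i := \sigma_i^2/n$ for the empirical eigenvalues, $\hat\phi_i$ for the associated empirical eigenfunctions (so that $u_{i,m}=\hat\phi_i(x_m)/\sqrt{n}$ and $\boldsymbol{k}(v_j)=\sum_i\sigma_i\sqrt{\hat\mu_i}\,\hat\phi_i(v_j)\boldsymbol{u}_i$), and $\xi_i^* := (\hat{\boldsymbol{\Pi}}(\boldsymbol{\varepsilon}_{\hat\theta}\odot\boldsymbol{V}))^\top\boldsymbol{u}_i$, one has the exact identities $n\tilde T_{proj,\perp}^{(1)}=\sum_{i=1}^n \hat\mu_i(\hat\mu_i+\lambda)^{-2}(\xi_i^*)^2$, $n\tilde T_{proj,\perp}^{(2)}=\sum_{i=1}^n \hat\mu_i(\hat\mu_i+\lambda)^{-1}(\xi_i^*)^2$, and $\sqrt{n}\,(\hat{\boldsymbol{\Pi}}(\boldsymbol{\varepsilon}_{\hat\theta}\odot\boldsymbol{V}))^\top(\boldsymbol{K}+n\lambda\boldsymbol{I})^{-1}\boldsymbol{k}(v_j)=\sum_{i=1}^n \hat\mu_i(\hat\mu_i+\lambda)^{-1}\hat\phi_i(v_j)\,\xi_i^*$, so that $n\tilde T_{rand,\perp}^{(1)}$ and $n\tilde T_{rand,\perp}^{(2)}$ are, respectively, the sum of squares over $j$ and the square of the sum over $j$ of the last quantity. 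Since the four target limits are precisely these functionals evaluated at $(\mu_i,\phi_i(v_j),Z_{i,\perp})$, it suffices to prove: (i) $\hat\mu_i\to\mu_i$ and $\hat\phi_i(v_j)\to\phi_i(v_j)$ almost surely for each fixed $i$ (the standard spectral consistency of empirical kernel operators, available under Assumption~A since $\sum_i\mu_i<\infty$); (ii) a conditional joint central limit theorem $(\xi_1^*,\dots,\xi_m^*)\overset{d^*}{\longrightarrow}(Z_{1,\perp},\dots,Z_{m,\perp})$ for every fixed $m$; (iii) a uniform-in-$n$ tail estimate showing that the indices $i>m$ contribute $o_{p^*}(1)$ to each statistic as $m\to\infty$, for almost every data sequence; and then (iv) the assembly of (i)--(iii) by the truncation-plus-continuous-mapping device of Chapter~2.9 of \cite{vaart1997weak}.

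As a preliminary step I would replace $\boldsymbol{\varepsilon}_{\hat\theta}$ by $\boldsymbol{\varepsilon_0}$ throughout, which is the bootstrap counterpart of Lemma~4 and rests on the same Taylor expansion: $\varepsilon(s_m;\hat\theta)v_m=\varepsilon(s_m;\theta_0)v_m+v_m\hat g_m^\top(\hat\theta-\theta_0)+\tfrac12 v_m(\hat\theta-\theta_0)^\top\nabla g_{\bar\theta_m}(x_m)(\hat\theta-\theta_0)$, where the linear-in-score term is annihilated by $\hat{\boldsymbol{\Pi}}$ because $\hat{\boldsymbol{\Pi}}\mathbb{G}=\boldsymbol{0}$, and the quadratic remainder is $O(n^{-2\alpha})$ uniformly in $m$ by Assumption~E and the almost-sure boundedness of the multipliers, hence $o_p(1)$ after the $\sqrt{n}$ scaling because $\alpha>1/4$. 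It is therefore enough to work with $\tilde\xi_i^* := (\hat{\boldsymbol{\Pi}}(\boldsymbol{\varepsilon_0}\odot\boldsymbol{V}))^\top\boldsymbol{u}_i=\sum_{m=1}^n \varepsilon_{0,m}v_m(\hat{\boldsymbol{\Pi}}\boldsymbol{u}_i)_m$, which, conditionally on the data, is a sum of independent mean-zero summands bounded by $M|\varepsilon_{0,m}|\,|(\hat{\boldsymbol{\Pi}}\boldsymbol{u}_i)_m|$.

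Step (ii) is carried out by the conditional Lindeberg--Feller theorem. The identity $u_{i,m}=\hat\phi_i(x_m)/\sqrt{n}$, together with $n^{-1}\mathbb{G}^\top\mathbb{G}\to\Gamma_{\theta_0}$, $n^{-1}\sum_l\hat g_l\hat\phi_i(x_l)\to\mathbb{E}(g_{\theta_0}(X)\phi_i(X))$ and the spectral consistency $\hat\phi_i\to\phi_i$ in $L_2(\mathbb{P}_X)$, gives $(\hat{\boldsymbol{\Pi}}\boldsymbol{u}_i)_m=n^{-1/2}(\boldsymbol{\Pi}\phi_i(x_m)+o(1))$; substituting this into the conditional second moments shows that the conditional covariance matrix of $(\tilde\xi_1^*,\dots,\tilde\xi_m^*)$ converges almost surely to that of $(Z_{1,\perp},\dots,Z_{m,\perp})$, the identification of the limiting variances with $S_{i,\perp}^2=\mathrm{Var}((\boldsymbol{\Pi}\varepsilon_0)\phi_i(X))$ being a direct computation that uses the self-adjointness of $\boldsymbol{\Pi}$ under $\mathbb{P}_S$ (and, under $H_0$, $\boldsymbol{\Pi}\varepsilon_0=\varepsilon_0$). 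The Lindeberg condition is immediate since each summand is $O_p(n^{-1/2})$ uniformly while the conditional variances are $O_p(1)$, so the conditional law of $(\tilde\xi_1^*,\dots,\tilde\xi_m^*)$ converges to the stated Gaussian for almost every realization of the data --- precisely the mode $\overset{d^*}{\longrightarrow}$. Step (iii) recycles the tail bounds behind the proofs of Theorems~2 and~4: $\sum_i\hat\mu_i=n^{-1}\mathrm{tr}(\boldsymbol{K})\le\sup_{x}k(x,x)<\infty$ by Assumption~A, the weights satisfy $\hat\mu_i(\hat\mu_i+\lambda)^{-2}\le\hat\mu_i\lambda^{-2}$ and $\hat\mu_i(\hat\mu_i+\lambda)^{-1}\le1$, and $\sum_i\hat\mu_i\,\mathbb{E}^*[(\tilde\xi_i^*)^2]=\sum_m\varepsilon_{0,m}^2\,(\hat{\boldsymbol{\Pi}}(\boldsymbol{K}/n)\hat{\boldsymbol{\Pi}})_{mm}=O_p(1)$ with individually summable terms, so the tail beyond $m$ is $o_{p^*}(1)$ uniformly in $n$; for the random-location statistics one also needs $\sum_i \mu_i^2\phi_i(v_j)^2(\mu_i+\lambda)^{-2}S_{i,\perp}^2<\infty$ for $\eta$-almost every $v_j$, which follows from the finiteness of the eigenfunction-evaluation series (cf.\ Lemma~\ref{lemma:analyticfunction}) and $\sum_i\mu_i<\infty$. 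Finally, in step (iv), for each fixed $m$ the continuous mapping theorem applied to the representations above gives $\overset{d^*}{\longrightarrow}$-convergence of the $m$-truncated statistics to the $m$-truncated limits; letting $m\to\infty$ and using (iii) together with the summability of the limiting series, the diagonal argument of Section~2.9 of \cite{vaart1997weak} upgrades this to convergence of the full statistics, yielding the four displayed limits.

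The main obstacle is step (ii): establishing the conditional Lindeberg--Feller CLT in the almost-sure ($\overset{d^*}{\longrightarrow}$) sense and, in particular, proving the \emph{almost-sure} convergence of the conditional covariance matrix to the covariance of $(Z_{i,\perp})_{i\le m}$ --- this forces one to replace the random spectral objects $\hat\mu_i$, $\hat\phi_i$ and $\hat{\boldsymbol{\Pi}}$ by their population limits with almost-sure control and to pin down the limiting variances as $S_{i,\perp}^2$ --- together with making the tail estimate of step (iii) uniform in $n$ while retaining the ``almost sure'' qualifier that $\overset{d^*}{\longrightarrow}$ requires. Step~0 and step (iv) are, by contrast, routine once Lemma~4 and the weak-convergence machinery of \cite{vaart1997weak} are granted.
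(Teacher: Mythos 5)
Your proposal is correct and follows essentially the same route as the paper's proof: reduce each statistic to its spectral representation in the eigenbasis of $\boldsymbol{K}$, use the annihilation $\hat{\boldsymbol{\Pi}}\mathbb{G}=\boldsymbol{0}$ together with $\alpha>1/4$ to replace $\boldsymbol{\varepsilon}_{\hat\theta}$ by $\boldsymbol{\varepsilon_0}$ and the estimated projection by $\boldsymbol{\Pi}$, and then apply the multiplier central limit theorem conditionally on the data to obtain $Z_{i,\perp}$. The only difference is one of rigor rather than substance: you make explicit the finite-dimensional joint CLT, the uniform-in-$n$ tail control beyond index $m$, and the truncation-plus-continuous-mapping assembly, all of which the paper's proof leaves implicit after treating a single fixed $i$.
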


\begin{proof}
    See Appendix~\ref{proof:thm6}.
\end{proof}

\section{Simulation Studies and An Empirical Application}
\label{sec:simulation}
\subsection{Simulation Studies}
We consider the following simulation settings. The null model is given by
\[
    \mathrm{DGP}_0: \quad Y =\alpha+ X^\top \beta + e
\]
The fixed alternative models under homoskedasticity condition are given by:
\begin{align*}
    &\mathrm{DGP}_1:  \quad Y =\alpha+ X^\top \beta + 1.5 \exp\left(- (X^\top \beta)^2 \right) + e,\\
    &\mathrm{DGP}_2:  \quad Y =\alpha+ X^\top \beta + 2.0 \cos\left(1.2 \sqrt{X^\top X} \right) + e,\\
    & \mathrm{DGP}_3:  \quad Y =\alpha+ X^\top \beta + 0.5 (X^\top \beta)^2 + e,\\
    & \mathrm{DGP}_4:  \quad Y =\alpha+ X^\top \beta + 1.5 \exp\left(0.25 (X^\top \beta) \right)  + e,\\
\end{align*}

All of the above DGPs (from $0$ to $4$) share the same covariate and error structure: $X$ is a $d \times 1$ vector with entries independently drawn from the standard normal distribution, and $e$ is an independent standard normal random variable. The true parameter vector $\beta$ is of dimension $d \times 1$ with each entry equal to $0.5$, and the intercept term is set to $\alpha = 1$. In the simulation studies, we consider $d = 10$ and $d=20$.

For the fixed alternative models under heteroskedastic condition, we consider the following settings:
\begin{align*}
    & \mathrm{DGP}_5:  \quad Y = \alpha + X^\top \beta + \sqrt{X^\top X} + c_1(X) e, \quad \text{for } d=10,\\
    & \mathrm{DGP}_5^{\ast}:  \quad Y = \alpha + X^\top \beta + \sqrt{X^\top X} + c_2(X) e, \quad \text{for } d=20.
\end{align*}
For $d=10$, the covariates $\{X_{l}\}_{l=1}^{5}$ are independently drawn from uniform distributions on $[0,\, 1+0.1(l-1)]$, while $\{X_{l}\}_{l=6}^{10}$ are drawn from normal distributions with mean $0$ and standard deviation $1+0.1(l-5)$. For $d=20$, $\{X_{l}\}_{l=1}^{10}$ are drawn from uniform distributions on $[0,\, 1+0.1(l-1)]$, and $\{X_{l}\}_{l=11}^{20}$ are drawn from normal distributions with mean $0$ and standard deviation $1+0.1(l-10)$. The heteroskedasticity functions are defined as $c_1(X) = |X^\top \boldsymbol{1}|$ and $c_2(X) = |X^\top \boldsymbol{1}|$, where $\boldsymbol{1}$ is a $d \times 1$ vector of ones. In both cases, $\beta = \boldsymbol{1}$, $\alpha = 1$, and the error term $e$ is standard normal.

For local alternatives, we consider the following models:
\begin{align*}
    & \mathrm{DGP}_6:  \quad Y = \alpha + X^\top \beta + \frac{\sqrt{X^\top X}}{\sqrt{n}} + d_1(X) e, \quad \text{for } d=10,\\
    & \mathrm{DGP}_6^{\ast}:  \quad Y = \alpha + X^\top \beta + \frac{\sqrt{X^\top X}}{\sqrt{n}} + d_2(X) e, \quad \text{for } d=20.
\end{align*}
For $d=10$, the covariates $\{X_{l}\}_{l=1}^{5}$ are independently drawn from uniform distributions on $[0,\, l]$, while $\{X_{l}\}_{l=6}^{10}$ are drawn from normal distributions with mean $0$ and standard deviation $1+0.1(l-5)$. For $d=20$, $\{X_{l}\}_{l=1}^{10}$ are drawn from uniform distributions on $[0,\, l]$, and $\{X_{l}\}_{l=11}^{20}$ are drawn from normal distributions with mean $0$ and standard deviation $1+0.1(l-10)$. The heteroskedasticity functions are defined as $d_1(X) = \sqrt{0.1 + \sum_{l=1}^{5} X_{l} + \sum_{l=6}^{10} X_{l}^2}$ and $d_2(X) = \sqrt{0.1 + \sum_{l=1}^{5} X_{l} + \sum_{l=6}^{20} X_{l}^2}$. In both cases, $\beta$ is a vector of ones, $\alpha = 1$, and the error term $e$ is standard normal.

For all models, we use the ordinary least squares (OLS) estimator $\hat \beta$ to estimate the parameter $\beta$. 

In addition to the proposed KRR-based test statistics ($\hat T_{\mathrm{proj},\perp}^{(1)}$, $\hat T_{\mathrm{proj},\perp}^{(2)}$, $\hat T_{\mathrm{rand},\perp}^{(1)}$, and $\hat T_{\mathrm{rand},\perp}^{(2)}$), we also include two benchmark methods for comparison: the integrated conditional moment (ICM) test statistic ($\hat T_{\mathrm{ICM}}$) of \cite{bierens1982consistent}, and the Gaussian process (GP) test statistic ($\hat T_{\mathrm{GP}}$) of \cite{escanciano2014specification}. Both alternatives have the quadratic form:
\[
    n \hat T = \frac{1}{n} \boldsymbol{\varepsilon}_{\hat \theta}^\top \boldsymbol{K} \boldsymbol{\varepsilon}_{\hat \theta}
\]
The ICM test statistic employs the wild bootstrap for inference, following \cite{delgado2006consistent}, while the GP test statistic uses the multiplier bootstrap as in \cite{escanciano2024gaussian}, with a modified orthogonal kernel $\boldsymbol{K}_{\perp} = \hat{\boldsymbol{\Pi}}^\top \boldsymbol{K} \hat{\boldsymbol{\Pi}}$. We set the number of bootstrap replications to $B = 500$, and each simulation scenario is repeated $R = 1000$ times. Empirical sizes and powers are calculated as the proportion of rejections over the $R$ replications. The significance level is fixed at $5\%$. 

All test statistics are implemented using the Gaussian kernel $k(x, y) = \exp(-\gamma\|x - y\|_2^2)$. For the ICM test, the kernel parameter is fixed at $\gamma = 0.5$. For the GP test, following \cite{escanciano2024gaussian}, we set $\gamma = 1 / \mathrm{median}(\{\|x_i-x_j\|_2\}_{i\neq j})$. For the KRR-based test statistics, both the kernel parameter and the regularization parameter $\lambda$ are selected via 5-fold cross-validation. 

Ideally, one would split the data into training, validation, and testing sets, using the first two to tune parameters and the last for testing. In practice, we find that splitting the data into training and validation sets for parameter selection, and then applying the chosen parameters to the full dataset, yields reliable results.

A distinctive feature of the random location test statistics is the choice of the number of random locations, denoted by $J$. In the main simulation studies, we set $J = 3$. To assess the robustness of the results, we also examine the impact of varying $J$ by considering values from $J = 1$ up to $J = 15$. Theoretically, the random locations can be sampled from any distribution with a Lebesgue density supported on the domain of $X$. In practice, we find that sampling from a multivariate normal distribution fitted to the observed data provides reliable performance. 

Tables~\ref{tab:q10-5} and \ref{tab:q20-5} summarize the empirical sizes and powers of the proposed KRR-based test statistics and benchmark methods for dimensions $d=10$ and $d=20$, respectively. Each table reports results for two sample sizes ($n=200$ and $n=400$). The "SIZE" rows correspond to the null DGP ($DGP_0$), showing the empirical type I error rates. The "POWER" rows correspond to alternative DGPs ($DGP_1$ to $DGP_6$), showing the empirical power of each method. The best-performing method for each scenario is highlighted in bold.

For both dimensions, the KRR-based projection tests ($\hat T_{\mathrm{proj},\perp}^{(1)}$, $\hat T_{\mathrm{proj},\perp}^{(2)}$) and random location tests ($\hat T_{\mathrm{rand},\perp}^{(1)}$, $\hat T_{\mathrm{rand},\perp}^{(2)}$) maintain empirical sizes close to the nominal 5\% level, while the benchmark GP and ICM tests tend to be conservative, especially as $d$ increases.

In terms of power, the projection-based KRR tests are most effective for alternatives with strong global nonlinear deviations (e.g., $DGP_1$, $DGP_3$, $DGP_4$), with $\hat T_{\mathrm{proj},\perp}^{(2)}$ often achieving the highest power. The random location tests excel in detecting alternatives with localized or heteroskedastic effects (e.g., $DGP_2$, $DGP_5$, $DGP_6$), and their power increases with sample size. Notably, for $d=20$, the benchmark methods show little to no power, while the KRR-based tests retain substantial power across all alternatives. These results highlight the flexibility and robustness of the proposed KRR-based tests in a variety of challenging scenarios.

To further assess the robustness of the random location tests, we conduct additional simulations varying the number of random locations ($J$). Figure~\ref{fig:power_against_numloc} summarizes these results. The left panel displays the power of the random location tests as a function of $J$ for $\mathrm{DGP}_2$ with $d=10$, while the right panel presents the corresponding results for $\mathrm{DGP}_6$ with $d=20$. The findings demonstrate that the power of the random location tests remains stable across a wide range of $J$ values, indicating that the performance of these tests is not overly sensitive to the specific choice of the number of random locations.

\begin{table}[htbp]
    \centering
    \footnotesize
    \caption{Empirical sizes and powers at $5\%$ estimated by OLS with $d=10$}
    \begin{adjustbox}{width=\textwidth}
        \begin{tabular}{@{}lccccccccccccc@{}}
            \toprule
            \multicolumn{1}{l}{$n$} & \multicolumn{6}{c}{200} & \multicolumn{7}{c}{400} \\
            \cmidrule(lr){2-7} \cmidrule(lr){9-14}
            & $\hat T_{\mathrm{proj},\perp}^{(1)}$ & $\hat T_{\mathrm{proj},\perp}^{(2)}$ & $\hat T_{\mathrm{rand},\perp}^{(1)}$ & $\hat T_{\mathrm{rand},\perp}^{(2)}$ & $\hat{T}_{GP}$ & $\hat{T}_{ICM}$
            & & $\hat T_{\mathrm{proj},\perp}^{(1)}$ & $\hat T_{\mathrm{proj},\perp}^{(2)}$ & $\hat T_{\mathrm{rand},\perp}^{(1)}$ & $\hat T_{\mathrm{rand},\perp}^{(2)}$ & $\hat{T}_{GP}$ & $\hat{T}_{ICM}$ \\
            
            \\
            \multicolumn{14}{c}{SIZE} \\
            \midrule
            $DGP_{0}$  & 0.049 & 0.055 & 0.071 & 0.067 & 0.003 & 0.000 & & 0.045 & 0.057 & 0.040 & 0.049 & 0.020 & 0.000  \\
            
            \\
            \multicolumn{14}{c}{POWER} \\
            \midrule
            $DGP_{1}$  & 0.513 & \textbf{0.570} & 0.302 & 0.305 & 0.293 & 0.001 & & \textbf{0.939} & 0.937 & 0.527 & 0.532 & 0.885 & 0.092  \\ 
            
            \\ 
            $DGP_{2}$  & 0.150 & 0.180 & \textbf{0.294} & 0.286 & 0.171 & 0.000 & &  0.212 & 0.301 & \textbf{0.475} & 0.433 & 0.442 & 0.013  \\

            \\
            
            $DGP_{3}$  & 0.987 & \textbf{0.992} & 0.945 & 0.959 & 0.907 & 0.010 & &  0.920 & 0.906 & 0.923 & 0.921 & \textbf{1.000} & 0.483  \\

            \\ 
            $DGP_{4}$ & 0.129 & 0.129 & \textbf{0.133} & 0.129 & 0.023 & 0.000 & &  0.265 & \textbf{0.279} & 0.194 & 0.201 & 0.075 & 0.000  \\

            \\
            $DGP_{5}$ & 0.318 & 0.394 & 0.550 & \textbf{0.590} & 0.141 & 0.029 & &  0.375 & 0.431 & 0.623 & \textbf{0.659} & 0.611 & 0.165  \\

            \\
            $DGP_{6}$ & 0.113 & 0.170 & \textbf{0.265} & \textbf{0.265} & 0.030 & 0.000 & &  0.116 & 0.133 & \textbf{0.271} & 0.255 & 0.045 & 0.000  \\
            \bottomrule
        \end{tabular}
    \label{tab:q10-5}
    \end{adjustbox}
    \begin{tablenotes}
        \footnotesize
        \item \textit{Note:} Empirical sizes and powers are calculated over 1000 replications at the 5\% significance level. The best performing method for each DGP and $n$ is highlighted in bold. For random location tests, the number of random locations is set to $J=3$. The random location points are sampled from a multivariate normal distribution fitted to the observed data. The test statistics are based on the OLS estimator.
    \end{tablenotes}
\end{table}

\begin{table}[htbp]
    \centering
    \footnotesize
    \caption{Empirical sizes and powers at $5\%$ estimated by OLS with $d=20$}
    \begin{adjustbox}{width=\textwidth}
        \begin{tabular}{@{}lccccccccccccc@{}}
            \toprule
            \multicolumn{1}{l}{$n$} & \multicolumn{6}{c}{200} & \multicolumn{7}{c}{400} \\
            \cmidrule(lr){2-7} \cmidrule(lr){9-14}
            & $\hat T_{\mathrm{proj},\perp}^{(1)}$ & $\hat T_{\mathrm{proj},\perp}^{(2)}$ & $\hat T_{\mathrm{rand},\perp}^{(1)}$ & $\hat T_{\mathrm{rand},\perp}^{(2)}$ & $\hat{T}_{GP}$ & $\hat{T}_{ICM}$
            & & $\hat T_{\mathrm{proj},\perp}^{(1)}$ & $\hat T_{\mathrm{proj},\perp}^{(2)}$ & $\hat T_{\mathrm{rand},\perp}^{(1)}$ & $\hat T_{\mathrm{rand},\perp}^{(2)}$ & $\hat{T}_{GP}$ & $\hat{T}_{ICM}$ \\
            
            \\
            \multicolumn{14}{c}{SIZE} \\
            \midrule
            $DGP_{0}$  & 0.052 & 0.057 & 0.056 & 0.071 & 0.000 & 0.000 & & 0.041 & 0.056 & 0.057 & 0.055 & 0.000 & 0.000  \\
            
            \\
            \multicolumn{14}{c}{POWER} \\
            \midrule
            $DGP_{1}$  & \textbf{0.201} & 0.170 & 0.136 & 0.144 & 0.000 & 0.000 & & 0.370 & \textbf{0.379} & 0.217 & 0.221 & 0.002 & 0.000  \\ 
            
            \\ 
            $DGP_{2}$  & 0.143 & 0.143 & \textbf{0.174} & 0.164 & 0.000 & 0.000 & &  0.117 & 0.169 & 0.266 & \textbf{0.278} & 0.002 & 0.000  \\

            \\
            
            $DGP_{3}$  & 0.914 & \textbf{0.920} & 0.836 & 0.822 & 0.003 & 0.000 & &  \textbf{0.991} & \textbf{0.991} & 0.988 & 0.987 & 0.107 & 0.000  \\

            \\ 
            $DGP_{4}$ & \textbf{0.262} & 0.257 & 0.188 & 0.198 & 0.000 & 0.000 & &  \textbf{0.566} & 0.557 & 0.322 & 0.341 & 0.000 & 0.000  \\

            \\
            $DGP_{5}$ & 0.179 & 0.204 & 0.322 & \textbf{0.360} & 0.001 & 0.000 & &  0.312 & 0.372 & 0.557 & \textbf{0.567} & 0.005 & 0.000  \\

            \\
            $DGP_{6}$ & 0.150 & 0.171 & \textbf{0.270} & 0.260 & 0.000 & 0.000 & &  0.128 & 0.117 & \textbf{0.251} & 0.243 & 0.000 & 0.000  \\
            \bottomrule
        \end{tabular}
    \label{tab:q20-5}
    \end{adjustbox}
    \begin{tablenotes}
        \footnotesize
        \item \textit{Note:} Empirical sizes and powers are calculated over 1000 replications at the 5\% significance level. The best performing method for each DGP and $n$ is highlighted in bold. For random location tests, the number of random locations is set to $J=3$. The random location points are sampled from a multivariate normal distribution fitted to the observed data. The test statistics are based on the OLS estimator.
    \end{tablenotes}
\end{table}

\begin{figure}[H]
    \centering
    \begin{subfigure}[b]{0.48\textwidth}
        \includegraphics[width=\textwidth]{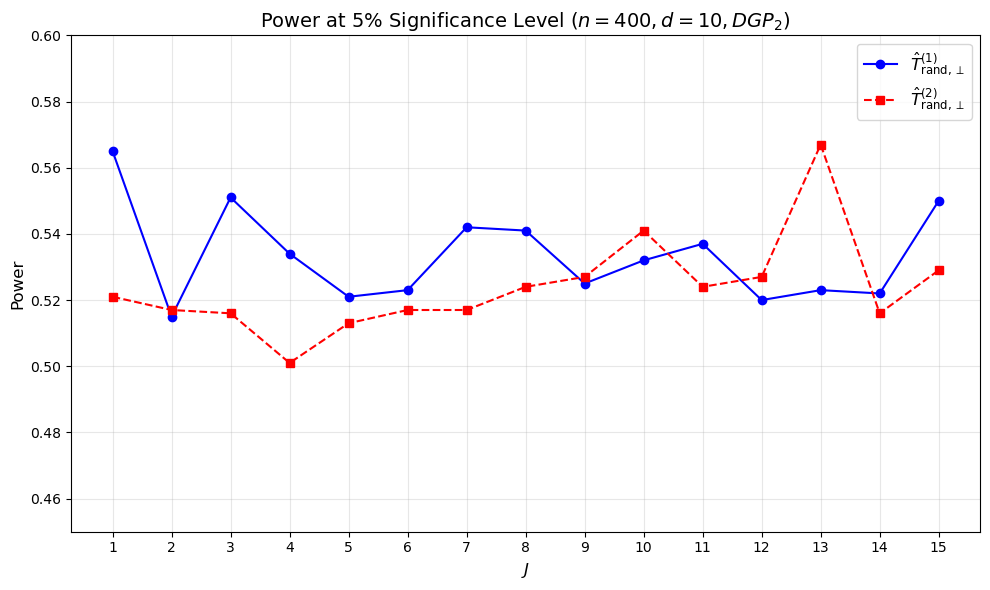}
        \caption{Power of $\mathrm{DGP}_2$ against $J$}
    \end{subfigure}
    \hfill
    \begin{subfigure}[b]{0.48\textwidth}
        \includegraphics[width=\textwidth]{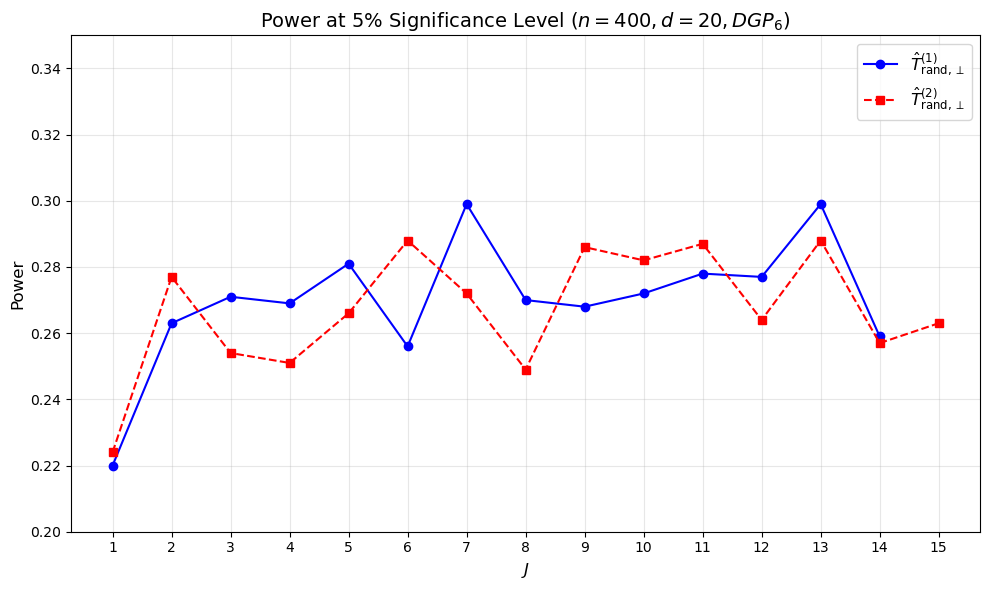}
        \caption{Power of $\mathrm{DGP}_6$ against $J$}
    \end{subfigure}
    \caption{Powers of the random location test statistics against the number of random locations $J$}
    \label{fig:power_against_numloc}
\end{figure}

\subsection{Application to the National Supported Work Dataset}
The impact of training programs on labor market outcomes has been a longstanding focus in economics. To rigorously assess such effects, the National Supported Work (NSW) Demonstration was conducted from 1975 to 1979, funded by both private and federal sources. The program randomly assigned participants at various sites across the United States to either a treatment group (receiving supported work interventions) or a control group.

The NSW dataset has become a standard benchmark in the causal inference literature. The original analysis by \cite{lalonde1986evaluating} and subsequent studies using propensity score methods, such as \cite{dehejia1999causal}, have provided valuable insights into the effectiveness of the program. In this paper, we use the version of the dataset compiled by Dehejia and Wahba, which consists of 445 observations: 185 in the treatment group and 260 in the control group. The dataset includes covariates such as age, education, and prior earnings, with the primary outcome being real earnings in 1978. The data are publicly available at \url{https://users.nber.org/~rdehejia/nswdata2.html}. 

To mitigate numerical instability, we rescale the ``age'' and ``education'' variables by dividing them by 10, and apply a logarithmic transformation to all earnings variables. The final dataset comprises 8 covariates: 4 continuous variables and 4 binary indicator (dummy) variables. 

We aim to test both the specification of the propensity score model and the hypothesis of a zero Conditional Average Treatment Effect (CATE), both individually and jointly. Let $Y(1)$ and $Y(0)$ denote the potential outcomes under treatment and control, respectively, and let $X$ be a vector of observed covariates. The treatment indicator is $T$, so the observed outcome is $Y = Y(1)T + Y(0)(1-T)$.

The propensity score is modeled as $P(T=1|X) = \Phi(\beta^\top X)$, where $\Phi$ denotes the standard normal cumulative distribution function. The CATE function is defined as $\tau(X) = \mathbb{E}[Y(1) - Y(0) \mid X]$. Under standard unconfoundedness assumptions (i.e., $Y(1)$ and $Y(0)$ are independent of $T$ given $X$), and if the propensity score model is correctly specified, it follows that
\[
    \mathbb{E}\left( \frac{Y(T-\Phi(\beta^\top X))}{\Phi(\beta^\top X)(1-\Phi(\beta^\top X))} \,\middle|\, X \right) = \tau(X) \quad \text{a.s.}
\]
The residual for testing the propensity score model alone is
\[
    \varepsilon_0 = T - \Phi(\beta^\top X).
\]
For testing the joint hypothesis of correct propensity score specification and zero CATE (i.e., $\tau(X) = 0$), the residual vector is
\[
    \varepsilon_0 = \begin{pmatrix}
        T - \Phi(\beta^\top X) \\[2ex]
        \dfrac{Y(T-\Phi(\beta^\top X))}{\Phi(\beta^\top X)(1-\Phi(\beta^\top X))}
    \end{pmatrix}.
\]
For joint testing, the test statistic is constructed by first computing the individual test statistics for each component of the residual vector $\varepsilon_0$, and then summing these statistics to obtain the overall test statistic. 

We report results for the test statistics discussed in the simulation studies, replacing the ICM test statistic with the GP test using a kernel parameter of $\gamma = 0.5$. This substitution is made because implementing the ICM test with a wild bootstrap is not straightforward in the context of a probit model. The null distributions of the test statistics are approximated using the multiplier bootstrap method, with $B=500$ bootstrap replications. The results are presented in Table~\ref{tab:realdata}.

The results show that none of the test statistics reject the null hypothesis of correct propensity score specification, as all $p$-values are well above conventional significance levels. In contrast, all test statistics strongly reject the joint null hypothesis of both correct propensity score specification and zero CATE, with $p$-values equal to zero. This suggests that the probit model provides a good fit, and there is strong evidence against the hypothesis of no treatment effect.

\begin{table}[htbp]
    \centering
    \caption{Bootstrap $p$-values of different test statistics for individual and joint tests}
    \label{tab:realdata}
    \begin{tabularx}{\textwidth}{@{}l>{\centering\arraybackslash}X>{\centering\arraybackslash}X@{}}
        \toprule
         & Individual $p$-value & Joint $p$-value  \\
        \midrule
        $\hat T_{\mathrm{proj},\perp}^{(1)}$ & 0.596 & 0.000 \\ \\

        $\hat T_{\mathrm{proj},\perp}^{(2)}$  & 0.624 & 0.000 \\\\

        $\hat T_{\mathrm{rand},\perp}^{(1)}$ & 0.670 & 0.000 \\\\
        $\hat T_{\mathrm{rand},\perp}^{(2)}$ & 0.760  & 0.000 \\\\
        $\hat T_{\mathrm{GP},\mathrm{median}}$ & 0.540  & 0.000 \\\\
        $\hat T_{\mathrm{GP},\sigma=0.5}$ & 0.680 & 0.000 \\
        \bottomrule
    \end{tabularx}
    \begin{tablenotes}
        \footnotesize
        \item \textit{Note:} The random location points are sampled from a multivariate normal distribution fitted to the observed data. The $p$-values are based on $B=500$ bootstrap replications. 
    \end{tablenotes}
\end{table}

\section{Conclusion}
\label{sec:conclusion}
In this paper, we introduce two new classes of reproducing kernel-based tests for model checking in conditional moment restriction models. The central idea is to perform kernel ridge regression (KRR) of the estimated residuals on kernel functions indexed by the observed data. The population analogue of the KRR estimator is a function in the RKHS, which serves as a diagnostic metric for model adequacy.

We introduce two classes of tests. The first class, projection-based tests, involves projecting the KRR estimator onto (i) itself and (ii) the mean embedding estimator of the residuals. The second class, random location tests, evaluates the KRR estimator at $J$ randomly chosen points drawn from any distribution with a Lebesgue density, assuming the kernel is analytic. In practice, sampling these locations from a multivariate normal distribution fitted to the observed data yields good results.

We establish the asymptotic properties of the proposed tests under the null, fixed alternatives, and local alternatives. The tests are consistent against fixed alternatives and can detect local alternatives at the optimal $n^{-1/2}$ rate. 

Extensive simulation studies demonstrate that the proposed tests maintain nominal size and exhibit strong power across a range of alternatives. In particular, they outperform existing methods in both size control and power, especially as the covariate dimension increases. An empirical application to the National Supported Work dataset further illustrates the practical utility and effectiveness of the proposed approach.

\bibliography{paper-ref}

\begin{thebibliography}{36}
\providecommand{\natexlab}[1]{#1}
\providecommand{\url}[1]{\texttt{#1}}
\expandafter\ifx\csname urlstyle\endcsname\relax
  \providecommand{\doi}[1]{doi: #1}\else
  \providecommand{\doi}{doi: \begingroup \urlstyle{rm}\Url}\fi

\bibitem[Bierens(1982)]{bierens1982consistent}
Herman~J Bierens.
\newblock Consistent model specification tests.
\newblock \emph{Journal of Econometrics}, 20\penalty0 (1):\penalty0 105--134, 1982.

\bibitem[Bierens and Ploberger(1997)]{bierens1997asymptotic}
Herman~J Bierens and Werner Ploberger.
\newblock Asymptotic theory of integrated conditional moment tests.
\newblock \emph{Econometrica: Journal of the Econometric Society}, pages 1129--1151, 1997.

\bibitem[Carrasco et~al.(2007)Carrasco, Florens, and Renault]{carrasco2007linear}
Marine Carrasco, Jean-Pierre Florens, and Eric Renault.
\newblock Linear inverse problems in structural econometrics estimation based on spectral decomposition and regularization.
\newblock \emph{Handbook of econometrics}, 6:\penalty0 5633--5751, 2007.

\bibitem[Chwialkowski et~al.(2015)Chwialkowski, Ramdas, Sejdinovic, and Gretton]{chwialkowski2015fast}
Kacper~P Chwialkowski, Aaditya Ramdas, Dino Sejdinovic, and Arthur Gretton.
\newblock Fast two-sample testing with analytic representations of probability measures.
\newblock \emph{Advances in Neural Information Processing Systems}, 28, 2015.

\bibitem[Dehejia and Wahba(1999)]{dehejia1999causal}
Rajeev~H Dehejia and Sadek Wahba.
\newblock Causal effects in nonexperimental studies: Reevaluating the evaluation of training programs.
\newblock \emph{Journal of the American statistical Association}, 94\penalty0 (448):\penalty0 1053--1062, 1999.

\bibitem[Delgado et~al.(2006)Delgado, Dom{\'\i}nguez, and Lavergne]{delgado2006consistent}
Miguel~A Delgado, Manuel~A Dom{\'\i}nguez, and Pascal Lavergne.
\newblock Consistent tests of conditional moment restrictions.
\newblock \emph{Annales d'{\'E}conomie et de Statistique}, pages 33--67, 2006.

\bibitem[Ellison and Ellison(2000)]{ellison2000simple}
Glenn Ellison and Sara~Fisher Ellison.
\newblock A simple framework for nonparametric specification testing.
\newblock \emph{Journal of Econometrics}, 96\penalty0 (1):\penalty0 1--23, 2000.

\bibitem[Escanciano(2024)]{escanciano2024gaussian}
Juan~Carlos Escanciano.
\newblock A gaussian process approach to model checks.
\newblock \emph{The Annals of Statistics}, 52\penalty0 (5):\penalty0 2456--2481, 2024.

\bibitem[Escanciano and Goh(2014)]{escanciano2014specification}
Juan~Carlos Escanciano and Sze-Chuan Goh.
\newblock Specification analysis of linear quantile models.
\newblock \emph{Journal of Econometrics}, 178:\penalty0 495--507, 2014.

\bibitem[Eubank and Spiegelman(1990)]{eubank1990testing}
Randall~L Eubank and Clifford~H Spiegelman.
\newblock Testing the goodness of fit of a linear model via nonparametric regression techniques.
\newblock \emph{Journal of the American Statistical Association}, 85\penalty0 (410):\penalty0 387--392, 1990.

\bibitem[Gonz{\'a}lez-Manteiga and Crujeiras(2013)]{gonzalez2013updated}
Wenceslao Gonz{\'a}lez-Manteiga and Rosa~M Crujeiras.
\newblock An updated review of goodness-of-fit tests for regression models.
\newblock \emph{Test}, 22:\penalty0 361--411, 2013.

\bibitem[Gretton et~al.(2012)Gretton, Sejdinovic, Strathmann, Balakrishnan, Pontil, Fukumizu, and Sriperumbudur]{gretton2012optimal}
Arthur Gretton, Dino Sejdinovic, Heiko Strathmann, Sivaraman Balakrishnan, Massimiliano Pontil, Kenji Fukumizu, and Bharath~K Sriperumbudur.
\newblock Optimal kernel choice for large-scale two-sample tests.
\newblock \emph{Advances in neural information processing systems}, 25, 2012.

\bibitem[Hall(2003)]{hall2003generalized}
Alastair~R Hall.
\newblock Generalized method of moments.
\newblock \emph{A companion to theoretical econometrics}, pages 230--255, 2003.

\bibitem[Hardle and Mammen(1993)]{hardle1993comparing}
Wolfgang Hardle and Enno Mammen.
\newblock Comparing nonparametric versus parametric regression fits.
\newblock \emph{The Annals of Statistics}, pages 1926--1947, 1993.

\bibitem[Hong and White(1995)]{hong1995consistent}
Yongmiao Hong and Halbert White.
\newblock Consistent specification testing via nonparametric series regression.
\newblock \emph{Econometrica: Journal of the Econometric Society}, pages 1133--1159, 1995.

\bibitem[Jitkrittum et~al.(2016)Jitkrittum, Szab{\'o}, Chwialkowski, and Gretton]{jitkrittum2016interpretable}
Wittawat Jitkrittum, Zolt{\'a}n Szab{\'o}, Kacper~P Chwialkowski, and Arthur Gretton.
\newblock Interpretable distribution features with maximum testing power.
\newblock \emph{Advances in Neural Information Processing Systems}, 29, 2016.

\bibitem[Jitkrittum et~al.(2017)Jitkrittum, Xu, Szab{\'o}, Fukumizu, and Gretton]{jitkrittum2017linear}
Wittawat Jitkrittum, Wenkai Xu, Zolt{\'a}n Szab{\'o}, Kenji Fukumizu, and Arthur Gretton.
\newblock A linear-time kernel goodness-of-fit test.
\newblock \emph{Advances in neural information processing systems}, 30, 2017.

\bibitem[LaLonde(1986)]{lalonde1986evaluating}
Robert~J LaLonde.
\newblock Evaluating the econometric evaluations of training programs with experimental data.
\newblock \emph{The American economic review}, pages 604--620, 1986.

\bibitem[Liu et~al.(2020)Liu, Xu, Lu, Zhang, Gretton, and Sutherland]{liu2021learningdeepkernelsnonparametric}
Feng Liu, Wenkai Xu, Jie Lu, Guangquan Zhang, Arthur Gretton, and Danica~J Sutherland.
\newblock Learning deep kernels for non-parametric two-sample tests, 2020.

\bibitem[Mammen(1993)]{mammen1993bootstrap}
Enno Mammen.
\newblock Bootstrap and wild bootstrap for high dimensional linear models.
\newblock \emph{The annals of statistics}, 21\penalty0 (1):\penalty0 255--285, 1993.

\bibitem[Micchelli et~al.(2006)Micchelli, Xu, and Zhang]{micchelli2006universal}
Charles~A Micchelli, Yuesheng Xu, and Haizhang Zhang.
\newblock Universal kernels.
\newblock \emph{Journal of Machine Learning Research}, 7\penalty0 (12), 2006.

\bibitem[Mityagin(2015)]{mityagin2015zero}
Boris Mityagin.
\newblock The zero set of a real analytic function.
\newblock \emph{arXiv preprint arXiv:1512.07276}, 2015.

\bibitem[Muandet et~al.(2020)Muandet, Jitkrittum, and K{\"u}bler]{muandet2020kernel}
Krikamol Muandet, Wittawat Jitkrittum, and Jonas K{\"u}bler.
\newblock Kernel conditional moment test via maximum moment restriction.
\newblock In \emph{Conference on Uncertainty in Artificial Intelligence}, pages 41--50. PMLR, 2020.

\bibitem[Newey(1985{\natexlab{a}})]{newey1985generalized}
Whitney~K Newey.
\newblock Generalized method of moments specification testing.
\newblock \emph{Journal of econometrics}, 29\penalty0 (3):\penalty0 229--256, 1985{\natexlab{a}}.

\bibitem[Newey(1985{\natexlab{b}})]{newey1985maximum}
Whitney~K Newey.
\newblock Maximum likelihood specification testing and conditional moment tests.
\newblock \emph{Econometrica: Journal of the Econometric Society}, pages 1047--1070, 1985{\natexlab{b}}.

\bibitem[Sancetta(2022)]{sancetta2022testing}
Alessio Sancetta.
\newblock Testing subspace restrictions in the presence of high dimensional nuisance parameters.
\newblock \emph{Electronic Journal of Statistics}, 16\penalty0 (2):\penalty0 5277--5320, 2022.

\bibitem[Sant’Anna and Song(2019)]{sant2019specification}
Pedro~HC Sant’Anna and Xiaojun Song.
\newblock Specification tests for the propensity score.
\newblock \emph{Journal of Econometrics}, 210\penalty0 (2):\penalty0 379--404, 2019.

\bibitem[Shawe-Taylor and Cristianini(2003)]{shawe2003estimating}
John Shawe-Taylor and Nello Cristianini.
\newblock Estimating the moments of a random vector with applications.
\newblock 2003.

\bibitem[Stute(1997)]{stute1997nonparametric}
Winfried Stute.
\newblock Nonparametric model checks for regression.
\newblock \emph{The Annals of Statistics}, pages 613--641, 1997.

\bibitem[Sun and Zhou(2008)]{sun2008reproducing}
Hong-Wei Sun and Ding-Xuan Zhou.
\newblock Reproducing kernel hilbert spaces associated with analytic translation-invariant mercer kernels.
\newblock \emph{Journal of Fourier Analysis and Applications}, 14\penalty0 (1):\penalty0 89--101, 2008.

\bibitem[Sutherland et~al.(2016)Sutherland, Tung, Strathmann, De, Ramdas, Smola, and Gretton]{sutherland2021generativemodelsmodelcriticism}
Danica~J Sutherland, Hsiao-Yu Tung, Heiko Strathmann, Soumyajit De, Aaditya Ramdas, Alex Smola, and Arthur Gretton.
\newblock Generative models and model criticism via optimized maximum mean discrepancy, 2016.

\bibitem[Tauchen(1985)]{tauchen1985diagnostic}
George Tauchen.
\newblock Diagnostic testing and evaluation of maximum likelihood models.
\newblock \emph{Journal of Econometrics}, 30\penalty0 (1-2):\penalty0 415--443, 1985.

\bibitem[Vaart and Wellner(1997)]{vaart1997weak}
AW~van~der Vaart and Jon~A Wellner.
\newblock Weak convergence and empirical processes with applications to statistics.
\newblock \emph{Journal of the Royal Statistical Society-Series A Statistics in Society}, 160\penalty0 (3):\penalty0 596--608, 1997.

\bibitem[Wooldridge(1990)]{wooldridge1990unified}
Jeffrey~M Wooldridge.
\newblock A unified approach to robust, regression-based specification tests.
\newblock \emph{Econometric Theory}, 6\penalty0 (1):\penalty0 17--43, 1990.

\bibitem[Zheng(1996)]{zheng1996consistent}
John~Xu Zheng.
\newblock A consistent test of functional form via nonparametric estimation techniques.
\newblock \emph{Journal of Econometrics}, 75\penalty0 (2):\penalty0 263--289, 1996.

\bibitem[Zwald and Blanchard(2005)]{zwald2005convergence}
Laurent Zwald and Gilles Blanchard.
\newblock On the convergence of eigenspaces in kernel principal component analysis.
\newblock \emph{Advances in neural information processing systems}, 18, 2005.

\end{thebibliography}

\newpage
\appendix
\begin{center}
    \Large\textbf{Appendix}
\end{center}
\section{Proofs}

\subsection{Proof of Lemma 1.}
\label{proof:lemma1}
We first recall that the second moment operator $C = \mathbb{E}(k(X,\cdot)\otimes k(X,\cdot))$ is a rank one operator, following the rule of $(\boldsymbol{a} \otimes \boldsymbol{b}) \boldsymbol{c} = \langle \boldsymbol{b},
\boldsymbol{c}\rangle_{\mathcal{H}_k} \boldsymbol{a}, \, \forall \boldsymbol{a},\boldsymbol{b}, \boldsymbol{c}\in \mathcal{H}_k$. Furthermore, we have 
\begin{align*}
    \mathbb{E}\left(\varepsilon_{\theta} k(X,\cdot)\right) & = \mathbb{E}\left( \langle k(X,\cdot), w^*(\theta)\rangle_{\mathcal{H}_k} k(X,\cdot) \right) \\
    & = C w^*(\theta)
\end{align*}
Hence, 
\[
    w^*(\theta) = C^{-1}\mathbb{E}\left(\varepsilon_{\theta} k(X,\cdot)\right) 
\]
provided that $C$ is invertible (which is not feasible). The Tikhonov regularized solution is given by
\[
    w^*_\lambda(\theta) = \left(C+\lambda \mathcal{I}\right)^{-1} \mathbb{E}\left(\varepsilon_{\theta} k(X,\cdot)\right) = C_\lambda^{-1} \mathbb{E}\left(\varepsilon_{\theta} k(X,\cdot)\right)
\] 

Next, we study the spectral decomposition of the operators $C$ and $C_\lambda$.
\begin{align*}
    C & = \mathbb{E}\left(k(X,\cdot)\otimes k(X,\cdot)\right) \\
    & = \mathbb{E}\left( \left(\sum_{i \geq 1}\mu_i \phi_i(X) \phi_i\right) \otimes \left(\sum_{j \geq 1}\mu_j \phi_j(X) \phi_j\right)   \right) \\
    & = \mathbb{E}\left(\sum_{i \geq 1}\mu_i^2 \phi_i(X)^2\right) \phi_i \otimes \phi_i \\
    & = \sum_{i \geq 1} \mu_i^2 \phi_i \otimes \phi_i \\
    & = \sum_{i \geq 1} \mu_i (\sqrt{\mu_i}\phi_i) \otimes (\sqrt{\mu_i}\phi_i) 
\end{align*} 
where $\sqrt{\mu_i}\phi_i$ are orthonormal bases of $\mathcal{H}_k$. Similarly, we have
\[
    C_\lambda = \sum_{i \geq 1} (\mu_i + \lambda) (\sqrt{\mu_i}\phi_i) \otimes (\sqrt{\mu_i}\phi_i)
\]
Thus, 
\begin{align*}
    w^*_\lambda(\theta_0) = w^*_\lambda & = \sum_{i \geq 1} \left(\mu_i+\lambda\right)^{-1}  (\sqrt{\mu_i}\phi_i) \otimes (\sqrt{\mu_i}\phi_i) \mathbb{E}\left(\varepsilon_{0} k(X,\cdot)\right)\\
    & = \sum_{i \geq 1} \left(\mu_i+\lambda\right)^{-1} \mathbb{E}\left(\varepsilon_0 \sqrt{\mu_i} \langle k(X,\cdot), \phi_i\rangle_{\mathcal{H}_k} \right) \sqrt{\mu_i}\phi_i \\
    & = \sum_{i \geq 1} \left(\mu_i+\lambda\right)^{-1} \sqrt{\mu_i} \mathbb{E}(\varepsilon_0 \phi_i(X)) \sqrt{\mu_i}\phi_i\\
    & = \sum_{i \geq 1} \left(\mu_i+\lambda\right)^{-1} \mu_i \mathbb{E}(\varepsilon_0 \phi_i(X)) \phi_i
\end{align*}

To prove the second statement in this Lemma, recall the KRR solution is given by
\[
    \hat w = \sum_{i=1}^n \left(\frac{\sigma_i^2}{n}+\lambda\right)^{-1} \frac{\sigma_i}{\sqrt{n}}\sqrt{\mu_i} \frac{1}{\sqrt{n}}  \boldsymbol{\varepsilon_0}^\top \boldsymbol{u}_i  \phi_i
\]

Let's focus on the term $(1 / \sqrt{n})\boldsymbol{\varepsilon_0}^\top \boldsymbol{u}_i $. 
First, note that 
\begin{align*}
    \boldsymbol{\Phi} \phi_i & = \sum_{j=1}^n \sigma_j \sqrt{\mu_j} \boldsymbol{u}_j \left \langle \phi_j, \phi_i \right \rangle_{\mathcal{H}_k} \\
    & = \sigma_i \frac{1}{\sqrt{\mu_i}} \boldsymbol{u}_i
\end{align*}

Thus, 
\[
    \boldsymbol{u}_i =  \frac{\sqrt{\mu_i}}{\sigma_i} \boldsymbol{\Phi} \phi_i
\]
we have 
\begin{align*}
    \frac{1}{\sqrt{n}} \boldsymbol{\varepsilon_0}^\top \boldsymbol{u}_i   & = \frac{1}{\sqrt{n}} \boldsymbol{\varepsilon_0}^\top \frac{\sqrt{\mu_i}}{\sigma_i} \boldsymbol{\Phi} \phi_i \\
    & = \frac{1}{\sqrt{n}}\frac{\sqrt{\mu_i}}{\sigma_i}  \boldsymbol{\varepsilon_0}^\top \boldsymbol{\Phi} \phi_i  \\
    & = \frac{\sqrt{\mu_i}}{\sigma_i / \sqrt{n}} \frac{1}{n} \boldsymbol{\varepsilon_0}^\top \boldsymbol{\Phi} \phi_i\\
    & = \frac{\sqrt{\mu_i}}{\sigma_i / \sqrt{n}} \frac{1}{n} \sum_{j=1}^n \varepsilon_{0,j} \phi_i(x_j)   
\end{align*}
By Lemma \ref{eigenvalue_rootn}, we have $\sigma_i^2 / n \overset{p}{\longrightarrow} \mu_i$ for all $i$. Therefore, by the continuous mapping theorem, $1/ (\sigma_i / \sqrt{n}) \overset{p}{\longrightarrow}1/\sqrt{\mu_i}$, together with the Law of Large Numbers, we have
\[
    \frac{1}{\sqrt{n}} \boldsymbol{\varepsilon_0}^\top \boldsymbol{u}_i \overset{p}{\longrightarrow} \frac{\sqrt{\mu_i}}{\sqrt{\mu_i}} \mathbb{E}(\varepsilon_0 \phi_i(X))  =\mathbb{E}(\varepsilon_0 \phi_i(X))
\]

Denote 
\[
    w^* =  \sum_{j \geq 1} \left(\mu_i + \lambda\right)^{-1} \mu_i \mathbb{E}(\varepsilon_0 \phi_i(X)) \phi_i
\]
and let $a_{n,i} = (\sigma_i^2 / n +\lambda)^{-1}\sigma_i / \sqrt{n}$, $a_i = (\mu_i+\lambda)^{-1} \sqrt{\mu_i}$, $b_{n,i} = (1 / \sqrt{n}) \boldsymbol{\varepsilon_0}^\top \boldsymbol{u}_i$, and $b_i = \mathbb{E}(\varepsilon_0 \phi_i(X))$. We have for each fixed frequency index $i$, 
\begin{align*}
    & a_{n,i} - a_i = O_p(n^{-1/2}) \\
    & b_{n,i} - b_i = O_p(n^{-1/2}) \\
\end{align*}
Thus, we can write
\begin{align*}
    \hat w - w^* & = \sum_{i=1}^n \left(a_{n,i}b_{n,i} -a_i b_i\right)\sqrt{\mu_i} \phi_i + \sum_{j=n+1}^\infty a_j b_j \sqrt{\mu_j}\phi_j \\
    & =  \sum_{i=1}^n\left(a_{n,i}(b_{n,i}-b_i) + (a_{n,i}-a_i)b_i\right)\sqrt{\mu_i} \phi_i + \sum_{j=n+1}^\infty a_j b_j \sqrt{\mu_j}\phi_j\\
    & \leq O_p(n^{-1/2}) \sum_{i=1}^n a_{n,i} \sqrt{\mu_i} \phi_i + M \sum_{i=1}^n (a_{n,i}-a_i)\sqrt{\mu_i} \phi_i + \sum_{j=n+1}^\infty a_j b_j \sqrt{\mu_j}\phi_j\\
\end{align*}
Therefore, by the orthonormal property of $\{\sqrt{\mu_i}\phi_i\}_{i=1}^\infty$ under the RKHS inner product, we have
\begin{align*}
    || \hat w -w^*||_{\mathcal{H}_k}^2 \leq O_p(n^{-1})\sum_{i=1}^n a_{n,i}^2 + 2 M O_p(n^{-1})\sum_{i=1}^n a_{n,i} + M^2 \sum_{i=1}^n (a_{n,i}-a_i)^2 + M^2\sum_{j=n+1}^\infty a_j^2 
\end{align*}
As $n \longrightarrow \infty$, 
\begin{align*}
    & \sum_{i=1}^n a_{n,i}^2 \overset{p}{\longrightarrow} \sum_{i\geq 1} a_i^2 <\infty \\
    & \sum_{i=1}^n a_{n,i} \overset{p}{\longrightarrow} \sum_{i\geq 1} a_i <\infty \\
    & \sum_{j=n+1}^\infty a_j^2 \longrightarrow 0\\
\end{align*}
Hence 
\[
    || \hat w -w^*||_{\mathcal{H}_k}\overset{p}{\longrightarrow} 0
\]

\subsection{Proof of Theorem 1. (Equivalence of the Null Hypothesis)}
\label{proof:thm1}
The ``if'' direction is straightforward.  Suppose 
\[
    \mathbb{E}(\varepsilon_0 | X) = 0
\]
then, by the iterated expectation theorem, we have
\[
    \mathbb{E}(\varepsilon_0 | \phi_i(X)) = \mathbb{E}(\mathbb{E}(\varepsilon_0 | X) | \phi_i(X)) = \mathbb{E}(0 | \phi_i(X)) = 0, \forall i \geq 1
\]
and 
\[
    \mathbb{E}(\varepsilon_0 \phi_i(X))=0, \quad \forall i \geq 1
\]
Hence $w^*_\lambda = \boldsymbol{0} \in \mathcal{H}_k$.

Now we concentrate on the ``only if'' direction, i.e., if $w^*_\lambda = \boldsymbol{0} \in \mathcal{H}_k$, then the null hypothesis holds almost surely.

$w^*_\lambda = \boldsymbol{0} \in \mathcal{H}_k$ holds almost surely if and only if $\mathbb{E}(\varepsilon_0 \phi_i(X)) = 0$ for all $i \geq 1$.

Furthermore, since $\mathbb{E}(\varepsilon_0 | X)$ is a function of $X$, we have 
\[
    \mathbb{E}(\varepsilon_0 | X) = \sum_{j \geq 1} a_j \phi_j(X)
\]

For any arbitrary $i \geq 1$, we have
\begin{align*}
    \mathbb{E}(\varepsilon_0 \phi_i(X)) & =\mathbb{E}\left( \mathbb{E}(\varepsilon_0 | \boldsymbol{X}) \phi_i(X) \right) \\
    & = \sum_{l \geq 1} a_l \mathbb{E}(\phi_l(X) \phi_i(X)) \\
    & = a_i =0
\end{align*}
Thus, 
\[
    \mathbb{E}(\varepsilon_0 | \boldsymbol{X})= 0
\]

\subsection{Proof of Theorem 2. (Null distributions of Projection based Statistics)}
\label{proof:thm2}
Using the spectral representation of the proposed test statistic, and utilize the fact that $\{\boldsymbol{u}_i\}_{i=1}^n$ are eigenvectors of $\boldsymbol{K}$, we have 
\begin{align*}
    n \hat T_{\mathrm{proj}}^{(1)}& = n \boldsymbol{\varepsilon_0}^\top \left(\boldsymbol{K} + n \lambda \boldsymbol{I}\right)^{-1} \boldsymbol{K} \left(\boldsymbol{K} + n \lambda \boldsymbol{I}\right)^{-1} \boldsymbol{\varepsilon_0} \\
    & =  n \boldsymbol{\varepsilon_0} \sum_{i =1}^n \boldsymbol{u}_i (\sigma^2_i + n \lambda)^{-1} \boldsymbol{u}_i^\top \sum_{j = 1}^n \boldsymbol{u}_j \sigma^2_i \boldsymbol{u}_j^\top \sum_{k = 1}^n \boldsymbol{u}_k (\sigma^2_k + n \lambda)^{-1} \boldsymbol{u}_k^\top \boldsymbol{\varepsilon_0} \\
    & = n \sum_{i = 1}^n \boldsymbol{\varepsilon_0}^\top \boldsymbol{u}_i (\sigma^2_i + n \lambda)^{-2} \sigma_i^2 \boldsymbol{u}_i^\top \boldsymbol{\varepsilon_0}\\
    & = n \left(  \sum_{i= 1}^n  (\sigma^2_i + n \lambda)^{-1} \sigma_i \boldsymbol{\varepsilon_0}^\top \boldsymbol{u}_i  \right)^2\\
    & = n \sum_{i = 1}^n \left( \frac{\boldsymbol{\varepsilon_0}^\top \boldsymbol{u}_i}{\sqrt{n}} \left(\frac{\sigma_i^2}{n}+\lambda\right)^{-1} \frac{\sigma_i}{\sqrt{n}}  \right)^2 \\
    & = \sum_{i = 1}^n \left( \boldsymbol{\varepsilon_0}^\top \boldsymbol{u}_i \left(\frac{\sigma_i^2}{n}+\lambda\right)^{-1} \frac{\sigma_i}{\sqrt{n}}  \right)^2
\end{align*}
We have seen that 
\begin{align*}
    \frac{\sigma_i^2}{n} \overset{p}{\longrightarrow} \mu_j
\end{align*}
and thus,
\begin{align*}
    \left(\frac{\sigma_i^2}{n}+\lambda\right)^{-1} \frac{\sigma_i}{\sqrt{n}} \overset{p}{\longrightarrow} \frac{\sqrt{\mu_j}}{\mu_j+\lambda}
\end{align*}
whereas,
\begin{align*}
    \boldsymbol{\varepsilon_0}^\top \boldsymbol{u}_i & =  \boldsymbol{\varepsilon_0}^\top \frac{\sqrt{\mu_i}}{\sigma_i} \boldsymbol{\Phi} \phi_i \\
    & = \frac{\sqrt{\mu_i}}{\sigma_i / \sqrt{n}} \frac{1}{\sqrt{n}} \boldsymbol{\varepsilon_0}^\top \boldsymbol{\Phi} \phi_i\\
    & = \frac{\sqrt{\mu_i}}{\sigma_i / \sqrt{n}}  \frac{1}{\sqrt{n}}\sum_{j=1}^n \varepsilon_{0,j} \phi_i(x_j)
\end{align*}
The term $\sqrt{\mu_i} / (\sigma_i / \sqrt{n})$ converges in probability to one, and by the CLT, the second term 
\begin{align*}
    &  \frac{1}{\sqrt{n}} \sum_{j=1}^n \varepsilon_{0,j} \phi_i(x_j) \overset{d}{\longrightarrow} Z_i, \quad \text{under the null hypothesis} \\
    & \frac{1}{\sqrt{n}} \sum_{j=1}^n \varepsilon_{0,j} \phi_i(x_j) \to \infty, \quad \text{under the fixed alternative}
\end{align*}
where $Z_i \sim \mathcal{N}(0,S^2_i)$, and $S^2_i = \mathrm{Var}(\varepsilon_{0} \phi_i(X))$.  

Putting everything together, we have    
\begin{align*}
    & n \hat T_{\mathrm{proj}}^{(1)} \overset{d}{\longrightarrow} \ \sum_{i \geq 1} \frac{\mu_j}{(\mu_j+\lambda)^2} Z_i^2, \quad \text{under the null hypothesis} \\
    & \mathbb{P}\left(n \hat T_{\mathrm{proj}}^{(1)} > t\right) \to 1,  \forall t > 0, \quad \text{under the fixed alternative}
\end{align*}

For the second projection test statistic, we follow a similar argument. Note that 
\begin{align*}
    n \hat T_{\mathrm{proj}}^{(2)}& = \boldsymbol{\varepsilon_0}^\top \left(\boldsymbol{K} + n \lambda \boldsymbol{I}\right)^{-1} \boldsymbol{K} \boldsymbol{\varepsilon_0} \\
    & = \sum_{i=1}^n {\varepsilon_0}^\top \boldsymbol{u}_i \frac{\sigma_i^2}{\sigma_i^2 + n \lambda} \boldsymbol{u}_i^\top \boldsymbol{\varepsilon_0} \\
    & = \sum_{i=1}^n {\varepsilon_0}^\top \boldsymbol{u}_i \frac{\sigma_i^2 / n}{\sigma_i^2 / n+  \lambda} \boldsymbol{u}_i^\top \boldsymbol{\varepsilon_0} \\
\end{align*}
Thus,
\begin{align*}
    & n \hat T_{\mathrm{proj}}^{(2)} \overset{d}{\longrightarrow} \sum_{i \geq 1} \frac{\mu_i}{\mu_i+\lambda} Z_i^2, \quad \text{under the null hypothesis}  \\
    & n \hat T_{\mathrm{proj}}^{(2)} \to \infty, \quad \text{under the fixed alternative}
\end{align*}

\subsection{Proof of Theorem 3. (Local Alternative Distributions for Projection based Statistics)}
\label{proof:thm3}
Let $\boldsymbol{\tilde \varepsilon_0} = \boldsymbol{\varepsilon_0} + R(X) / \sqrt{n}$, then test statistic then becomes:
\begin{align*}
    n \hat T_{\mathrm{proj}}^{(1)}& =  n\boldsymbol{\tilde \varepsilon_0}^\top \left(\boldsymbol{K} + n \lambda \boldsymbol{I}\right)^{-1} \boldsymbol{K} \left(\boldsymbol{K} + n \lambda \boldsymbol{I}\right)^{-1}\boldsymbol{\tilde \varepsilon_0} \\
    & =  n\boldsymbol{ \varepsilon_0}^\top \left(\boldsymbol{K} + n \lambda \boldsymbol{I}\right)^{-1} \boldsymbol{K} \left(\boldsymbol{K} + n \lambda \boldsymbol{I}\right)^{-1}\boldsymbol{ \varepsilon_0} \\
    & \quad +2  n\left(\frac{R(X)}{\sqrt{n}}\right)^\top \left(\boldsymbol{K} + n \lambda \boldsymbol{I}\right)^{-1} \boldsymbol{K} \left(\boldsymbol{K} + n \lambda \boldsymbol{I}\right)^{-1}n\boldsymbol{ \varepsilon_0} \\
    & \quad + R(X)^\top \left(\boldsymbol{K} + n \lambda \boldsymbol{I}\right)^{-1} \boldsymbol{K} \left(\boldsymbol{K} + n \lambda \boldsymbol{I}\right)^{-1}R(X) \\
    & = A_1 + 2 A_2 + A_3 
\end{align*}
The first part $A_1$ is the same as the null distribution of the test statistic.  We will show that $A_2$ will converge in distribution to some normal random variable, and $A_3$ will converge to a non-zero constant in probability.

The spectral representation of $A_2$ is given by
\begin{align*}
    A_2 & = n \sum_{i=1}^n (\sigma^2_i + n \lambda)^{-2} \sigma_i^2 \left(\frac{R(X)}{\sqrt{n}}\right)^\top \boldsymbol{u}_i  \boldsymbol{u}_i^\top \boldsymbol{\varepsilon_0} \\
    & = n \sum_{i=1}^n \frac{1}{n^2} (\sigma^2_i /n+  \lambda)^{-2} \sigma_i^2 \left(\frac{R(X)}{\sqrt{n}}\right)^\top \boldsymbol{u}_i  \boldsymbol{u}_i^\top \boldsymbol{\varepsilon_0}  \\
    & = \sum_{i=1}^n(\sigma^2_i /n+  \lambda)^{-2} \frac{\sigma_i^2}{n} \left(\frac{R(X)}{\sqrt{n}}\right)^\top \boldsymbol{u}_i  \boldsymbol{u}_i^\top \boldsymbol{\varepsilon_0}
\end{align*}
Using the the argument as in Theorem 2, we have
\[
    \frac{1}{\sqrt{n}} R(X)^\top \boldsymbol{u}_i \overset{p}{\longrightarrow} \mathbb{E}(R(X) \phi_i(X))
\]
and 
\begin{align*}
    & \boldsymbol{\varepsilon_0}^\top  \boldsymbol{u}_i \overset{d}{\longrightarrow}  Z_i \\
    & \frac{\sigma_i^2}{n} \overset{p}{\longrightarrow} \mu_i, \forall i \geq 1
\end{align*}
Thus, we have 
\[
    A_2 \overset{d}{\longrightarrow} \sum_{i \geq 1} \frac{\mu_i}{(\mu_i+\lambda)^2} \mathbb{E}(R(X) \phi_i(X)) Z_i
\]

The spectral representation of $A_3$ is given by
\begin{align*}
    A_3 = \sum_{i=1}^n \frac{1}{\sqrt{n}} R(X)^\top \boldsymbol{u}_i \frac{1}{\sqrt{n}} \boldsymbol{u}_i^\top R(X) \frac{\sigma_i^2 / n}{(\sigma_i^2 / n+\lambda)^2}
\end{align*}

Thus,
\[
    A_3 \overset{p}{\longrightarrow} \sum_{i \geq 1}\left(\mathbb{E}(R(X) \phi_i(X))\right)^2 \frac{\mu_i }{(\mu_i+\lambda)^2}
\]

Putting everything together, we have
\begin{align*}
    n \hat T_{\mathrm{proj}}^{(1)}& \overset{d}{\longrightarrow} \sum_{i \geq 1} \frac{\mu_j}{(\mu_j+\lambda)^2} Z_i^2 + 2 \sum_{i \geq 1} \frac{\mu_i}{(\mu_i+\lambda)^2} \mathbb{E}(R(X) \phi_i(X)) Z_i + \sum_{i \geq 1} \frac{\mu_i }{(\mu_i+\lambda)^2} \left(\mathbb{E}(R(X) \phi_i(X))\right)^2\\
    & = \sum_{i\geq 1}  \frac{\mu_i}{(\mu_i+\lambda)^2} \left(Z_i + \mathbb{E}(R(X) \phi_i(X))\right)^2 
\end{align*}

Similarly, we have 
\begin{align*}
    n \hat T_{\mathrm{proj}}^{(2)}& \overset{d}{\longrightarrow} \sum_{i\geq 1}  \frac{\mu_i}{\mu_i+\lambda} \left(Z_i + \mathbb{E}(R(X) \phi_i(X))\right)^2 
\end{align*}
\subsection{Proof of Theorem 4. (Null distributions of Random Location Test Statistics)}
\label{proof:thm4}  
For a given series of random location points $\{v_j\}_{j=1}^J$, the random location test statistics are given by
\[
    n \hat T_{\mathrm{rand}}^{(1)} = n \sum_{j=1}^{J} \boldsymbol{\varepsilon_0}^\top \left(\boldsymbol{K} + n \lambda \boldsymbol{I}\right)^{-1} \boldsymbol{k}(v_j) \boldsymbol{k}(v_j)^\top \left(\boldsymbol{K} + n \lambda \boldsymbol{I}\right)^{-1} \boldsymbol{\varepsilon_0}
\] 
and 
\[
    n \hat T_{\mathrm{rand}}^{(2)} = n \left(\sum_{j=1}^{J} \boldsymbol{\varepsilon_0}^\top \left(\boldsymbol{K} + n \lambda \boldsymbol{I}\right)^{-1} \boldsymbol{k}(v_j)\right)^2
\]
Note that for each random location $v_j$, we have
\begin{align*}
    \boldsymbol{k}(v_j) & = \left \langle \boldsymbol{\Phi}, k(v_j,\cdot) \right\rangle_{\mathcal{H}_k} \\
    & = \sum_{i=1}^n \sigma_i \sqrt{\mu_i} \boldsymbol{u}_i  \left \langle \phi_i, k(v_j,\cdot) \right\rangle_{\mathcal{H}_k} \\
    & =  \sum_{i=1}^n \sigma_i \sqrt{\mu_i} \phi_i(v_j)  \boldsymbol{u}_i  
\end{align*}
Thus,
\begin{align*}
    \sqrt{n}  \boldsymbol{\varepsilon_0}^\top \left(\boldsymbol{K} + n \lambda \boldsymbol{I}\right)^{-1} \boldsymbol{k}(v_j) & = \sqrt{n}  \boldsymbol{\varepsilon_0}^\top \sum_{i=1}^n \left(\sigma_i^2 /n + \lambda\right)^{-1} \frac{1}{n} \boldsymbol{u}_i \boldsymbol{u}_i^\top \sum_{l=1}^n \sigma_l \sqrt{\mu_l} \phi_l(v_j) \boldsymbol{u}_l \\
    & = \sqrt{n}  \boldsymbol{\varepsilon_0}^\top \sum_{i=1}^n \left(\sigma_i^2 /n+  \lambda\right)^{-1} \frac{1}{n} \sigma_i \sqrt{\mu_i} \boldsymbol{u}_i \phi_i(v_j) \\
    & =  \sum_{i=1}^n \left(\sigma_i^2 + n \lambda\right)^{-1} \frac{\sigma_i}{\sqrt{n}}  \sqrt{\mu_i} \boldsymbol{\varepsilon_0}^\top \boldsymbol{u}_i \phi_i(v_j)
\end{align*}
We now focus on $ \boldsymbol{\varepsilon_0}^\top \boldsymbol{u}_i \phi_i(v_j)$. Under the null, we have 
\begin{align*}
    \boldsymbol{\varepsilon_0}^\top \boldsymbol{u}_i \phi_i(v_j) & =  \frac{\sqrt{\mu_i}}{\sigma_i} \boldsymbol{\varepsilon_0}^\top \boldsymbol{\Phi} \phi_i \phi_i(v_j) \\
    & = \frac{\sqrt{\mu_i}}{\sigma_i /\sqrt{n}} \frac{1}{\sqrt{n}} \sum_{l=1}^n \varepsilon_{0,l} \phi_i(x_l) \phi_i(v_j) \\
    & \overset{d}{\longrightarrow} Z_i \phi_i(v_j)
\end{align*}
While under the alternative, we have 
\[
    \boldsymbol{\varepsilon_0}^\top \boldsymbol{u}_i \phi_i(v_j) \longrightarrow \infty
\]
Hence, under the null:
\[
    \sqrt{n}  \boldsymbol{\varepsilon_0}^\top \left(\boldsymbol{K} + n \lambda \boldsymbol{I}\right)^{-1} \boldsymbol{k}(v_j) \overset{d}{\longrightarrow}  \sum_{i\geq 1} \frac{\mu_i\phi_i(v_j)}{\mu_i+\lambda} Z_i 
\]

Thus, 
\begin{align*}
    & n \hat T_{\mathrm{rand}}^{(1)} \overset{d}{\longrightarrow}\sum_{j=1}^{J}  \left(\sum_{i\geq 1} \frac{\mu_i\phi_i(v_j)}{\mu_i+\lambda} Z_i  \right)^2 \quad \text{under the null hypothesis} \\
    & \mathbb{P}\left(n \hat T_{\mathrm{rand}}^{(1)} > t\right) \to 1,  \forall t > 0, \quad \text{under the fixed alternative}
\end{align*} 
and 
\begin{align*}
    & n \hat T_{\mathrm{rand}}^{(2)} \overset{d}{\longrightarrow}  \left(\sum_{j=1}^{J}\sum_{i\geq 1} \frac{\mu_i\phi_i(v_j)}{\mu_i+\lambda} Z_i  \right)^2 \quad \text{under the null hypothesis} \\
    & \mathbb{P}\left(n \hat T_{\mathrm{rand}}^{(2)} > t\right) \to 1,  \forall t > 0, \quad \text{under the fixed alternative}
\end{align*}

\subsection{Proof of Theorem 5. (Local Alternative Distributions for Random Location Test Statistics)}
\label{proof:thm5}
We use the same notation as in the proof of Theorem 3. 

Note that for each random location point $v_j$, we have
\begin{align*}
    \sqrt{n} \boldsymbol{\tilde \varepsilon_0}^\top \left(\boldsymbol{K} + n \lambda \boldsymbol{I}\right)^{-1} \boldsymbol{k}(v_j) & = \sqrt{n} \boldsymbol{\varepsilon_0}^\top \left(\boldsymbol{K} + n \lambda \boldsymbol{I}\right)^{-1} \boldsymbol{k}(v_j) + R(X)^\top \left(\boldsymbol{K} + n \lambda \boldsymbol{I}\right)^{-1} \boldsymbol{k}(v_j)\\
\end{align*}
The first term will converge in distribution to the same limit as in the content of Theorem 4, while the second term will converge to a non-zero constant in probability.

Specifically, we have 
\begin{align*}
    R(X)^\top \left(\boldsymbol{K} + n \lambda \boldsymbol{I}\right)^{-1} \boldsymbol{k}(v_j) & = \sum_{i=1}^n \left(\sigma_i^2 /n + \lambda\right)^{-1} \frac{\sigma_i}{\sqrt{n}} \sqrt{\mu_i}\frac{1}{\sqrt{n}} R(X)^\top \boldsymbol{u}_i \phi_i(v_j) \\  
\end{align*}
Note that 
\begin{align*}
    \frac{1}{\sqrt{n}} R(X)^\top \boldsymbol{u}_i \phi_i(v_j) & =  \frac{\sqrt{\mu_i}}{\sigma_i / \sqrt{n}} \frac{1}{n} R(X)^\top \boldsymbol{\Phi} \phi_i \phi_i(v_j) \\
    & =  \frac{\sqrt{\mu_i}}{\sigma_i / \sqrt{n}} \frac{1}{n} \sum_{l=1}^n R(x_l) \phi_i(x_l) \phi_i(v_j) \\
    & \overset{p}{\longrightarrow}  \phi_i(v_j)  \mathbb{E}(R(X) \phi_i(X) ) 
\end{align*}
Thus,
\[
    R(X)^\top \left(\boldsymbol{K} + n \lambda \boldsymbol{I}\right)^{-1} \boldsymbol{k}(v_j) \overset{p}{\longrightarrow} \sum_{i \geq 1} \frac{\mu_i \phi_i(v_j) }{\mu_i+\lambda} \mathbb{E}(R(X) \phi_i(X) )
\]

Putting everything together, we have under the local alternative:
\[
    n \hat T_{\mathrm{rand}}^{(1)} \overset{d}{\longrightarrow}\sum_{j=1}^J \left(\sum_{i \geq 1} \frac{\mu_i \phi_i(v_j) }{\mu_i+\lambda} \left(Z_i + \mathbb{E}(R(X) \phi_i(X) )\right)\right)^2
\]

and 
\[
    n \hat T_{\mathrm{rand}}^{(2)} \overset{d}{\longrightarrow} \left(\sum_{j=1}^J \sum_{i \geq 1} \frac{\mu_i \phi_i(v_j) }{\mu_i+\lambda} \left(Z_i + \mathbb{E}(R(X) \phi_i(X) )\right)\right)^2
\]

\subsection{Proof of Lemma 4. (Projection Operator)}
\label{proof:lem4}
From the proof of the Lemma 1, we have 
\[
    \frac{1}{\sqrt{n}} (\hat{\boldsymbol{\Pi}} \boldsymbol{\varepsilon}_{\hat \theta})^\top \boldsymbol{u}_i = \frac{\sqrt{\mu_i}}{ \sigma_i / \sqrt{n}} \frac{1}{n}  (\hat{\boldsymbol{\Pi}} \boldsymbol{\varepsilon}_{\hat \theta})^\top \boldsymbol{\phi_i}
\]
where $\boldsymbol{\phi_i} = (\phi_i(x_1),\ldots,\phi_i(x_n))^\top$.

We focus on the term $(1/n) (\hat{\boldsymbol{\Pi}} \boldsymbol{\varepsilon}_{\hat \theta})^\top \boldsymbol{\phi_i}$:
\begin{align*}
    \frac{1}{n}  (\hat{\boldsymbol{\Pi}} \boldsymbol{\varepsilon}_{\hat \theta})^\top \boldsymbol{\phi_i} & = \frac{1}{n}  \left(  \hat{\boldsymbol{\Pi}} \boldsymbol{\varepsilon_0} +   \hat{\boldsymbol{\Pi}} (\nabla_{\theta} \boldsymbol{\varepsilon}(\theta)|_{\theta = \bar{\theta}})^\top (\hat \theta-\theta_0)  \right)^\top \boldsymbol{\phi_i} \\
    & = \frac{1}{n}  \left(  \hat{\boldsymbol{\Pi}} \boldsymbol{\varepsilon_0} +   \hat{\boldsymbol{\Pi}} (\nabla_{\theta} \boldsymbol{\varepsilon}(\theta)|_{\theta = \hat{\theta}})^\top (\hat \theta-\theta_0) + \hat{\boldsymbol{\Pi}} O_p(n^{-2\alpha}) \right)^\top \boldsymbol{\phi_i}\\
    & = \frac{1}{n} \left( \hat{\boldsymbol{\Pi}} \boldsymbol{\varepsilon_0} \right)^\top \boldsymbol{\phi_i} + O_p(n^{-2\alpha})
\end{align*}

The first equality comes from the mean value theorem, and the last equality is the consequence of the orthogonality between the matrix $\hat{\boldsymbol{\Pi}}$ and the matrix $\nabla_{\theta} \boldsymbol{\varepsilon}(\theta)|_{\theta = \hat{\theta}} = \mathbb{G}^\top$.

Thus, we have 
\begin{align*}
    \frac{1}{\sqrt{n}} (\hat{\boldsymbol{\Pi}} \boldsymbol{\varepsilon}_{\hat \theta})^\top \boldsymbol{u}_i &= \frac{\sqrt{\mu_i}}{ \sigma_i / \sqrt{n}} \left(  \frac{1}{n}  (\hat{\boldsymbol{\Pi}} \boldsymbol{\varepsilon_0})^\top \boldsymbol{\phi_i} + O_p(n^{-2\alpha})\right) \\
    & = \frac{1}{\sqrt{n}} (\hat{\boldsymbol{\Pi}} \boldsymbol{\varepsilon_0})^\top \boldsymbol{u}_i+ O_p(n^{-2\alpha})
\end{align*}

Similarly, we have 
\begin{align*}
    \frac{1}{n} (\hat{\boldsymbol{\Pi}} \boldsymbol{\varepsilon}_{\hat \theta})^\top \boldsymbol{\Phi} & = \frac{1}{n} \left(  \hat{\boldsymbol{\Pi}} \boldsymbol{\varepsilon_0} +   \hat{\boldsymbol{\Pi}} (\nabla_{\theta} \boldsymbol{\varepsilon}(\theta)|_{\theta = \hat{\theta}})^\top (\hat \theta-\theta_0) +\hat{\boldsymbol{\Pi}} O_p(n^{-2\alpha}) \right)^\top \boldsymbol{\Phi}\\
    & = \frac{1}{n} \left(  \hat{\boldsymbol{\Pi}} \boldsymbol{\varepsilon_0} +\hat{\boldsymbol{\Pi}} O_p(n^{-2\alpha}) \right)^\top \boldsymbol{\Phi} \\
    & = \frac{1}{n} \left(  \hat{\boldsymbol{\Pi}} \boldsymbol{\varepsilon_0} \right)^\top \boldsymbol{\Phi} + O_p(n^{-2\alpha})
\end{align*}

\subsection{Proof of Theorem 6. (Bootstrap Consistency)}
\label{proof:thm6}
We will prove the consistency of the bootstrap test statistic $n\tilde T_{\mathrm{proj}}^{(1)}$, the rest of the test statistics can be shown similarly.

Using the result in the proof of Theorem 2, we have 
\begin{align*}
    n\tilde T_{\mathrm{proj}}^{(1)} & = \sum_{i=1}^n \left( \left(\hat{\boldsymbol{\Pi}} (\boldsymbol{\varepsilon}_{\hat \theta}\odot \boldsymbol{V})\right)^\top \boldsymbol{u}_i \left(\frac{\sigma_i^2}{n}+\lambda\right)^{-1} \frac{\sigma_i}{\sqrt{n}}   \right)^2 \\
    & = \sum_{i=1}^n \left( (\boldsymbol{\varepsilon}_{\hat \theta}\odot \boldsymbol{V})^\top(\hat{\boldsymbol{\Pi}} )^\top \boldsymbol{u}_i \left(\frac{\sigma_i^2}{n}+\lambda\right)^{-1} \frac{\sigma_i}{\sqrt{n}}   \right)^2
\end{align*}
and 
\begin{align*}
    (\boldsymbol{\varepsilon}_{\hat \theta}\odot \boldsymbol{V})^\top(\hat{\boldsymbol{\Pi}} )^\top \boldsymbol{u}_i & = \frac{\sqrt{\mu_i}}{\sigma_i / \sqrt{n}} \frac{1}{\sqrt{n}} \sum_{j=1}^n \left( \left(\varepsilon(s_j;\hat \theta) v_j - \hat g_j^\top \left(\mathbb{G}^\top \mathbb{G}\right)^{-1} \mathbb{G}^\top \left(\boldsymbol{\varepsilon}_{\hat \theta} \odot \boldsymbol{V}\right) \right) \phi_i(x_j) \right)  \\
    & =  \frac{\sqrt{\mu_i}}{\sigma_i / \sqrt{n}} \frac{1}{\sqrt{n}} \sum_{j=1}^n \left( \left(\varepsilon(s_j;\theta_0) v_j - \hat g_j^\top \left(\mathbb{G}^\top \mathbb{G}\right)^{-1} \mathbb{G}^\top \left(\boldsymbol{\varepsilon}_{\theta_0} \odot \boldsymbol{V}\right) \right) \phi_i(x_j)  \right) +o_p(1)\\
    & = \frac{1}{\sqrt{n}} \sum_{j=1}^n \left( \left(\boldsymbol{\Pi}\varepsilon(s_j;\theta_0) v_j  \right) \phi_i(x_j)  \right) +o_p(1)
\end{align*}
where the second equality comes from the consistency of $\hat \theta$ to $\theta_0$, the last equality comes the consistencies of the estimator of the projection operator and the eigenvalue.

Since $\mathbb{E}(V_1)=0$, $\mathrm{Var}(V_1)=1$, and is independent of the data. By the multiplier central limit theorem (see \cite{vaart1997weak}),  we have conditional on $s_1,\ldots, s_n$,
\begin{align*}
    \frac{1}{\sqrt{n}} \sum_{j=1}^n \left( \left(\boldsymbol{\Pi}\varepsilon(s_j;\theta_0) v_j  \right) \phi_i(x_j)  \right)\overset{d}{\longrightarrow} Z_{i,\perp}
\end{align*}
Thus, 
\[
    n \tilde T_{proj, \perp}^{(1)} \overset{d^*}{\longrightarrow} \sum_{i\geq 1}  \frac{\mu_i}{(\mu_i+\lambda)^2} Z_{i,\perp}^2
\]

\section{Useful Results on Eigenvalues}
Let $C$ be a second moment operator on $\mathcal{H}_k$:
\[
    C = \mathbb{E}\left(k(X,\cdot) \otimes k(X,\cdot)\right)
\]
where $\otimes$ denotes the tensor product. Its empirical counterpart  $C_n$ is given by
\[
    C_n = \frac{1}{n} \sum_{i=1}^n k(x_i,\cdot) \otimes k(x_i,\cdot)
\]

The following lemma bounds the Hilbert-Schmidt norm of the difference between the empirical and population second moment operators:
\begin{lemma}
    Supposing that $\sup_{x \in \mathcal{X}}k(x,x)\leq M$, with probability greater than $1-e^{-\xi}$, we have 
    \[
        || C_n - C||_{HS} \leq \frac{2M}{\sqrt{n}}\left(1+\sqrt{\frac{\xi}{2}}\right)
    \]
    where $||\cdot||_{HS}$ denotes the Hilbert-Schmidt norm.
\end{lemma}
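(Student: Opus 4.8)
The plan is to view $C_n - C$ as an average of i.i.d.\ zero-mean random elements in the Hilbert space of Hilbert--Schmidt operators on $\mathcal{H}_k$, and then to combine a crude bound on its expected norm with the bounded-differences (McDiarmid) inequality. Write $\zeta_i = k(x_i,\cdot)\otimes k(x_i,\cdot)$, so that $C_n = \frac1n\sum_{i=1}^n \zeta_i$ and $C = \mathbb{E}[\zeta_1]$. Since $\zeta_i$ is the rank-one operator associated with $k(x_i,\cdot)$, its Hilbert--Schmidt norm equals $\|k(x_i,\cdot)\|_{\mathcal{H}_k}^2 = k(x_i,x_i)\le M$ almost surely; in particular each $\zeta_i$ is Bochner-integrable and $\|C\|_{HS}\le M$.

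First I would control the mean. Because the $\zeta_i$ are i.i.d.\ and $C_n - C = \frac1n\sum_{i=1}^n(\zeta_i - C)$, a direct variance computation (the cross terms vanish by independence and zero mean) gives
\[
\mathbb{E}\,\|C_n - C\|_{HS}^2 = \frac1n\,\mathbb{E}\,\|\zeta_1 - C\|_{HS}^2 \le \frac1n\,\mathbb{E}\,\|\zeta_1\|_{HS}^2 \le \frac{M^2}{n},
\]
so by Jensen's inequality $\mathbb{E}\,\|C_n - C\|_{HS}\le M/\sqrt{n}\le 2M/\sqrt{n}$.

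Next I would establish concentration around this mean. Regard $F(x_1,\dots,x_n) = \|C_n - C\|_{HS}$ as a real-valued function of the independent sample. Replacing a single $x_j$ by $x_j'$ changes $C_n$ by $\frac1n(\zeta_j' - \zeta_j)$, whose Hilbert--Schmidt norm is at most $\frac1n(\|\zeta_j'\|_{HS}+\|\zeta_j\|_{HS})\le 2M/n$; by the triangle inequality $F$ then changes by at most $2M/n$. Hence $F$ satisfies the bounded-differences condition with constants $c_j = 2M/n$, and McDiarmid's inequality yields, for every $t>0$,
\[
\mathbb{P}\bigl(F \ge \mathbb{E}F + t\bigr) \le \exp\!\left(-\frac{2t^2}{\sum_{j=1}^n c_j^2}\right) = \exp\!\left(-\frac{n t^2}{2M^2}\right).
\]
Setting the right-hand side equal to $e^{-\xi}$ gives $t = M\sqrt{2\xi/n} = \frac{2M}{\sqrt n}\sqrt{\xi/2}$, and combining with the bound on $\mathbb{E}F$ shows that with probability at least $1 - e^{-\xi}$,
\[
\|C_n - C\|_{HS} \le \frac{2M}{\sqrt n} + \frac{2M}{\sqrt n}\sqrt{\frac{\xi}{2}} = \frac{2M}{\sqrt n}\left(1 + \sqrt{\frac{\xi}{2}}\right).
\]

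The routine parts are the variance identity and the arithmetic in the last display; the stated bound is in fact slightly loose since $\mathbb{E}F\le M/\sqrt n$ already. The only point requiring a little care is the measure-theoretic setup underlying Bochner integration and McDiarmid: namely that the Hilbert--Schmidt operators on $\mathcal{H}_k$ form a separable Hilbert space, that $\zeta_1$ is strongly measurable with $\mathbb{E}\|\zeta_1\|_{HS}<\infty$, and that $F$ is a genuine deterministic function of the i.i.d.\ coordinates $x_1,\dots,x_n$ — all of which follow from the boundedness of $k$ and the separability of $\mathcal{H}_k$. An alternative route would be a Bernstein-type inequality for Hilbert-space-valued sums (e.g.\ Pinelis or Yurinskii), which would also sharpen the mean bound, but the McDiarmid argument already delivers the stated inequality.
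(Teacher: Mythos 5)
Your proof is correct. Note that the paper does not actually prove this lemma; it simply cites Corollary 5 of \cite{shawe2003estimating} and Lemma 1 of \cite{zwald2005convergence}, whose proofs follow essentially the route you take: identify $C_n-C$ as a centered average of bounded random elements of the (separable) Hilbert space of Hilbert--Schmidt operators, bound the expectation of its norm, and concentrate via McDiarmid's bounded-differences inequality. Your only deviation is in the expectation step, where you use the direct variance identity $\mathbb{E}\|C_n-C\|_{HS}^2=\tfrac1n\mathbb{E}\|\zeta_1-C\|_{HS}^2\le M^2/n$ in place of the symmetrization/Rademacher argument of the cited references; this actually yields the sharper mean bound $M/\sqrt n$ rather than $2M/\sqrt n$, so your final inequality is slightly stronger than the stated one. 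All the individual steps check out: $\|\zeta_i\|_{HS}=k(x_i,x_i)\le M$ for the rank-one operators, the bounded-differences constant $c_j=2M/n$, and the calibration $t=\tfrac{2M}{\sqrt n}\sqrt{\xi/2}$ from $\exp(-nt^2/(2M^2))=e^{-\xi}$.
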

\begin{proof}
    See Corollary 5 in \cite{shawe2003estimating}, or Lemma 1 in \cite{zwald2005convergence}.
\end{proof}

As a direct consequence of the above lemma, we have the following result:
\begin{lemma}
    \label{eigenvalue_rootn}
    Let $\sigma_i, \mu_i$ and $\phi_i$ be defined as in the main content. Then $\sigma_i^2/n$ is the $i$-th eigenvalue of the empirical second moment operator $C_n$, and $\mu_i$ is the $i$-th eigenvalue of the population second moment operator $C$.  Furthermore, 
    \[
        \left|\frac{\sigma_i^2}{n}-\mu_i \right|= O_p(1/\sqrt{n})
    \]
\end{lemma}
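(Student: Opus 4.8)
The plan is to combine three ingredients: (i) identifying $\sigma_i^2/n$ and $\mu_i$ as the ordered eigenvalues of $C_n$ and $C$ respectively; (ii) a Weyl-type perturbation inequality for eigenvalues of self-adjoint compact operators; and (iii) the Hilbert--Schmidt concentration bound stated in the preceding lemma.

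First I would record the operator--matrix correspondence. Writing $C_n = \frac{1}{n}\sum_{i=1}^n k(x_i,\cdot)\otimes k(x_i,\cdot)$, the standard ``kernel trick'' identity (that $AA^\top$ and $A^\top A$ share the same nonzero spectrum with multiplicities) shows that the nonzero eigenvalues of $C_n$ coincide with those of $\tfrac1n \boldsymbol{K} = \tfrac1n \boldsymbol{\Phi}\boldsymbol{\Phi}^\top$. Since $\boldsymbol{K}$ has eigenvalues $\{\sigma_i^2\}_{i=1}^n$, the $i$-th largest eigenvalue of $C_n$ equals $\sigma_i^2/n$ for $i \le n$ and is $0$ thereafter. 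For the population operator, the Mercer expansion $C = \sum_{i\geq 1}\mu_i(\sqrt{\mu_i}\phi_i)\otimes(\sqrt{\mu_i}\phi_i)$ derived in the proof of Lemma 1 shows directly that $\{\mu_i\}_{i\geq 1}$ are the eigenvalues of $C$ with orthonormal eigenfunctions $\{\sqrt{\mu_i}\phi_i\}_{i\geq 1}$.

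Next, under Assumption A both $C_n$ and $C$ are self-adjoint, positive, and trace-class (hence Hilbert--Schmidt) operators on $\mathcal{H}_k$. Applying the Weyl (Lidskii) perturbation inequality for compact self-adjoint operators, with eigenvalues ordered decreasingly, gives
\[
    \sup_{i \geq 1} \left| \lambda_i(C_n) - \lambda_i(C) \right| \;\leq\; \| C_n - C \|_{\mathrm{op}} \;\leq\; \| C_n - C \|_{HS},
\]
so that $\left| \sigma_i^2/n - \mu_i \right| \leq \| C_n - C \|_{HS}$ for every $i$. Finally, I would convert the high-probability bound of the preceding lemma into a stochastic-order statement: for any $\delta > 0$, choosing $\xi = \log(1/\delta)$ yields $\mathbb{P}\!\left( \sqrt{n}\, \| C_n - C \|_{HS} > 2M(1 + \sqrt{\xi/2}) \right) \leq \delta$, hence $\| C_n - C \|_{HS} = O_p(n^{-1/2})$ and the claim follows. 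The main obstacle is simply making the operator--matrix eigenvalue identification fully rigorous (handling the ordering convention and the zero eigenvalues for $i > n$) and invoking the correct infinite-dimensional form of the Weyl inequality; the probabilistic step is immediate from the cited concentration result.
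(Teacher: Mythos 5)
Your proposal is correct, but it reaches the key inequality $\lvert \sigma_i^2/n - \mu_i\rvert \le \lVert C_n - C\rVert_{HS}$ by a genuinely different route than the paper. The paper works within its stylized SVD convention $\boldsymbol{\Phi} = \sum_i \boldsymbol{u}_i \sigma_i (\sqrt{\mu_i}\phi_i)$, under which it verifies that the \emph{same} functions $\phi_i$ are eigenfunctions of both $C$ (with eigenvalue $\mu_i$) and $C_n$ (with eigenvalue $\sigma_i^2/n$); the eigenvalue gap is then the Rayleigh quotient $\langle \phi_i, (C_n - C)\phi_i\rangle_{L_2(\mathbb{P})}$, bounded by $\lVert\phi_i\rVert_{L_2}^2\,\lVert C_n - C\rVert_{HS} = \lVert C_n - C\rVert_{HS}$. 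You instead identify the two spectra separately (kernel trick for $C_n$ versus $\tfrac1n\boldsymbol{K}$, Mercer expansion for $C$) and invoke the Weyl--Lidskii perturbation bound $\sup_i\lvert\lambda_i(C_n)-\lambda_i(C)\rvert \le \lVert C_n - C\rVert_{\mathrm{op}} \le \lVert C_n - C\rVert_{HS}$ for decreasingly ordered eigenvalues. Your route buys robustness: it does not require the empirical and population operators to share eigenfunctions, an assumption that is baked into the paper's SVD representation but does not hold for a generic sample (the eigenfunctions of $C_n$ live in the span of $\{k(x_i,\cdot)\}_{i=1}^n$ and generally differ from the $\phi_i$). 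What it costs is the need to be careful about the ordering convention and the trailing zero eigenvalues of $C_n$ for $i > n$, which you correctly flag; the paper's simultaneous-diagonalization argument sidesteps ordering entirely at the price of that stronger structural assumption. The final probabilistic step, converting the exponential concentration bound into $\lVert C_n - C\rVert_{HS} = O_p(n^{-1/2})$, is identical in both arguments.
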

\begin{proof}
    Note that for eigenfunctions $\phi_i$ defined by the integral operator $L_k$, we have 
    \begin{align*}
        C \phi_i & = \mathbb{E}\left(k(X,\cdot) \otimes k(X,\cdot)\right) \phi_i \\
        & = \mathbb{E}\left( \langle k(X,\cdot),\phi_i\rangle_{\mathcal{H}_k} k(X,\cdot)\right) \\
        & = \mathbb{E}\left( \phi_i(X) k(X,\cdot)\right) \\
        & = \mathbb{E}\left( \phi_i(X) \sum_{j\geq 1} \mu_j \phi_j(X) \phi_j\right) \\
        & = \sum_{j\geq 1} \mu_j \mathbb{E}\left( \phi_i(X) \phi_j(X)\right) \phi_j\\
        & =  \mu_i \phi_i 
    \end{align*}

    In addition, note that
    \begin{align*}
        C_n & = \frac{1}{n} \boldsymbol{\Phi}^\top  \boldsymbol{\Phi}\\ 
        & = \frac{1}{n}\left(\sum_{i=1}^n \sigma_i \sqrt{\mu_i} \boldsymbol{u}_i \phi_i\right)^\top \otimes \left(\sum_{j=1}^n \sigma_j \sqrt{\mu_j} \boldsymbol{u}_j \phi_j\right)\\
        & = \frac{1}{n} \sum_{i=1}^n \sum_{j=1}^n \sigma_i \sigma_j \sqrt{\mu_i \mu_j} \boldsymbol{u}_i^\top \boldsymbol{u}_j \phi_i \otimes \phi_j\\
        & = \frac{1}{n} \sum_{i=1}^n \sigma_i^2 \mu_i \boldsymbol{u}_i^\top \boldsymbol{u}_i \phi_i \otimes \phi_i \\
        & = \frac{1}{n} \sum_{i=1}^n \sigma_i^2 \mu_i  \phi_i \otimes \phi_i
    \end{align*}
    and 
    \begin{align*}
        C_n \phi_i & = \frac{1}{n} \sum_{j=1}^n \sigma_j^2 \mu_j  \phi_j \langle \phi_j, \phi_i\rangle_{\mathcal{H}_k} \\
        & = \frac{1}{n} \sigma_i^2 \mu_i  \phi_i \frac{1}{\mu_i} \\
        & = \frac{\sigma_i^2}{n} \phi_i
    \end{align*}
    Thus, the eigenfunctions of $L_k$ are also the eigenfunctions of $C$ and $C_n$. Furthermore, we have 
    \begin{align*}
        & \mu_i = \langle \phi_i, C \phi_i\rangle_{L_2(\mathbb{P})}\\
        &\frac{\sigma_i^2}{n} = \langle \phi_i, C_n \phi_i\rangle_{L_2(\mathbb{P})}
    \end{align*}

    The difference between the empirical and population eigenvalues is bounded by
    \begin{align*}
        \left| \frac{\sigma_i^2}{n} - \mu_i \right| & = \left| \langle \phi_i, (C_n-C) \phi_i\rangle_{L_2(\mathbb{P})}\right|  \\
        & \leq ||\phi_i||^2_{L_2(\mathbb{P})} ||C_n-C||_{HS} \\
        & = ||C_n-C||_{HS} = O_p(1 / \sqrt{n})
    \end{align*}
\end{proof}
\end{document}